\definecolor{LimeGreen}{rgb}{0.2, 0.8, 0.2}
\newtheorem{theorem}{Theorem}
\newtheorem{definition}{Definition}
\newtheorem{lemma}{Lemma}
\newcommand{\weights}{\textit{\textbf{w}}}
\newcommand{\gradients}{\textit{\textbf{g}}}
\newcommand{\aggradient}{\textit{\textbf{G}}}
\newcommand{\signn}{\mathsf{sign}}
\newcommand{\ks}{\mathsf{KS}}
\newcommand{\cosm}{\mathsf{CosM}}
\newcommand{\eucm}{\mathsf{EucM}}
\newcommand{\indm}{\mathsf{IndM}}
\newcommand{\encoding}{\mathsf{ECD}}
\newcommand{\decoding}{\mathsf{DCD}}
\newtheorem*{remark}{Remark}
\begin{document}
%
\title{MUDGUARD: Taming Malicious Majorities in Federated Learning using Privacy-Preserving Byzantine-Robust Clustering}

\author[1]{Rui Wang}
\author[2]{Xingkai Wang}
\author[1]{Huanhuan Chen}
\author[1]{J\'{e}r\'{e}mie Decouchant}
\author[1,3]{Stjepan Picek}
\author[4]{Nikolaos Laoutaris}
\author[1]{\\Kaitai Liang}
\affil[1]{Delft University of Technology}
\affil[2]{Shanghai Jiao Tong University}
\affil[3]{Radboud University}
\affil[4]{IMDEA Networks Institute}
\maketitle

\begin{abstract}
Byzantine-robust Federated Learning (FL) aims to counter malicious clients and train an accurate global model while maintaining an extremely low attack success rate. 
Most existing systems, however, are only robust when most of the clients are honest.
\texttt{FLTrust} (NDSS '21) and \texttt{Zeno++} (ICML '20) do not make such an honest majority assumption but can only be applied to scenarios where the server is provided with an auxiliary dataset used to filter malicious updates. 
\texttt{FLAME} (USENIX '22) and \texttt{EIFFeL} (CCS '22) maintain the semi-honest majority assumption to guarantee robustness and the confidentiality of updates. 
It is therefore currently impossible to ensure Byzantine robustness and confidentiality of updates without assuming a semi-honest majority. To tackle this problem, we propose a novel Byzantine-robust and privacy-preserving FL system, called \texttt{MUDGUARD}, that can operate under malicious minority \emph{or majority} in both the server and client sides. 
Based on DBSCAN, we design a new method for extracting features from model updates via pairwise adjusted cosine similarity to boost the accuracy of the resulting clustering.
To thwart attacks from a malicious majority, we develop a method called \textit{Model Segmentation}, that aggregates together only the updates from within a cluster, sending the corresponding model only to the clients of the corresponding cluster. The fundamental idea is that even if malicious clients are in their majority, their poisoned updates cannot harm benign clients if they are confined only within the malicious cluster. We also leverage multiple cryptographic tools to conduct clustering without sacrificing training correctness and updates confidentiality. 
We present a detailed security proof and empirical evaluation along with a convergence analysis for \texttt{MUDGUARD}.
Our experimental results demonstrate that the accuracy of \texttt{MUDGUARD} is practically close to the FL baseline using FedAvg without attacks ($\approx$0.8\% gap on average). Meanwhile, the attack success rate is around 0\%-5\% even under an adaptive attack tailored to \texttt{MUDGUARD}.
We further optimize our design by using binary secret sharing and polynomial transformation leading to communication overhead and runtime decreases of 67\%-89.17\% and  66.05\%-68.75\%, respectively.

\end{abstract}
\section{Introduction}
\label{sec:intro}
%
%
Thanks to its privacy properties, 
Federated Learning (FL)~\cite{mcmahan2017communication} has been widely applied in real-world applications, e.g., 
prediction of the future oxygen requirements of symptomatic patients with COVID-19~\cite{dayan2021federated}.  
Despite its attractive benefits, FL is vulnerable to Byzantine attacks. 
For example, attackers may choose to deteriorate the testing accuracy of models in an untargeted attack. Alternatively, they might fool models to predict an attack-chosen label without downgrading the testing accuracy in a targeted attack.
Many research works~\cite{fang2020local,Xie2020DBA,DBLP:conf/nips/WangSRVASLP20} have proved the vulnerability of FL via well-designed attack methods, e.g., poisoning training data or manipulating updates. Other studies~\cite{blanchard2017machine,pmlr-v80-yin18a, guerraoui2018hidden,cao2020fltrust,nguyen2022flame,xie2020zeno++,roy2022EIFFeL,baybfed} have been dedicated to strengthening FL assuming that a minority of the clients can be malicious and that the server is honest.
\begin{table*}[t]
\centering
\captionsetup[table*]{labelsep=newline,singlelinecheck=false}
\begin{threeparttable}
\scalebox{0.9}{
\begin{tabular}{cccccccc}
\toprule
\multirow{3}{*}{\begin{tabular}[c]{@{}c@{}}Aggregation\\ strategy\end{tabular}} & \multicolumn{2}{c}{Threat model}                                                                                                       & \multirow{3}{*}{\begin{tabular}[c]{@{}c@{}}Byzantine\\ robustness\end{tabular}} & \multirow{3}{*}{\begin{tabular}[c]{@{}c@{}}Updates\\ confidentiality\end{tabular}} & \multirow{3}{*}{\begin{tabular}[c]{@{}c@{}}No requirement for\\ an auxiliary dataset\end{tabular}} & \multirow{3}{*}{\begin{tabular}[c]{@{}c@{}}Computation\\ complexity\end{tabular}} & \multirow{3}{*}{\begin{tabular}[c]{@{}c@{}}Communication\\ complexity\end{tabular}} \\ \cmidrule(lr){2-3}

                                      & \begin{tabular}[c]{@{}c@{}}Malicious \\ server(s)\end{tabular} & \begin{tabular}[c]{@{}c@{}}Malicious \\ majority clients\end{tabular}   &                                        &                                                                                                     \\ \toprule  
                                      
\texttt{Zeno++}~\cite{xie2020zeno++}& \XSolidBrush             & \CheckmarkBold& \CheckmarkBold & \XSolidBrush                                      & \XSolidBrush   &$O(d)$ & $O(nd)$  \\ \hline
\texttt{FLTrust}~\cite{cao2020fltrust}                               & \XSolidBrush             & \CheckmarkBold& \CheckmarkBold & \XSolidBrush                                      & \XSolidBrush  & $O(n)$           & $O(nd)$                                                                                           \\ \hline
\texttt{FLAME}~\cite{nguyen2022flame}                               & \XSolidBrush        & \XSolidBrush  & \CheckmarkBold  & \CheckmarkBold                                      & \CheckmarkBold & $O(d(S^{2}+n^{2}))$ & $O(d^{2}+Sn^{2})$                                                                                                       \\ \hline
\texttt{EIFFeL}~\cite{roy2022EIFFeL} & \XSolidBrush \tnote{1}       & \XSolidBrush  & \CheckmarkBold  & \CheckmarkBold                                      & \CheckmarkBold   &  $O((n+d)n\log^{2}n\log\log n+md \min(n,m^{2}))$     & $O(n^{2}+md\min (n,m^{2}))$                                                                                               \\ \hline
\texttt{MUDGUARD} (Ours)                     & \CheckmarkBold & \CheckmarkBold& \CheckmarkBold & \CheckmarkBold                                      & \CheckmarkBold   & $O(d+n^{3})$ & $O(S(d+n^{2}))$                                                                                                    \\ \bottomrule
\end{tabular}}
\begin{tablenotes}
\footnotesize
    \item $d$ stands for the dimension of a model. $n, m$, and $S$ represent the number of clients, malicious clients, and servers, respectively. 
    \item[1] EIFFeL considers a malicious server to be one that infers privacy information from other parties, which is equivalent to a semi-honest server in our context.
\end{tablenotes}
\end{threeparttable}
\caption{Comparison of FL systems}
\label{tab:comparison}
\end{table*}
Beyond Byzantine attacks, FL could put clients at high risk of privacy breach~\cite{aono2017privacy,NEURIPS2020_c4ede56b} even if clients' datasets are maintained locally. 
Several studies~\cite{nguyen2022flame,roy2022EIFFeL,truex2019hybrid} have applied secure tools, e.g., Additive Homomorphic Encryption (AHE)~\cite{paillier1999public}, Differential Privacy (DP)~\cite{dwork2008differential,wu2020value}, and Secure Multiparty Computation (MPC), to protect clients' updates\footnote{Note AHE and MPC require onerous computation over ciphertexts so that the computational complexity could naturally increase.}. However, these works only guarantee security when all servers are (semi-)honest and when a minority of the clients are malicious.

{To the best of our knowledge, there does not exist any FL system that is capable of withstanding the presence of a majority of Byzantine clients, as well as malicious servers, while also guaranteeing the confidentiality of updates.}
One may think that existing Byzantine-robust solutions could be trivially extended to address the above challenge. 
However, that is not the case because they either violate privacy preservation requirements or are only effective in the honest majority scenario.
For example, \texttt{FLTrust}~\cite{cao2020fltrust} and \texttt{Zeno++}~\cite{xie2020zeno++} require an auxiliary dataset that is independently and identically distributed (iid) with the clients' training datasets to rectify malicious updates, which evidently violates the clients' privacy.
As for \texttt{FLAME}~\cite{nguyen2022flame}, it clusters updates and considers the smallest cluster as a malicious group, which makes sense in the malicious minority context. 
However, in the case of a malicious majority, it is difficult to assert if a given large/small-size cluster is malicious.
\texttt{EIFFeL}~\cite{roy2022EIFFeL} shows similar infeasibility,
since it combines existing Byzantine-robust methods (e.g., 
\texttt{FLTrust}~\cite{cao2020fltrust}) with secure aggregation~\cite{bonawitz2017practical}.

\noindent{\bf Contributions.} 
We propose a practical and secure Byzantine-robust FL system, \texttt{MUDGUARD}, that defends against malicious entities (i.e., malicious minority for servers and malicious majority for clients) with privacy preservation. 
Specifically, we perform feature extraction on the client updates by calculating the pairwise adjusted cosine similarity. 
These extracted features are taken to the DBSCAN clustering, which calculates the pairwise $L_2$ distance between the inputs and determines clusters based on density. 
Subsequently, a new method, \textit{Model Segmentation}, aggregates the updates based on their assigned cluster labels. The aggregated results are returned to clients in the respective clusters. 
Moreover, the integration of cryptographic tools with the aforementioned calculations allows for the establishment of both Byzantine robustness and privacy preservation within the \texttt{MUDGUARD} system.
In spite of the utilization of clustering for Byzantine robustness in~\cite{nguyen2022flame}, its approach relies on the assumption that only less than half of clients are malicious, since updates are directly incorporated into HDBSCAN clustering. In contrast to~\cite{nguyen2022flame}, our proposed methodology involves preliminary feature extraction of updates before sending them to the clustering algorithm and separate aggregation, accommodating non-iid and malicious majority clients. We stress that~\cite{nguyen2022flame} is vulnerable to the non-iid and majority of malicious clients problems due to the absence of pre-processing of updates and cluster detection. To the best of our knowledge, \texttt{MUDGUARD} is the first Byzantine-robust FL system that is able to defend against malicious majority clients without sacrificing clients' privacy.
We also are the first to provide detailed security and privacy analysis under Universal Composability (UC) in FL. In the literature on centralized learning, a few MPC-based solutions~\cite{lehmkuhl2021muse,koti2021swift,damgaard2019new} were proposed for model training and malicious servers under UC. Note that the key differences between these works and ours are the required computing functionalities for clustering and model training, as well as the required security properties of the solution. The impact of these works on performance and security is unknown when applied to FL.

We summarize the advantages of \texttt{MUDGUARD} on the SOTA FL systems in Table~\ref{tab:comparison}. For a
theoretical and empirical analysis of complexity, please refer to Appendix~\ref{sec:complexityana} and Section~\ref{sec:overheads}.
Our main contributions can be described as follows.\\
$\bullet$ We formulate a new aggregation strategy, \emph{Model Segmentation}, for Byzantine-robust FL to effectively avoid poisoning attacks from a majority of malicious clients without requiring the servers to own an auxiliary dataset. It posits that the utilization of complex algorithms for the detection of malicious updates is not necessary. Instead, it only suggests implementing measures to prevent the co-existence of malicious and semi-honest clients in one aggregation.
\\
$\bullet$ We propose a new method to improve the accuracy of updates clustering under non-iid scenarios.
Instead of using the updates directly for clustering, we first compute the pairwise adjusted cosine similarity of updates (featured by different directions and magnitudes of updates between every two clients). Then we input the results to DBSCAN. 
\\
$\bullet$ We design a secure FL system to be compatible with the cryptographic tools under the malicious context. 
To protect the updates on the server side and guarantee all clients receive correct aggregations, we construct a secure DBSCAN clustering that leverages cryptographic tools and secure aggregation with Homomorphic Hash Function (HHF)~\cite{kim2012device}.
We further optimize the secure {computations on the server side} based on binary secret sharing and polynomial transformation. 
\\
{$\bullet$ We provide a formal security proof for \texttt{MUDGUARD} in the UC framework. This proof captures dynamic security requirements, making \texttt{MUDGUARD} more practical than theoretical in security. \texttt{MUDGUARD} is the first UC-secure type in the research line of privacy-preserving FL.} 
\\
$\bullet$ We implement \texttt{MUDGUARD} and perform evaluations on (F)MNIST and CIFAR-10 to quantify its accuracy under untargeted attacks, the Attack Success Rate (ASR) under targeted attacks or under an adaptive attack tailored to \texttt{MUDGUARD}, as well as its runtime and communication costs. Our experimental results show that the model trained by \texttt{MUDGUARD} maintains comparable testing accuracy with the FL baseline - a ``no-attack-and-protection" FL with only honest parties ($\approx$0.8\% gap on average under untargeted attacks). The ASR under the targeted attacks is as low as 0\%-5\%.
After optimizing the cryptographic computations, the runtime and communication costs are reduced by about 66.05\%-68.75\% and 67\%-89.17\%, respectively. For example, in the training of ResNet-18 using CIFAR-10, our optimization strategy can reduce training time from 95 seconds to 48 seconds and communication costs from 16331\,MB to 5909\,MB, whereas a vanilla FL takes nearly 24 seconds and 758\,MB per round.

\section{Background and Related Work}
\label{relatedwork}

\subsection{Attacks against Federated Learning}

\noindent\textbf{Byzantine Attacks.}
\label{byzantineattack}
Malicious clients may attempt to deteriorate the testing accuracy of the global model by intentionally uploading poisoned updates (i.e., untargeted attacks). Instead of harming the accuracy, the attackers may also intentionally use samples with triggers to launch attacks that make the model misclassify (i.e., targeted attacks). 
In the following, we  review some classical and SOTA untargeted attacks (Gaussian Attack~\cite{fang2020local}, Label Flipping Attack~\cite{biggio2012poisoning}, Krum Attack and Trim Attack~\cite{fang2020local}) and targeted attacks (Backdoor Attack~\cite{bagdasaryan2020backdoor} and Edge-case Attack~\cite{10.5555/3495724.3497072}).\\
$\bullet$ \textbf{Gaussian Attack (GA).} Malicious clients degrade the model accuracy by uploading local updates randomly sampled from a Gaussian distribution. \\
$\bullet$ \textbf{Label Flipping Attack (LFA).} Malicious clients flip the local data labels to generate faulty gradients. In particular, the label of each sample is flipped from $y$ to $L-1-y, y\in [L]$, where $L$ is the total number of classes.\\
$\bullet$ \textbf{Krum and Trim Attacks.} These two untargeted local model poisoning attacks are optimized for \texttt{Krum}~\cite{blanchard2017machine} and the \texttt{Trim-mean/Median}~\cite{pmlr-v80-yin18a} aggregation strategies, respectively. They aim to pull the global model towards the opposite direction of the honest gradient when it is updated. Besides, they also have attack efficacy on \texttt{FedAvg}.\\
$\bullet$ \textbf{Backdoor Attack (BA).} Byzantine clients embed triggers to training samples and change their labels to targeted labels. Their goal is to make the global model misclassify the correct labels to the targeted ones when testing samples with triggers. \\
$\bullet$ \textbf{Edge-case Attack (EA).} 
The attack aims to misclassify seemingly similar inputs that are unlikely to be part of the training or testing data. 
For example, by labeling Ardis\footnote{A dataset extracted from 15,000 Swedish church records written by different priests with various handwriting styles in the nineteenth and twentieth centuries.}
“7” images as “1” and adding them to training data,  EA can easily backdoor an MNIST classifier. 
Similarly, the attack can use a Southwest airplanes dataset labeled “truck” to inject a backdoor into a CIFAR-10 classifier. 
Note that the attack relies on a restricted assumption that an extra dataset resemblance to the training dataset should be given. 

To defend against these attacks, we propose a new approach called \emph{Model Segmentation} in conjunction with feature-extracted DBSCAN. Different from other Byzantine-robust FL, our proposed method generates multiple global models and does not require servers to detect whether a particular group is benign or not. It aggregates only updates with the same cluster labels and returns the aggregations to the corresponding clients. 
This ensures that updates with similar directions and magnitudes are aggregated together (i.e., benign updates are not aggregated with malicious updates), providing a guarantee of Byzantine robustness in the case of malicious majority clients. Different from \texttt{FLAME}, \texttt{MUDGUARD} first uses pairwise adjusted cosine similarity to perform feature extraction on updates, then clusters through DBSCAN. The advantage of this is that it can reduce the false positive rate and be effective in non-iid situations. For detailed explanations, please refer to Section~\ref{sec:byagg} and Appendix~\ref{sec:advofcosm}. The experimental results show that the Byzantine robustness and clustering accuracy of \texttt{MUDGUARD} is better than that of \texttt{FLAME} as will be shown later in Section~\ref{sec:expacc}.

\noindent\textbf{Inference Attacks.}
Although local datasets are not directly revealed during the FL training process, the updates are still subject to privacy leakage if the server is semi-honest or even malicious~\cite{nasr2019comprehensive, luo2021feature, aono2017privacy}. 
For instance, Zhu \emph{et al.}~\cite{aono2017privacy} investigated a method of training data reconstruction via optimizing the $L_2$ distance between uploaded gradients and gradients trained from dummy samples using an L-BFGS solver.
This approach allows servers to easily reconstruct the local datasets and achieves even pixel-wise accuracy for images and token-wise matching accuracy for texts.
To defend against such attacks, we use Secret Sharing (SS)~\cite{shamir1979share} to split client updates into $s$ shares before sending them to servers. 
This way, updates are safeguarded from malicious servers since they do not get sufficient shares to perform update reconstruction.
Even if malicious servers collude with malicious clients, no extra benefit can be achieved towards compromising the update shares belonging to semi-honest clients.

\noindent\textbf{Differential Attack.}
We use DBSCAN in conjunction with \textit{Model Segmentation} to separate benign and malicious updates.
Features extraction with adjusted cosine similarity greatly descends the likelihood of false positives. 
However, we cannot guarantee that the clustering results are 100\% correct. 
A semi-honest client could be clustered together with other malicious clients by a small probability as it could have a smaller similarity from semi-honest clients than that from malicious clients. 
In particular, this case happens more frequently in SignSGD, where only taking signs of gradients to update the model because the algorithm computing adjusted cosine similarity disregards the magnitude of the gradients, resulting in the same effect as calculating cosine similarity.
The above phenomenon triggers the differential attack in the following cases.
Assuming that, at $t$-th round, a semi-honest client is misclustered to a malicious group.
After returning aggregation to the group, the benign updates can be easily revealed by subtracting those malicious updates, and the inference attack can be launched further. 
Another more common case is that a malicious adversary $\adv$ compromises $m$ clients and then makes one of them perform correct operations, i.e., acting as a semi-honest client. 
This malicious-but-act-semi-honest client, being assigned to a semi-honest group, can get benign aggregation from each round and then conduct an inference attack.
In this work, we apply DP to make aggregations obtained by malicious clients statistically indistinguishable from those containing benign updates, guaranteeing that benign updates cannot be easily identified from aggregations.

\subsection{Defenses}
\noindent\textbf{Byzantine-robust Federated Learning.}
Blanchard \emph{et al.}~\cite{blanchard2017machine} proposed \texttt{Krum} to select 1 out of $n$ (local updates) as a global update for each round, where the selected updates should have the smallest $L_2$ distance from others.
Yin \emph{et al.}~\cite{pmlr-v80-yin18a} introduced \texttt{Trim-mean} and \texttt{Median} to resist Byzantine attacks. 
The former uses a coordinate-wise aggregation strategy. 
The server calculates $n-2z$ values for each model parameter as the global model update, wherein the largest and smallest $z$ values are filtered.
Unlike \texttt{FedAvg}~\cite{mcmahan2017communication} computing the weighted average of all parameters, the latter calculates the median of parameters. 
This median serves as an update to the global model.
A major drawback of the aforementioned mechanism is that it is effective only under a majority of honest clients working with an honest server. 
In \texttt{Median}~\cite{pmlr-v80-yin18a}, the median calculated by the server can easily be malicious if malicious clients control a large/overwhelming proportion of updates.
This similarly applies to \texttt{Trim-mean} and \texttt{Krum}.
Cao \emph{et al.}~\cite{cao2020fltrust} proposed \texttt{FLTrust} to protect against a malicious majority at the client side, assuming an honest server holds a small auxiliary dataset. 
The server treats the gradients trained from this small dataset as the root of trust.
By comparing these trusted results with the updates sent by clients, the server can easily rule out malicious updates. 
Under the same assumption, \texttt{Zeno++}~\cite{xie2020zeno++} uses an auxiliary dataset to calculate the loss value of each local model. A client is determined to be honest if the loss value is beyond the preset threshold.
While using an auxiliary dataset could be intriguing, such approaches are not feasible in the context of FL as they violate the fundamental premise of FL in which local datasets are not to be shared with any parties.

\noindent\textbf{Privacy-preserving Federated Learning.}
Truex \emph{et al.}~\cite{truex2019hybrid} proposed a solution enabling clients to use AHE and DP to secure gradients in the semi-honest context (for both clients and the server). 
Since DP noise is applied on gradients, the accuracy of the global model is deteriorated.
In the scenario of honest majority clients with two semi-honest servers, Thien \emph{et al.}~\cite{nguyen2022flame} proposed \texttt{FLAME} using an MPC protocol to protect gradients from the servers and enabling the servers to perform clustering for Byzantine robustness.
Specifically, the clients can securely share their updates to the servers cryptographically, e.g., via secret sharing, and the servers can filter out malicious updates without knowing their concrete values. 
By expressing existing Byzantine-robust solutions (e.g., \texttt{FLTrust}) as arithmetic circuits, \texttt{EIFFeL}~\cite{roy2022EIFFeL} enables secure aggregation of verified updates.
Although \texttt{FLAME} and \texttt{EIFFeL} capture both Byzantine robustness and privacy preservation (i.e., update confidentiality), the accuracy of the global model could become equivalent to a random guess if the proportion of malicious clients is $\ge$50\%.
\section{Problem Formulation}
\label{sec:problem}
\subsection{System Model}
Before proceeding, we provide some assumptions about \texttt{MUDGUARD}.
We assume training is conducted on a dataset $\mathcal{D}$ with $K$ data samples composed with feature space $\mathcal{X}$ (each sample containing all features) and a label set $\mathcal{Y}$. 
Additionally, $\mathcal{D}$ is horizontally partitioned among $n$ clients, indicated as
$
    \mathcal{X}_i=\mathcal{X}_j, \mathcal{Y}_i = \mathcal{Y}_j, \mathcal{I}_i\cap \mathcal{I}_j=\emptyset, \forall \mathcal{D}_i,\mathcal{D}_j, i\neq j,
    \notag
$
where all clients share the same feature space and labels but differ in sample index space $\mathcal{I}$.
FL aims to optimize a loss function:
\begin{math}
    \mathop {\arg\min }\limits_\weights\sum\limits_{i = 1}^n {\frac{{{k_i}}}{K}{\mathcal{L}_i}(} \weights,\mathcal{D}_i ),
    \notag
\end{math}
where ${\mathcal{L}_i}(\cdot)$ and $k_i$ are the loss function and local data size of $i$-th client. 

For reasons that relate to the versatility of the FL system, we also consider $S\ (>2)$ servers to carry out clustering and aggregation (e.g., \texttt{FedAvg}). This allows us to protect from malicious servers  who cannot reconstruct the secrets so long as their number is less than $S/2$ by using cryptographic tools, in which clients send updates in secret-shared format. 
We state that our Byzantine solution can also be executed by only one server. 
In this case, considering privacy, we have to assume that the server must be fully trusted or semi-honest.
Note that our focus here is on the existence of malicious servers. 
In this research line~\cite{7958569,lehmkuhl2021muse}, secure computation is considered among multiple servers.
Due to page limit, we summarize frequently used notations in Table~\ref{tab:notations} (see Appendix~\ref{sec:notation}).

\subsection{Threat Model}
\label{threatmodel}
We mainly consider potential threats incurred by participating clients, servers, and outside adversaries. \\
$\bullet$ \textbf{Attackers' goal.} We assume that two different entities are involved in the training: semi-honest and dynamic malicious parties (including servers and clients), in which both try to infer the privacy (updates) information of others from the received messages. 
Unlike the former, strictly following the designed algorithms, the malicious clients additionally
aim to deteriorate the performance or boost the ASR of the global model through untargeted or targeted poisoning attacks, respectively.\\
$\bullet$ \textbf{Attackers' capabilities.} 
The malicious servers (in a minority proportion) and clients (in a majority proportion) can deviate from the designed protocols. For example, the malicious servers can perform an incorrect aggregation and send it back to the semi-honest group. Moreover, malicious parties (servers and clients) can collude with each other to infer benign aggregations and maximize the efficacy of poisoning attacks (e.g., the Krum attack).
To resist outside adversaries, secret-shared messages are transmitted by private communication channels. 
Other messages are transmitted through public communication channels, where outsiders are allowed to eavesdrop on these channels and try to infer clients' (updates) privacy during the whole training phase.\\
$\bullet$ \textbf{Attackers' knowledge.} We assume that the loss function, data distributions, Byzantine-robust aggregation strategy, and public parameters (including training and security parameters) are revealed to all parties. The malicious clients can exploit this information to design and cast adaptive attacks tailored to \texttt{MUDGUARD}. For privacy reasons, the local updates and datasets of semi-honest clients are not revealed to malicious parties.

\section{MUDGUARD Overview and Design}
\label{sec:MUDGUARD}

\subsection{Overview}
In a traditional FL system, clients send updates to the servers for global model aggregation. 
Considering there exist malicious clients, we should maintain the Byzantine robustness such that malicious updates should be excluded properly.  
To do so, the servers must separate the malicious clients from the semi-honest clients. 
DBSCAN helps the servers to perform clustering. 
Since the main difference between the malicious and the benign is in the direction and magnitude of the updates, we use the adjusted cosine similarity of updates as feature extraction to obtain better clustering accuracy.
Under the (semi-)honest majority, the clustering result directly links to the group size. 
However, for a dynamic malicious majority, we cannot judge if a cluster is malicious only based on its size. 
To address this issue, we propose {\textit{Model Segmentation}}. 
Unlike traditional FL generating ``a unique" global model, our proposed algorithm can yield multiple aggregation results. 
It does not require the servers to know whether a given group is  malicious or not.
Moreover, it only aggregates the updates within the same cluster and then returns the results to the corresponding clients. 
We thus guarantee that the semi-honest will not be aggregated with the malicious. 

As far as fighting against inference attacks is concerned, we should protect the confidentiality of the updates.
For this, we use SS to wrap the updates into a secret shared format in the sense that individual secret shares cannot reveal the underlying information of the updates. 
By doing so, we guarantee that the updates are secured from eavesdroppers, semi-honest, or even malicious servers and further can be used on secure multiplication, comparison, and aggregation via cryptographic tools. 
{However, using SS alone is not sufficient to defend against differential attacks.} 
To thwart the attack, we apply DP to prevent the attackers from extracting benign updates from the semi-honest group.
Since injecting noise brings a negative influence on the accuracy of the training model, we enable clients to perform denoising before wrapping the results into shares. 
Note that this does not invalidate DP due to the post-processing nature~\cite{dwork2008differential}.
%
%
We also consider the malicious minority servers and thus leverage HHF to prevent malicious servers from performing incorrect aggregation, e.g., merging the gradients from two different groups. Due to the page limit, we review machine learning and security tools in Appendix~\ref{sec:background}.

\subsection{Byzantine-robust Aggregation Strategy with Cryptographic Computations}
\label{sec:byagg}
Our workflow of the Byzantine-robust aggregation strategy is as follows. 
Firstly, the clients upload the gradients of the local models to the server side. 
Secondly, servers extract features of gradients and split gradients into multiple clusters via DBSCAN. 
Finally, servers aggregate the gradients in the clusters separately and send aggregations to the corresponding clients.
In the following, we complete the strategy over secure cryptographic computations.

\noindent\textbf{Gradients Upload.} 
The use of the pairwise adjusted cosine similarity matrix ($\cosm$) as a method for extracting features is motivated by the fact that it measures both the difference in directions and magnitudes of updates. This is particularly useful when dealing with clients exhibiting various behaviors and non-iid cases. In this context, $\cosm$ and $L_2$ distance are used as input and the metric of DBSCAN, respectively.
The most direct method of computing $\cosm$ is as follows.
We first subtract updates with their mean values.
For the updates of each client, we compute the pairwise dot product and $L_2$ norm to derive the numerator and denominator, respectively, and
then we can calculate $\cosm$ from the division (of numerator and denominator). 
The above operations become inefficient if the processing is carried out using cryptographic tools. 
The servers are required to perform the computations of shared mean, numerator, denomination, and then division to finally get the shared adjusted cosine similarity matrix $[\![\cosm]\!]$.

\begin{figure}[t]
    \centering    \includegraphics[width=0.4\textwidth]{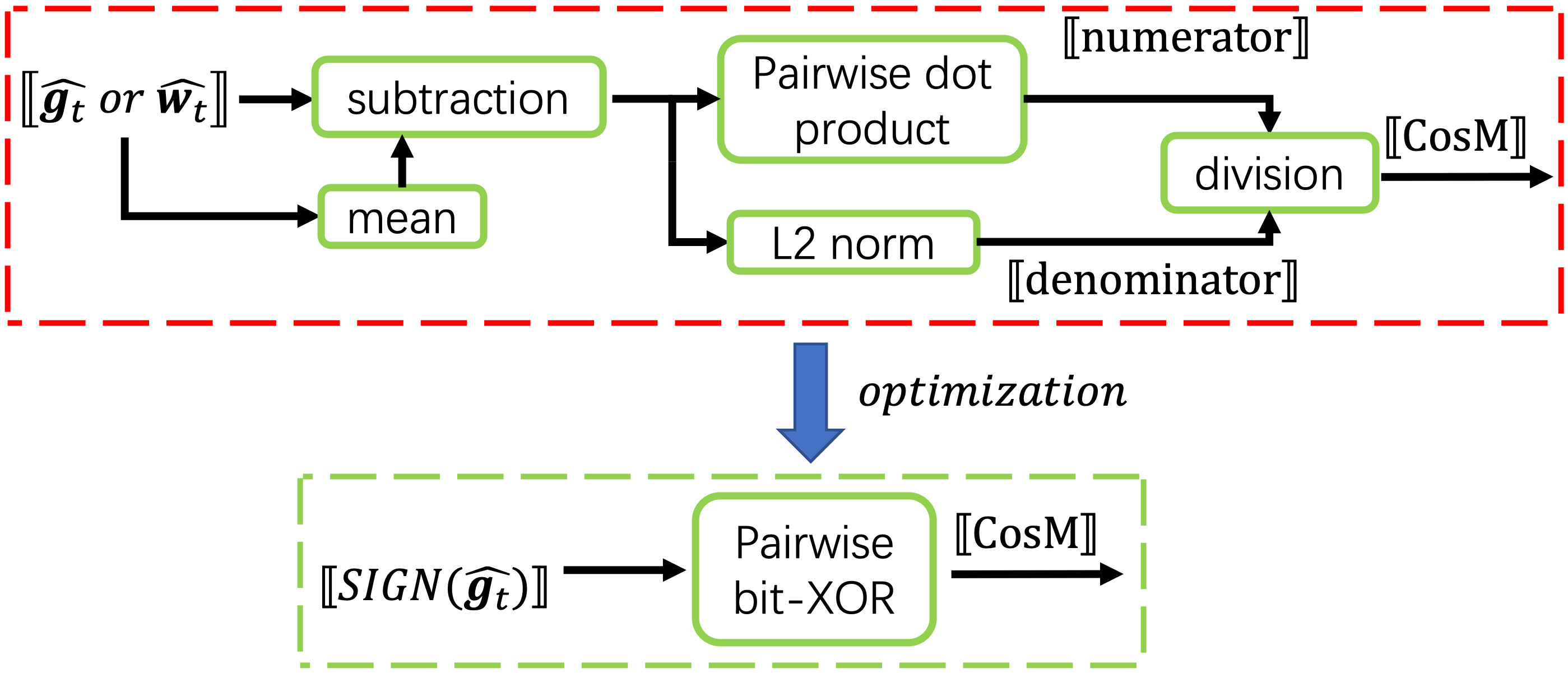}
    \caption{Optimization on the calculation of adjusted cosine similarity. \textcolor{LimeGreen}{\hdashrule[0.5ex]{1cm}{1.3pt}{1mm}}: optimized process: by binary SS, $[\![\cosm]\!]$ is computed via bit-XOR. \textcolor{red}{\hdashrule[0.5ex]{1cm}{1.1pt}{1mm}}: unoptimized process: taking gradients or weights as updates to compute $[\![\cosm]\!]$.}
    \label{fig:optimization}
\end{figure}

To improve efficiency and optimize the above method, we consider the denoised gradients of client $i$ at $t$-th round $\hat{\gradients}_t^i$ as updates and perform binary secret sharing via SignSGD. 
Note that SignSGD only takes the signs of gradients to the update model, resulting in benign and malicious having the same magnitudes. 
Thus, in this case, 
we can easily compute the adjusted cosine similarity via simple bit-wise XORing. 
%
Figure \ref{fig:optimization} depicts this optimization procedure.

Therefore, in each training round (of the optimization), client $i$ derives the gradients using SGD~\cite{bottou2010large}.
Considering the upcoming cryptographic clustering, one needs to compute the signs of gradients  $\signn(\hat{\gradients}_t^i) \in\{-1,+1\}$ as SignSGD~\cite{bernstein2018signsgd} and then encodes to Boolean representation, which is compatible with binary SS and XOR operations. 
Without loss of generality, we implement a widely used encoding/decoding method as
\begin{equation}
    \encoding(\signn(\hat{\gradients}_t^i))=
    \begin{cases}
    1,& \signn(\hat{\gradients}_t^i)=+1\\
    0,& otherwise
    \end{cases}\notag,
\end{equation}
\begin{equation}
  \decoding(\encoding(\signn(\hat{\gradients}_t^i)))=2\encoding(\signn(\hat{\gradients}_t^i))-1.
  \notag
\end{equation}

This method guarantees $\decoding(\encoding(\signn(\hat{\gradients}_t^i)))$ $=\signn(\hat{\gradients}_t^i)$. 
Each client sends the encoded updates to the servers via binary SS and broadcasts the hash results of unencoded updates 
for future verification.
Although SignSGD is lightweight, it brings a negative impact on the accuracy of clustering. 
Section~\ref{sec:expacc} provides a detailed analysis of this impact.
Note that SignSGD has the natural capability of defending against scaling attacks~\cite{bagdasaryan2020backdoor} since it only takes signs as updates and clips the magnitude of gradients. 
An attacker still can easily deteriorate the global model by constructing updates in the opposite direction of benign updates.

\noindent\textbf{Clustering.} As a crucial variable in FL, updates determine the directions and magnitudes of updating in the model, while Byzantine attackers introduce abnormal updates. 
{Traditional clustering approaches directly use updates as inputs, and cosine similarity as metric~\cite{nguyen2022flame}, causing informative redundancy and blurring obvious features, especially in deep models (e.g., ResNet~\cite{He_2016_CVPR}), thereby producing frequent false positives and negatives. 
Since the adjusted cosine similarity measures the difference in directions and magnitudes of updates at the same time, 
we use the pairwise adjusted cosine similarity $\cosm$ as a method for extracting features, i.e., $\cosm$ and $L_2$ distance, used as the input and the main metric of DBSCAN, respectively.
We find that this method is effective in distinguishing the updates because 1) calculating $\cosm$ (feature extraction) is equivalent to reducing the informative redundancy of updates to improve the clustering accuracy. 2) Using it to calculate the pairwise $L_2$ distance can expand the pairwise differences which are not clearly computed by $\cosm$. Thus there is a more clear density difference between honest and malicious updates.
3) 
By subtracting the mean updates, $\cosm$ helps to account for reducing the influence of non-iid, allowing for a more accurate comparison of clients' updates.
4) When the model converges, using cosine similarity is inappropriate because even semi-honest clients have updates in different directions. 
For example, under a Gaussian Attack (GA), the malicious and semi-honest clients become indistinguishable. The accuracy of the global model drops sharply to the level of the initial training. 
We provide concrete examples in Appendix~\ref{sec:advofcosm} to demonstrate the advantages of using $\cosm$.
Note that this advantage is more notable when the cryptographic tools are not optimized.}
Since the proposed optimization uses SignSGD to align the magnitudes of updates, computing cosine similarity on it naturally provides the same effect on clustering as adjusted cosine similarity.

Next, we describe the process of our clustering.
We first extract features (different updates directions with magnitudes) - calculating the $\cosm\leftarrow \overline{\gradients}_t^i\xor \overline{\gradients}_t^j, i,j\in[n]$, 
and then use it as the input for clustering, thereby reducing the rate of false positives.
Commonly, the adjusted cosine similarity of two vectors is obtained by first calculating the dot product of the vectors and then dividing them by the product of their respective $L_2$ norm.
Since encoding updates $\hat{\gradients}_t^i\in\{-1, +1\}^{d}$ to $\overline{\gradients}_t^i\in\{0, +1\}^{d}$ is inspired by~\cite{riazi2019xonn}, we compute XOR of $\overline{\gradients}_t^i$ with $d-2p$ bits, which is equivalent to the result of the dot product of $\hat{\gradients}_t^i$, where $p$ is the counted number of set bits.

After that, the servers collaboratively compute pairwise $L_2$ distance matrix $\eucm$ using secure multiplications ($vector_{ij}\leftarrow\cosm_i-\cosm_j, i,j\in [n]$, $x_{ij}\leftarrow vector_{ij}\cdot vector_{ij}$, and $\eucm_{ij}\leftarrow1+\frac{x_{ij}-1}{2}-\frac{(x_{ij}-1)^2}{8}+\frac{(x_{ij}-1)^3}{16}$) and compare with density $\alpha$ to derive an indication matrix $\indm$, where $\indm = 1$ if $\eucm \leq \alpha$, otherwise $\indm = 0$.
Then by applying the DBSCAN, one can derive cluster labels.
Note that the main focus of this paper is not on optimizing DBSCAN, {we thus do not describe how to retrieve cluster labels from $\indm$}.
We refer interested readers to~\cite{ester1996density}. 

We see that $\alpha$ has a crucial influence on clustering accuracy. According to the conclusion of Bhagoji \emph{et al.}~\cite{bhagoji2019analyzing}, benign and malicious updates follow normal distributions. We formally derive its upper bound of selection (see Theorem~\ref{the:1} and its proof in Appendix~\ref{sec:pot1}).

\begin{theorem}[Density Selection]
\label{the:1}
Suppose the distribution of benign and malicious updates obeys the normal distribution, setting $\alpha < \sqrt{2}$ guarantees that malicious clients conducting a poisoning attack will not be grouped together with benign clients.
\end{theorem}

Note that taking $\alpha<\sqrt{2}$ only allows malicious clients to be identified as noise points, which is not a 100\% guarantee that all semi-honest clients are clustered together. 
Due to the difference in training data, it could happen that the distances between a semi-honest client and other semi-honest clients are greater than $\sqrt{2}$ by chance, resulting in the semi-honest client being identified as a noise point.

\noindent\textbf{Model Segmentation.} To deal with  Byzantine-majority attacks, after obtaining the cluster labels, the servers aggregate the updates within the same cluster and return the results (and their hash values) to the corresponding clients.
In our design, unless a malicious client acts honestly, then it will not be grouped into a cluster with the semi-honest clients, with a relatively large probability. This protects benign clients by keeping poisonous updates from global model updates computed for benign clusters.  
Note here we do not further explore the case in which malicious clients choose to act honestly during training. In fact, if  malicious clients behave semi-honestly, we will obtain a more accurate global model. 
In a sense, this is a bonus for semi-honest clients.  
After all, in the context of \textit{Model Segmentation}, it is not required to identify malicious groups via any verification algorithms, which is a positive thing since it removes the processing burden from the servers as the latter do not need to run verification over ``encrypted-and-noised" updates. 
Compared with the method of \texttt{FLAME}, \emph{Model Segmentation} does not need to assume that most clients in the FL system are (semi-)honest. When integrated with an optimized clustering method, \emph{Model Segmentation} can also enhance the Byzantine robustness of the \texttt{MUDGUARD}.

\noindent\textbf{Resistance against Malicious Servers.} To further prevent malicious servers from casting and sending incorrect aggregation, we use HHF so that every client can verify if the received aggregation is correct. 
Specifically, the proposed method involves a pre-upload step in which client $i$ broadcasts hash values of signs of gradients $\hash_{\delta, \phi}(\signn(\hat{\gradients}t^i))$ to the remaining parties before uploading secret-shared updates to the server side. The servers use additive homomorphism of HHF to calculate hash values of aggregations $\prod_{i\in c_j}\hash_{\delta, \phi}(\signn(\hat{\gradients}t^i))=\hash_{\delta, \phi}(\aggradient_{t}^j)$ based on the clustering results, where $c_j$ refers to a cluster $j$ containing client indexes. After receiving the aggregations $\aggradient_{t}^j$, the clients can calculate $\hash_{\delta, \phi}(\aggradient_{t}^j)$ and $\prod_{i\in c_j}\hash_{\delta, \phi}(\signn(\hat{\gradients}_t^i))$ based on the cluster labels and the received hash values, and subsequently verify whether these two values are equal or not. 
Note that this method considers the possibility of malicious servers that may send incorrect aggregations and $\indm$ to the semi-honest clients. However, since only a minority of servers are assumed to be malicious, the semi-honest clients take the most consistent results as the real results.
\subsection{System Design}
Assume client $i \in [n]$ holds a horizontally partitioned dataset $\mathcal{D}_i$ satisfying $\mathcal{D}=\bigcup\limits_{i=1}^n \mathcal{D}_{i}$, at $t$-th round, \texttt{MUDGUARD} works as follows.  
\begin{framed}
\label{protocolMUDGUARD}
\begin{center}
	\textbf{Protocol MUDGUARD}
\end{center}
\ding{202} \emph{Local Training.} For each local minibatch, each client conducts SGD and takes gradients $\gradients_t^i$ as updates.\\
\ding{203} \emph{Noise Injection.} Each client adds noise into $\gradients_t^i$ to satisfy DP: $\widetilde{\gradients}_t^i\leftarrow \gradients_t^i/\max(1, |\!|\gradients_t^i|\!|_2/\Delta) + \mathcal{N}(0, \Delta^2\sigma^2)$. \\
\ding{204} \emph{Denoising.} To improve accuracy, each client denoises  $\widetilde{\gradients}_t^i$ by $\hat{\gradients}_t^i\leftarrow \ks(\widetilde{\gradients}_t^i, \mathcal{N})\cdot \widetilde{\gradients}_t^i$, where $\ks(\cdot)$ is the KS distance. \\
\ding{205} \emph{SS.} Each client splits $\overline{\gradients}_t^i\leftarrow \encoding(\signn(\hat{\gradients}_t^i))$ into $S$ shares by binary SS with Tiny Oblivious Transfer (OT) and sends the shares to $S$ servers: $[\![\overline{\gradients}_t^i]\!]\stackrel{SS}{\longleftarrow} \overline{\gradients}_t^i$. Besides, by running HHF, all clients broadcast $\hash_{\delta, \phi}(\signn(\hat{\gradients}_t^i))$. \\
\ding{206} \emph{Feature Extraction.} After receiving $n$ shares, each server locally computes a pairwise adjusted cosine similarity matrix by bit-XOR: $[\![\cosm_{ij}]\!]\leftarrow [\![\overline{\gradients}_t^i]\!]\xor [\![\overline{\gradients}_t^j]\!]$, $i, j\in [n]$. To further compute $L_2$ distance, all servers convert Boolean shares to arithmetic shares by correlated randomness. \\
\ding{207} \emph{$L_2$ Distance Computation.} After conversion, deriving multiplicative SS, each server  uses HE or OT to produce a triple, satisfying further multiplications. 
Therefore, each server takes $[\![\cosm]\!]$ as the inputs of DBSCAN and then computes $[\![\eucm]\!]$ by (a) pairwise subtraction: $[\![vector_{ij}]\!]\leftarrow[\![\cosm_i]\!]-[\![\cosm_j]\!], i,j\in [n]$, (b) dot product: $[\![x_{ij}]\!]\leftarrow[\![vector_{ij}]\!]\cdot [\![vector_{ij}]\!]$, and (c) approximated square root: $[\![\eucm_{ij}]\!]\leftarrow1+\frac{[\![x_{ij}]\!]-1}{2}-\frac{([\![x_{ij}]\!]-1)^2}{8}+\frac{([\![x_{ij}]\!]-1)^3}{16}.$ \\
\ding{208} \emph{Element-wise Comparison.} By comparing each element of $\eucm$ with density parameter $\alpha$, each server can derive shares of indicator matrix $[\![\indm]\!]$, $\{\indm_{ij}=1\mid \eucm_{ij} \leq \alpha  \}.$ \\
\ding{209} \emph{Reconstruction.} All servers run a reconstruction algorithm to reveal $\indm$: $\indm\stackrel{\mathsf{recon}}{\longleftarrow}[\![\indm]\!]$ and broadcast it to the client side. 
By DBSCAN, one can derive cluster labels. Based on these labels, the clients learn about clustering information to perform aggregation verification in step \ding{211}.\\
\ding{210} \emph{Model Segmentation.} The servers aggregate shares (based on the number of labels $c$) with the same labels after decoding: $\{[\![\aggradient_{t}^j]\!]\leftarrow\sum_{i\in c_j}\decoding([\![\overline{\gradients}_{t}^i]\!])\mid c_j=\{i\mid i\in [n]\}, j\in [c],\}$ and send to the corresponding clients. \\
\ding{211} \emph{Aggregation Verification.} After reconstructing aggregation, according to cluster labels, each client verifies aggregation by $\prod_{i\in c_j}\hash_{\delta, \phi}(\signn(\hat{\gradients}_t^i))\stackrel{?}{=}\hash_{\delta, \phi}(\aggradient_{t}^j).$ {If the equation holds, clients accept the aggregation results; otherwise, reject and abort.}
\end{framed}

We note that the corresponding implementation-level algorithms of \texttt{MUDGUARD} are given in Appendix~\ref{sec:IA} and will be used in the experiments.
\subsection{Privacy Preservation Guarantee}

\noindent\textbf{Differential attack resistance.} As shown in step \ding{203} and \ding{204} of Protocol~\ref{protocolMUDGUARD},
each client $i$ can add differentially private noise into gradients and perform denoising later. 
Like~\cite{nasr2020improving}, we use KS distance (of noised gradients and noise distribution) as a metric to denoise by multiplying noised gradients.
Differentially private updates are first denoised, taken signs, and encoded before being secretly shared. 

\noindent\textbf{Binary SS.} Unlike arithmetic SS in domain $\mathbb{Z}_{2^{b}}$, binary SS works with $b=1$, where $b$ is the bit length. 
To resist malicious clients deviating from SS specifications, we apply OT in our design (step \ding{205} of Protocol~\ref{protocolMUDGUARD}). 
However, this brings a considerable increase in communication costs. Furukawa \emph{et al.}~\cite{furukawa2017high} used TinyOT to generalize multi-party shares with communication complexity linear in the security parameter.
We follow this method 
so that each client $i$ binary shares its updates to $S$ servers. 
The SS scheme guarantees that a malicious server cannot reconstruct the secret even if colluding with the rest of the servers under a malicious minority setting. 

\noindent\textbf{XOR.} In step \ding{206} of Protocol~\ref{protocolMUDGUARD}, after receiving shares, each server can compute the pairwise dot product independently. 
Assume a server $s$ has $[\![\overline{\gradients}_t^i]\!]_s$, where $s\in [S]$. 
Since $\overline{\gradients}_t^i=[\![\overline{\gradients}_t^i]\!]_1\xor\cdots\xor [\![\overline{\gradients}_t^i]\!]_S$, we have $\overline{\gradients}_t^i\xor\overline{\gradients}_t^j=[\![\overline{\gradients}_t^i]\!]_1\xor[\![\overline{\gradients}_t^j]\!]_1\cdots\xor [\![\overline{\gradients}_t^i]\!]_S\xor[\![\overline{\gradients}_t^j]\!]_S, \forall i,j \in n$. 
Therefore, in this case, each server $s$ can compute $[\![dot\_product]\!]$ by $\{[\![dot\_product_{ij}]\!]_s=[\![\overline{\gradients}_t^i]\!]_s\xor[\![\overline{\gradients}_t^j]\!]_s\mid \forall i,j \in [n]\}$ locally and without interactions with other servers.
By multiplying a constant, one can derive shares of adjusted cosine similarity. Using binary SS can help us to save element multiplication and division operations.

\noindent\textbf{Bit to Arithmetic Conversion.} The servers also need to convert the shares in $\mathbb{Z}_{2}$ to arithmetic shares ($\mathbb{Z}_{2^{b}}$) to support the subsequent linear operations and multiplications. 
{We implement the conversion by following~\cite{cryptoeprint:2019:207}}. 
A common method is to use correlated randomness in these two domains (doubly-authenticated bits) and extend them. 
After this, the servers can derive arithmetic shares of the dot product.
Note some works~\cite{10.1145/3243734.3243854, mohassel2018aby3} leverage straightforward transformation under the cases with only semi-honest parties. %

\noindent\textbf{Multiplication.} As shown in step \ding{207} of Protocol~\ref{protocolMUDGUARD}, multiplications are necessary in DBSCAN. 
If we consider the semi-honest majority setting on the server side, the replicated SS and SSS can be applied here since both satisfy the multiplicative property, in which two shares multiplications can be computed locally without any interaction. 
For the existence of malicious servers, we consider the protocol proposed by Lindell \emph{et al.}~\cite{lindell2017framework}, modifying SPDZ~\cite{10.1007/978-3-642-32009-5_38} to the setting of multiplicative secret sharing modulo a prime (including replicated SS and SSS). 
Furukawa \emph{et al.}~\cite{furukawa2017high} also proposed a similar variant for TinyOT. 
Both are based on the observation that the optimistic triple production using HE or OT can be replaced by producing a triple using multiplicative secret sharing instead.

\noindent\textbf{Secure Comparison with Density $\alpha$.} With arithmetic shares, the comparison (step \ding{208} of Protocol~\ref{protocolMUDGUARD}) requires extra correlated randomness, especially secret random bits in the larger domains.
For the semi-honest majority servers, we follow the protocol~\cite{cryptoeprint:2020:1330} with $\mathbb{Z}_{2}$ to implement comparison efficiently.
Under the malicious minority, we should check if the output is actually a bit.
We follow~\cite{damgaard2013practical} to multiply a secret random bit with comparison output and then reconstruct it. If the reconstructed value is a bit, it proves that the malicious servers do not deviate from the comparison protocol. 
\subsection{Security Analysis}
\texttt{MUDGUARD} achieves security properties under \textit{malicious majority} clients and \textit{malicious minority} servers. 
Malicious parties may arbitrarily deviate from the protocol, while the rest of the parties are semi-honest, trying to infer information as much as possible (but following the protocol). 
We assume malicious clients and servers may collude with each other.  

A secure FL system satisfies correctness, privacy, and soundness. The latter two are security requirements. 
Informally, the requirements are: 
(1) the adversary learns nothing but the differentially private output; 
(2) the adversary cannot provide an invalid result accepted by a benign client.  
We first define the security in the UC framework~\cite{canetti2001universally}.
This allows the system to remain secure and capable to be arbitrarily combined with other UC secure instances. 
In such a framework, security is defined by a well-designed ideal functionality that captures several properties simultaneously, including correctness, privacy, and soundness. 
Specifically, Figure~\ref{fMUDGUARD} (Appendix~\ref{sec:analysis}) shows our ideal functionality $ \mathcal{F}_\textsf{MUDGUARD}$. 
The definition captures all required security properties except DP and soundness against malicious clients. Appendix~\ref{sec:analysis} will discuss the remaining. 

We prove the security in a $ \mathcal{F} $-hybrid model. 
Our proof adopts three existing ideal functionalities: $ \mathcal{F}_\textsf{RO} $, $ \mathcal{F}_\textsf{SS} $ and $ \mathcal{F}_\textsf{B2A} $. 
The first is for the random oracle model, and the latter two are the ideal functionalities of secret sharing \cite{furukawa2017high} and bit-to-arithmetic conversion \cite{cryptoeprint:2019:207} respectively. 
We have the following theorem: 

\begin{theorem}
	\texttt{MUDGUARD} securely realizes $ \mathcal{F}_\textsf{\texttt{MUDGUARD}} $ in the ($ \mathcal{F}_\textsf{RO} $, $ \mathcal{F}_\textsf{SS} $, $ \mathcal{F}_\textsf{B2A} $)-hybrid model, against malicious-majority clients and malicious-minority servers, considering arbitrary collusions between malicious parties. 
\end{theorem}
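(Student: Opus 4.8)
The plan is to establish security through the standard UC simulation paradigm~\cite{canetti2001universally}. For any PPT real-world adversary $\mathcal{A}$ that corrupts a malicious majority of clients together with a malicious minority ($<S/2$) of servers, and may collude across the two sides, I would exhibit an ideal-world simulator $\mathcal{S}$ interacting only with $\mathcal{F}_\textsf{BRIEF}$ such that no environment $\mathcal{Z}$ can tell $\realview$ (the real protocol run with $\mathcal{A}$) apart from $\idealview$ (the ideal functionality run with $\mathcal{S}$). Because the protocol is analyzed in the $(\mathcal{F}_\textsf{RO},\mathcal{F}_\textsf{SS},\mathcal{F}_\textsf{B2A})$-hybrid model and each of $\mathcal{F}_\textsf{SS}$ and $\mathcal{F}_\textsf{B2A}$ is already UC-secure, the UC composition theorem lets $\mathcal{S}$ call their straight-line simulators as black boxes, so it suffices to simulate the remaining protocol messages. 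Since $\mathcal{F}_\textsf{BRIEF}$ by design excludes differential privacy and soundness against malicious clients, the simulation need only reproduce the leakage and abort behavior of $\mathcal{F}_\textsf{BRIEF}$; DP is separately inherited from steps \ding{203}--\ding{204} by the post-processing property, and client-soundness is argued outside this functionality.

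Next I would construct $\mathcal{S}$ by walking through the eleven protocol steps. At the sharing step (\ding{205}), $\mathcal{S}$ uses $\mathcal{F}_\textsf{SS}$ to \emph{extract} the effective encoded inputs $\overline{\gradients}_t^i$ of every corrupted client and forwards them to $\mathcal{F}_\textsf{BRIEF}$; for an honest client it reveals to $\mathcal{A}$ only the shares held by corrupt servers, which by the $(<S/2)$ threshold property are distributed independently of the secret and may therefore be sampled uniformly. The feature-extraction XOR (\ding{206}) is deterministic on shares, the bit-to-arithmetic step is delegated to $\mathcal{F}_\textsf{B2A}$, and the Euclidean-distance multiplications (\ding{207}), the comparison against $\alpha$ (\ding{208}), and the reconstruction of $\indm$ (\ding{209}) are simulated by opening values that $\mathcal{S}$ makes consistent with the matrix $\mathcal{F}_\textsf{BRIEF}$ returns, using the authenticated-bit and authenticated-triple checks of~\cite{damgaard2013practical,furukawa2017high} to detect any deviation by corrupted servers. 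The homomorphic-hash broadcast (\ding{205}) and the final verification (\ding{211}) are simulated through $\mathcal{F}_\textsf{RO}$, programming the oracle so that the broadcast digests stay consistent with the (unknown) honest gradients.

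The soundness guarantee that no corrupt server can make an honest client accept a wrong aggregation is enforced at step \ding{211}: by the additive homomorphism and collision resistance of the hash, any $\aggradient_t^j$ differing from the true cluster sum fails $\prod \mathsf{HHF}(\signn(\hat{\gradients}_t^i))\stackrel{?}{=}\mathsf{HHF}(\aggradient_t^j)$ except with negligible probability, so $\mathcal{S}$ triggers an ideal abort exactly when the real check fails, matching the two worlds' abort patterns. I would then close the argument with a hybrid sequence from \hybrid{0} (real execution) to \hybrid{k} (ideal execution), replacing one simulated component per step and bounding each consecutive gap by the security of the corresponding primitive --- SS indistinguishability, OT/HE hiding for the triples, correctness of $\mathcal{F}_\textsf{B2A}$, and random-oracle programming --- so that the cumulative distinguishing advantage is negligible.

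The hard part will be the full collusion case, where the minority of corrupt servers pool their state with the majority of corrupt clients. Here $\mathcal{S}$ must simultaneously keep every honest client's gradient shares hidden even though $\mathcal{A}$ sees both the corrupt clients' inputs and the corrupt servers' shares, and still extract a well-defined effective input for $\mathcal{F}_\textsf{BRIEF}$ despite corrupted servers potentially injecting inconsistent shares into the linear computation. The delicate step is arguing that the malicious-security consistency checks force $\mathcal{A}$ to commit to fixed inputs before the honest outputs are determined, so that $\mathcal{S}$'s extraction is unambiguous and its abort decision coincides with the real protocol's; once this is in place, the remaining indistinguishability reduces routinely to the invoked functionalities and the collision resistance of the homomorphic hash.
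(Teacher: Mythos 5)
Your proposal follows essentially the same route as the paper's own proof: a UC simulator in the $(\mathcal{F}_\textsf{RO},\mathcal{F}_\textsf{SS},\mathcal{F}_\textsf{B2A})$-hybrid model that samples honest parties' shares uniformly (justified by the $<S/2$ threshold), reconstructs server-side intermediate shares consistently with the indicator matrix supplied by $\mathcal{F}_\textsf{BRIEF}$, relies on the privacy of the secure comparison for the $\eucm$-to-$\indm$ link, and treats differential privacy and client-soundness as properties outside the functionality. The only differences are organizational and in level of detail --- you walk through the protocol step by step with an explicit hybrid sequence, input extraction, and abort-matching via the homomorphic hash, whereas the paper organizes its sketch by corruption pattern (clients only, servers only, both) and leaves those points implicit --- so your plan is consistent with, and if anything slightly more careful than, the published argument.
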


The remaining two properties are related to data output, which is not concerned with the cryptographic view. 
Specifically, DP is provided by adding noise (Appendix~\ref{sec:analysis}), and soundness against malicious clients is provided by \textit{Model Segmentation}. 

Note our well-designed functionality captures as many attacks as possible. 
In other words, soundness against malicious clients and DP cannot be achieved under the UC model. 
On the one hand, recognizing malicious clients is quite a \textit{subjective} task since they do not deviate from the protocol in cryptographic ways. 
There might be a benign client providing similar inputs that seem to be malicious, with a non-negligible possibility. 
On the other hand, the output with DP can be obtained by the adversary in our definition. 
Hence differential attacks should not be captured in the functionality. 
\subsection{Adaptive attack}
Recall that in Section~\ref{threatmodel}, a Byzantine-robust aggregation strategy is available to attackers. Malicious clients can adapt their attacks to nullify the robustness of the system. Note that untargeted attacks (e.g., Krum and Trim attacks) solve an optimization problem to maximize the efficacy of attacks, meaning the strategies of untargeted attacks are already optimal. Therefore, we design and evaluate an adaptive backdoor attack for \texttt{MUDGUARD}. Specifically, the attack is formulated by adding a sub-task to the attack optimization problem. Given the fact that \texttt{MUDGUARD} achieves Byzantine-robustness by aggregating only benign updates as much as possible based on adjusted cosine distance, the sub-task of this attack is to try to minimize the adjusted cosine distance of malicious updates from that of benign updates. Formally, a malicious client $i$ first derives benign and malicious updates ($\weights_t^{i}$ and $\weights_t^{i'}$) with owned unpoisoned and poisoned data ($\mathcal{D}_{i}$ and $\mathcal{D}_{i}^{'}$ ), respectively, at the $t$-th round:
\begin{equation}
    \weights_t^{i} \leftarrow \weights_{t-1}- \eta\nabla\mathcal{L}(\weights_{t-1}, \mathcal{D}_i),     \weights_t^{i'} \leftarrow \weights_{t-1}- \eta\nabla\mathcal{L}(\weights_{t-1}, \mathcal{D}_{i}^{'}).
\notag
\end{equation}
Then, client $i$ solves the optimization problem:
\begin{equation}
    \mathop {\arg\min }\limits_{\weights_t^{i'}} \lambda\mathcal{L}_i(\weights_{t-1},\mathcal{D}_{i}^{'})+(1-\lambda)\lVert \weights_t^{i} - \weights_t^{i'} \rVert_{COS},
    \notag
\end{equation}
where $\lVert \cdot \rVert_{COS}$ refers to adjusted cosine distance. $\lambda\in(0,1]$ is a hyperparameter to balance the efficacy and stealthiness of an attack. A smaller $\lambda$ makes the attack harder to be filtered, but its efficacy is less  to be upheld. Section~\ref{sec:expacc} gives a detailed analysis.
\section{Evaluation}
\label{sec:experiments}



We use MNIST~\cite{lecun-mnisthandwrittendigit-2010} and FMNIST~\cite{xiao2017fashion} to train CNN same with~\cite{cao2020fltrust} and CIFAR-10~\cite{krizhevsky2009learning} to train ResNet-18~\cite{He_2016_CVPR}. Please refer to Appendix~\ref{sec:otherexpsetup} for a detailed description.
To conduct a fair comparison against existing Byzantine-robust methods, we follow the training settings of~\cite{cao2020fltrust,nguyen2022flame}.
Based on the number of classes $L$, the clients are divided into $L$ groups. 
Non-iid degree $q$ determines the heterogeneity of data distribution. 
{For example, if we use MNIST with 10 classes and $q=0.5$, the samples with label} ``0" are allocated to the group ``0" with probability 0.5 (but to other groups with probability $\frac{1-0.5}{10-1}$).\\

\noindent\textbf{Byzantine-attacks settings.} We consider six poisoning attacks aforementioned in Section~\ref{byzantineattack}. For GA, Krum, and Trim attacks, we adopt the default settings in~\cite{fang2020local}.
To achieve a fair comparison, we follow the settings of BA~\cite{nguyen2022flame}, where a white rectangle
with size 6×6 is seen as a trigger embedded on the left side of the image.
The Poisoning Data Rate (PDR) is also aligned with the settings of~\cite{nguyen2022flame}.
Wang \emph{et al.}~\cite{DBLP:conf/nips/WangSRVASLP20} did not provide a dataset for FMNIST. In the experiments, we do not consider
launching EA to FMNIST. To balance the main and attack tasks, we set $\lambda$ as 0.5.
\begin{table}[t]
\centering
\scalebox{0.9}{
\begin{tabular}{@{}c|ccc@{}}
\toprule
Dataset                   & MNIST         & FMNIST                                  & CIFAR-10                          \\ \midrule
\#clients                          & \multicolumn{3}{c}{{[}10, 100, 500{]}}                                                                                    \\
clients subsampling rate           & \multicolumn{3}{c}{1}                                                                                                     \\
non-iid degree                     & \multicolumn{3}{c}{{[}0.1, 0.5, 0.9{]}}                                                                                   \\
\#local epochs                     & \multicolumn{3}{c}{1}                                                                                                     \\
\#global epochs                    & \multicolumn{2}{c}{250}  & \multicolumn{1}{c}{1200}                                                                      \\
learning rate                      & \multicolumn{1}{c}{0.01} & \multicolumn{2}{c}{\begin{tabular}[c]{@{}c@{}}0.01 with $1e^{-5}$\\ weight decay\end{tabular}} \\
proportion of malicious clients $\xi$ & \multicolumn{3}{c}{{[}0.1, 0.6, 0.9{]}}                                                                                   \\
$\lambda$ & \multicolumn{3}{c}{0.5}
\\  $\alpha$ & \multicolumn{3}{c}{$1$} \\
\#edge-case                        & \multicolumn{1}{c}{300}  & \multicolumn{1}{c}{/}                           & 300                                        \\
DP’s $(\epsilon, \delta, \Delta)$                              & \multicolumn{3}{c}{(5, $1e^{-5}$, 5)} \\ \bottomrule                                                                                        
\end{tabular}}
\caption{FL system settings. The parameters' range and default values are in the form of ``[min, default, max]".}
\label{tab:flsetting}
\end{table}
\noindent\textbf{FL system settings.}
Table~\ref{tab:flsetting} gives the detailed parameters.
We follow the parameters setting of~\cite{mcmahan2017communication,bernstein2018signsgd}, set the minibatch size to 128, and use the {Adam optimizer~\cite{kingma2014adam}} for training LeNet and ResNet-18.
In the experiments, all the clients participate in the training from beginning to end. 
By default, we assume that there exist 100 clients splitting the training data with non-iid degree q=0.5; the proportion of malicious clients is set to $\xi$=0.6 (i.e., 60 out of 100 clients are malicious).
The testing accuracy is computed over the whole testing dataset.
{We inject triggers into the whole testing dataset to inspect the ASR of BA.
The Ardis and Southwest airplanes datasets with changed labels are used to inspect the ASR of EA in MNIST and CIFAR-10, respectively.
Note that the main focus of the experiments is to examine the complexity of \texttt{MUDGUARD} and to check if \texttt{MUDGUARD} can effectively fight against Byzantine attacks. 
Thus, we do not further present details for client selection during each round of training, which will not affect the test of Byzantine robustness.}
%
{In the clustering and robustness comparison, we define \texttt{weights-MUDGUARD} as a variant of \texttt{MUDGUARD}, which uses SGD to update models} and takes pairwise adjusted cosine similarity of updates as inputs and $L_2$ norm as clustering metric, without applying any security tools.

\begin{table*}[t]
\centering
\scalebox{1}{
\begin{tabular}{@{}cc|cccccccc@{}}
\toprule
\multicolumn{2}{c|}{Attacks}                    & baseline & GA    & LFA   & Krum  & Trim  & AA          & BA          & EA          \\ \midrule
\multicolumn{1}{c|}{\multirow{5}{*}{$\xi$}} & 0.5 & 0.975     & 0.973 & 0.967 & 0.955 & 0.965 & 0.979 / 0     & 0.972 / 0.002 & 0.966 / 0.03  \\
\multicolumn{1}{c|}{}                    & 0.6 & 0.977     & 0.975 & 0.974 & 0.952 & 0.96  & 0.979 / 0.002 & 0.968 / 0.001 & 0.968 / 0.023 \\
\multicolumn{1}{c|}{}                    & 0.7 & 0.975     & 0.971 & 0.971 & 0.956 & 0.953 & 0.977 / 0     & 0.963 / 0.002 & 0.953 / 0.07  \\
\multicolumn{1}{c|}{}                    & 0.8 & 0.969     & 0.968 & 0.964 & 0.942 & 0.944 & 0.976 / 0.003 & 0.961 / 0.005 & 0.965 / 0.085 \\
\multicolumn{1}{c|}{}                    & 0.9 & 0.969     & 0.968 & 0.968 & 0.943 & 0.937 & 0.971 / 0.005 & 0.963 / 0.002 & 0.963 / 0.093 \\ \midrule
\multicolumn{1}{c|}{\multirow{5}{*}{$n$}}  & 10  & 0.978     & 0.978 & 0.965 & 0.961 & 0.962 & 0.976 / 0     & 0.976 / 0     & 0.975 / 0     \\
\multicolumn{1}{c|}{}                    & 50  & 0.975     & 0.97  & 0.958 & 0.96  & 0.949 & 0.975 / 0     & 0.975 / 0     & 0.967 / 0.02  \\
\multicolumn{1}{c|}{}                    & 100 & 0.977     & 0.975 & 0.974 & 0.952 & 0.96  & 0.979 / 0.002 & 0.968 / 0.001 & 0.968 / 0.023 \\
\multicolumn{1}{c|}{}                    & 200 & 0.962     & 0.962 & 0.948 & 0.951 & 0.943 & 0.963 / 0.002 & 0.961 / 0     & 0.962 / 0.042 \\
\multicolumn{1}{c|}{}                    & 500 & 0.763     & 0.762 & 0.72  & 0.722 & 0.735 & 0.738 / 0.004 & 0.762 / 0.001 & 0.756 / 0.007 \\ \midrule
\multicolumn{1}{c|}{\multirow{5}{*}{$q$}}  & 0.1 & 0.976     & 0.975 & 0.978 & 0.975 & 0.975 & 0.978 / 0     & 0.975 / 0.003 & 0.976 / 0.031 \\
\multicolumn{1}{c|}{}                    & 0.3 & 0.974     & 0.973 & 0.974 & 0.966 & 0.972 & 0.98 / 0      & 0.978 / 0.002 & 0.978 / 0.026 \\
\multicolumn{1}{c|}{}                    & 0.5 & 0.977     & 0.975 & 0.974 & 0.952 & 0.96  & 0.979 / 0.002 & 0.968 / 0.001 & 0.968 / 0.023 \\
\multicolumn{1}{c|}{}                    & 0.7 & 0.898     & 0.894 & 0.872 & 0.887 & 0.906 & 0.89 / 0.013  & 0.876 / 0.011 & 0.883 / 0.039 \\
\multicolumn{1}{c|}{}                    & 0.9 & 0.709     & 0.682 & 0.705 & 0.694 & 0.689 & 0.689 / 0.017 & 0.707 / 0.025 & 0.72 / 0.06   \\ \bottomrule
\end{tabular}}
\caption{Comparison of accuracy with baseline and ASR by an increasing proportion of malicious clients ($\xi\geq 0.5$), \#clients $n$ and non-iid degree $q$, where MNIST is used. The results under targeted attacks are in the form of “testing accuracy / ASR".}
\label{tab:mnist}
\end{table*}

\begin{figure}[t]
    \centering
    \begin{subfigure}[b]{0.235\textwidth}
        \centering
        \includegraphics[width=1.04\textwidth]{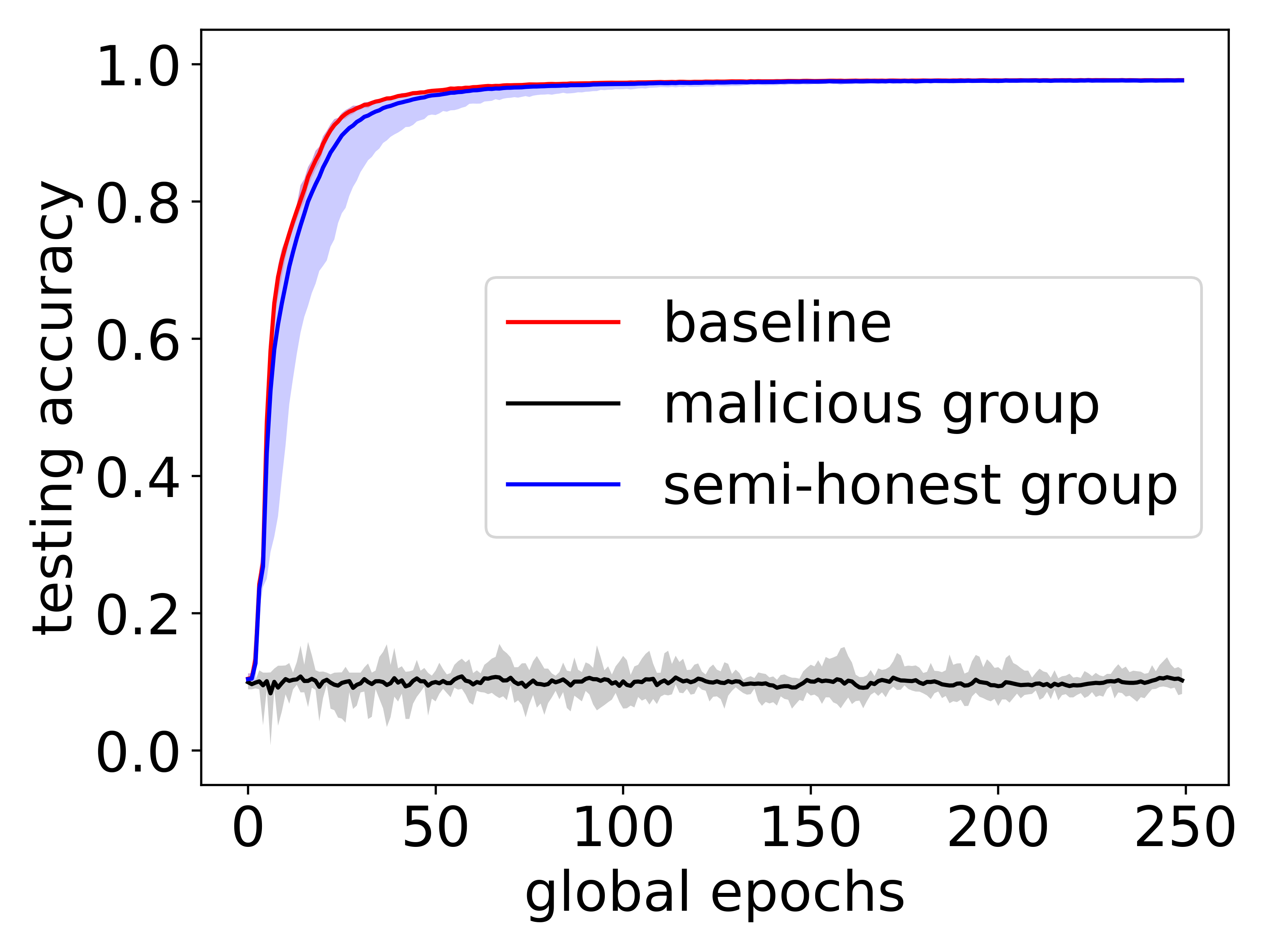}
        \caption{Gaussian Attack}
    \end{subfigure}
    \begin{subfigure}[b]{0.235\textwidth}
        \centering
        \includegraphics[width=1.04\textwidth]{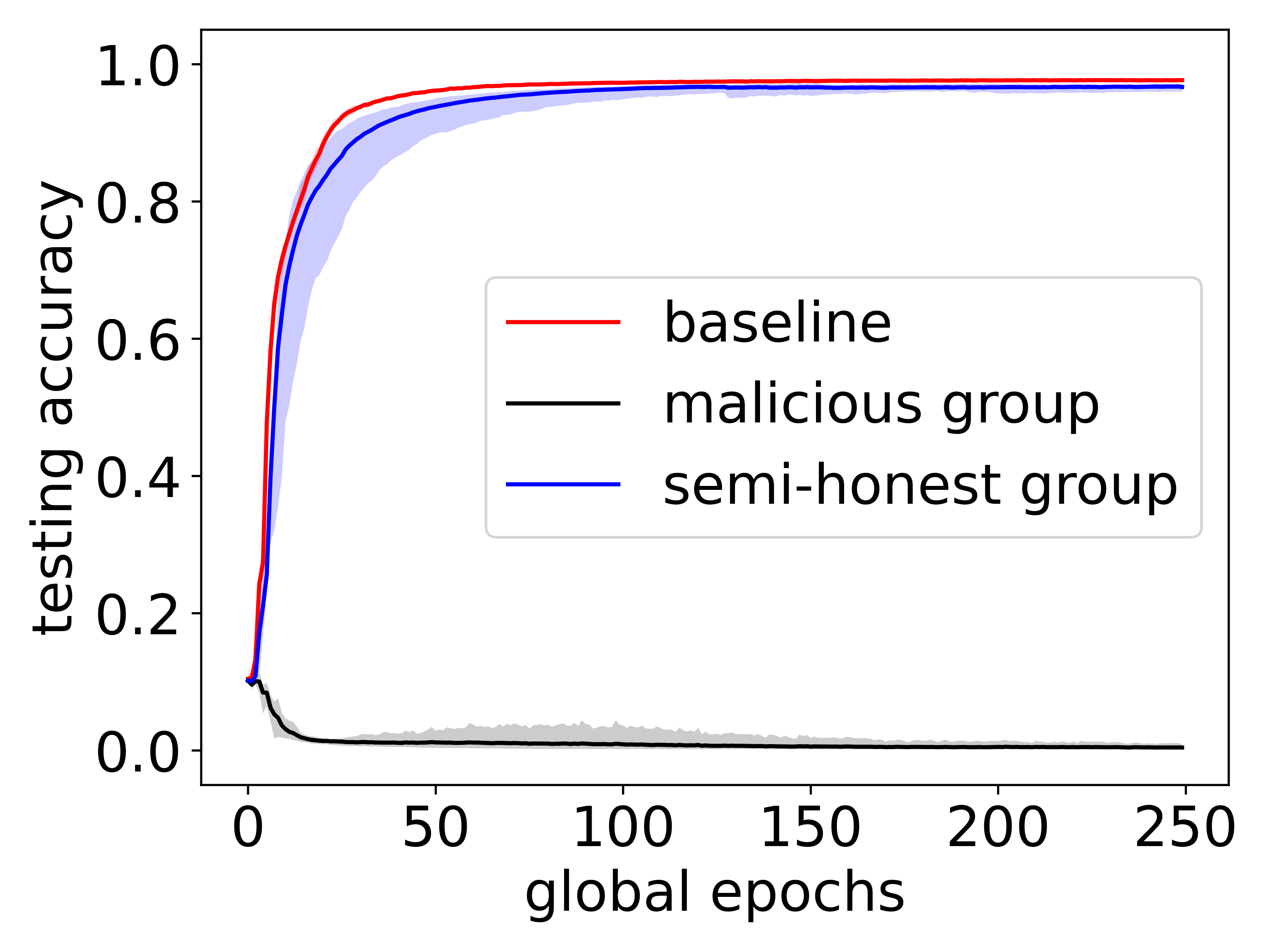}
        \caption{Label Flipping Attack}
    \end{subfigure}    
    \begin{subfigure}[b]{0.235\textwidth}
        \centering
        \includegraphics[width=1.04\textwidth]{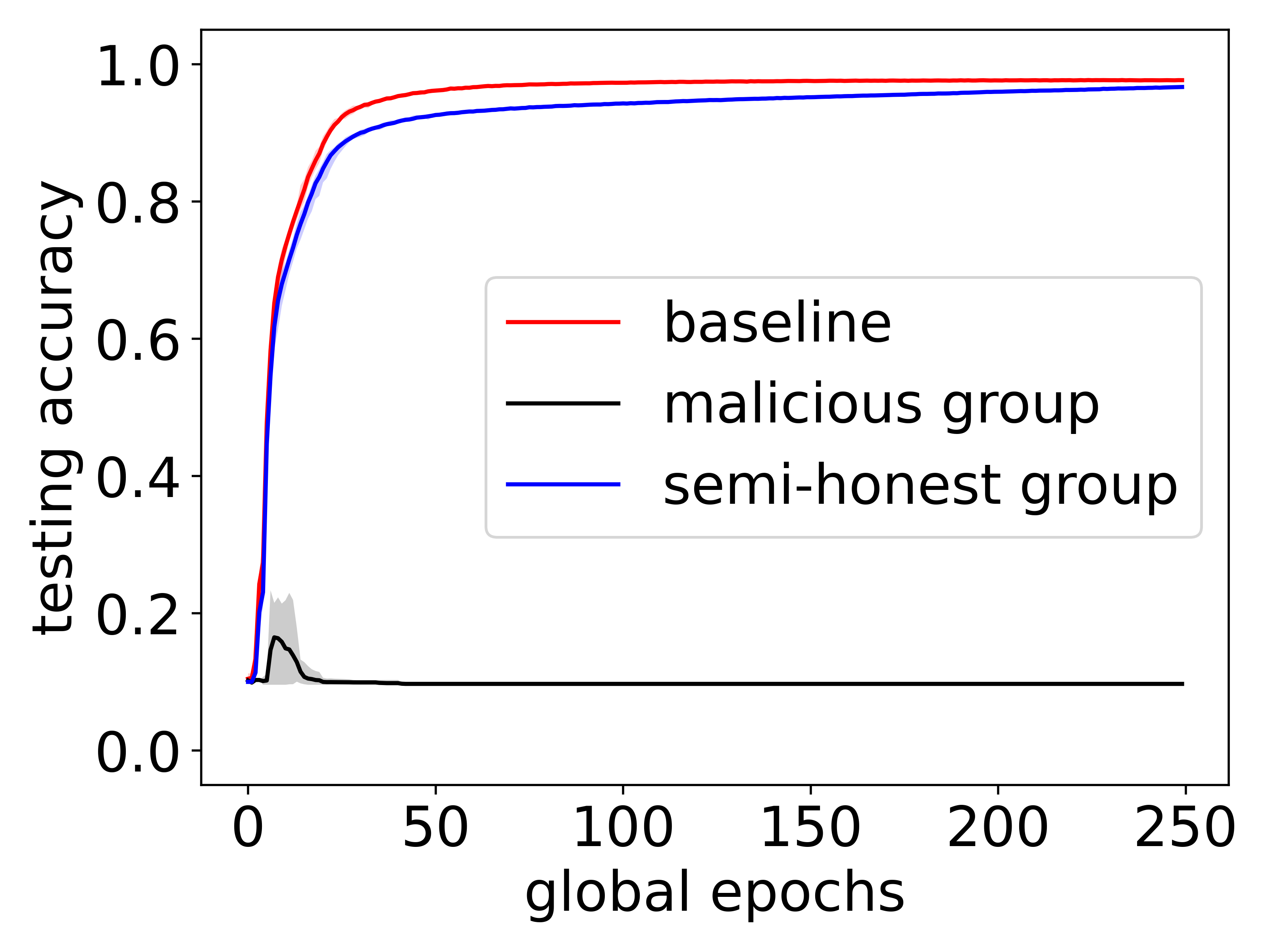}
        \caption{Krum Attack}
    \end{subfigure}    
    \begin{subfigure}[b]{0.235\textwidth}
        \centering
        \includegraphics[width=1.04\textwidth]{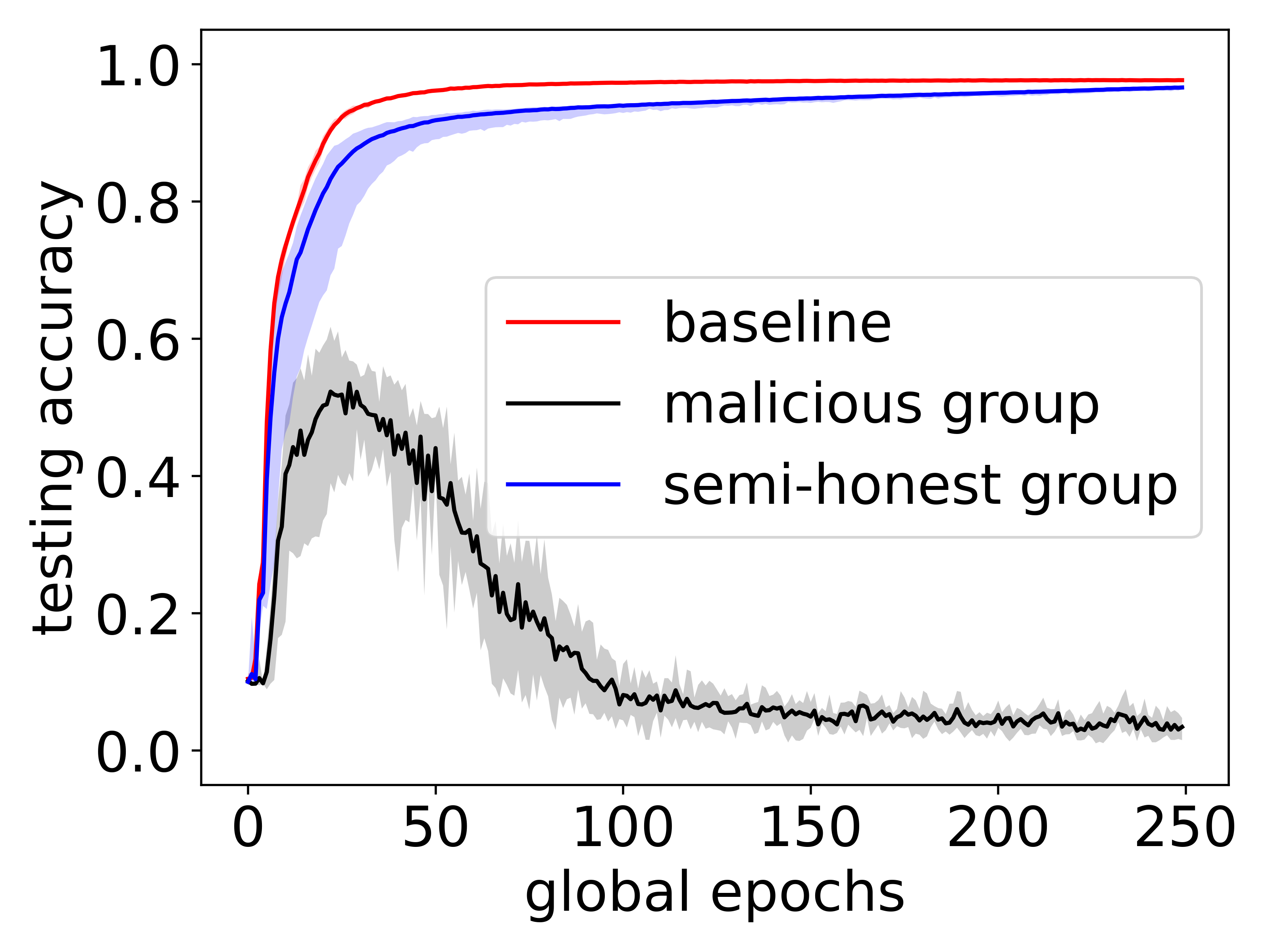}
        \caption{Trim Attack}
    \end{subfigure}  
    \begin{subfigure}[b]{0.235\textwidth}
        \centering
        \includegraphics[width=1.04\textwidth]{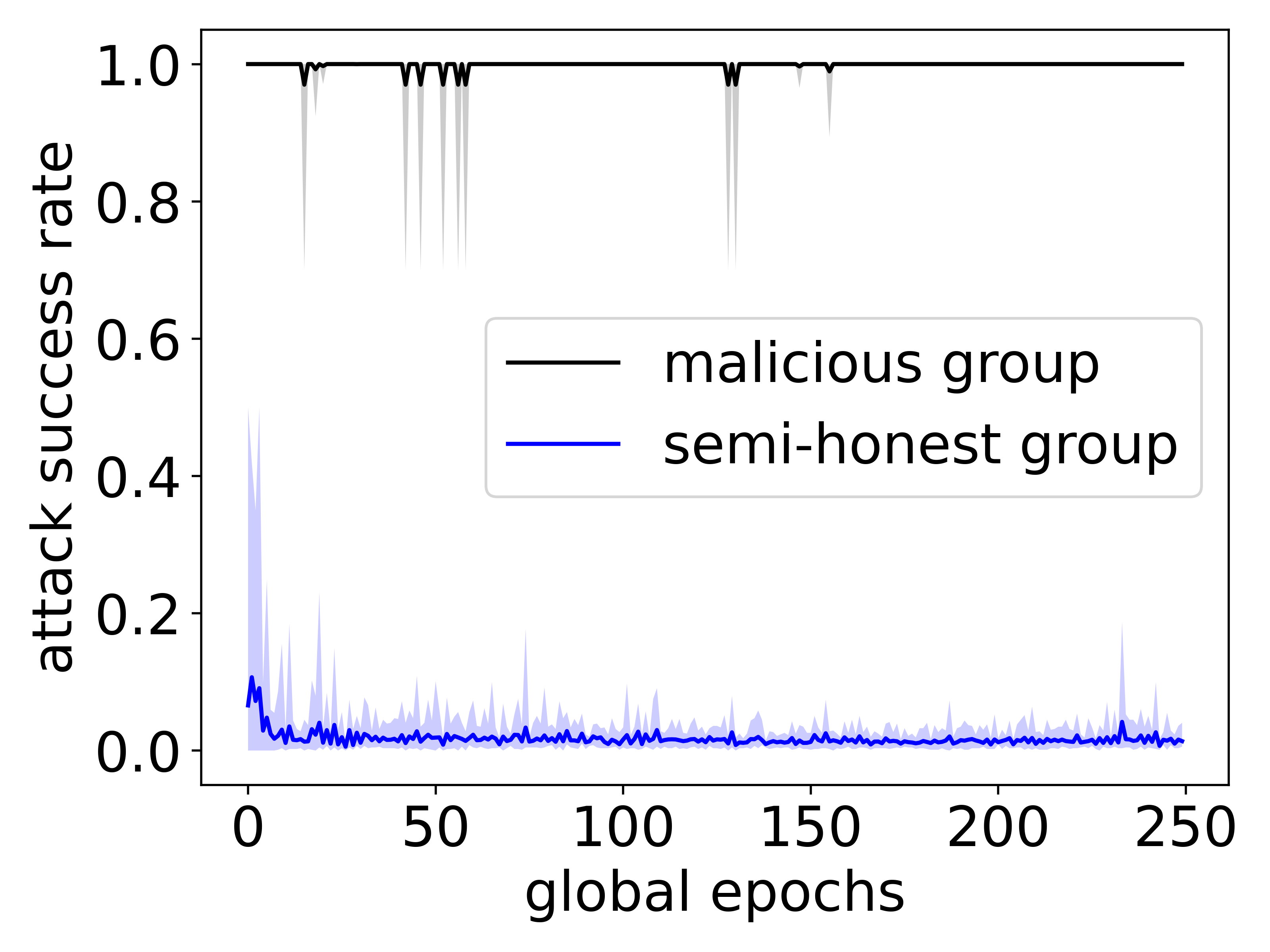}
        \caption{Adaptive Attack}
    \end{subfigure}  
    \begin{subfigure}[b]{0.235\textwidth}
        \centering
        \includegraphics[width=1.04\textwidth]{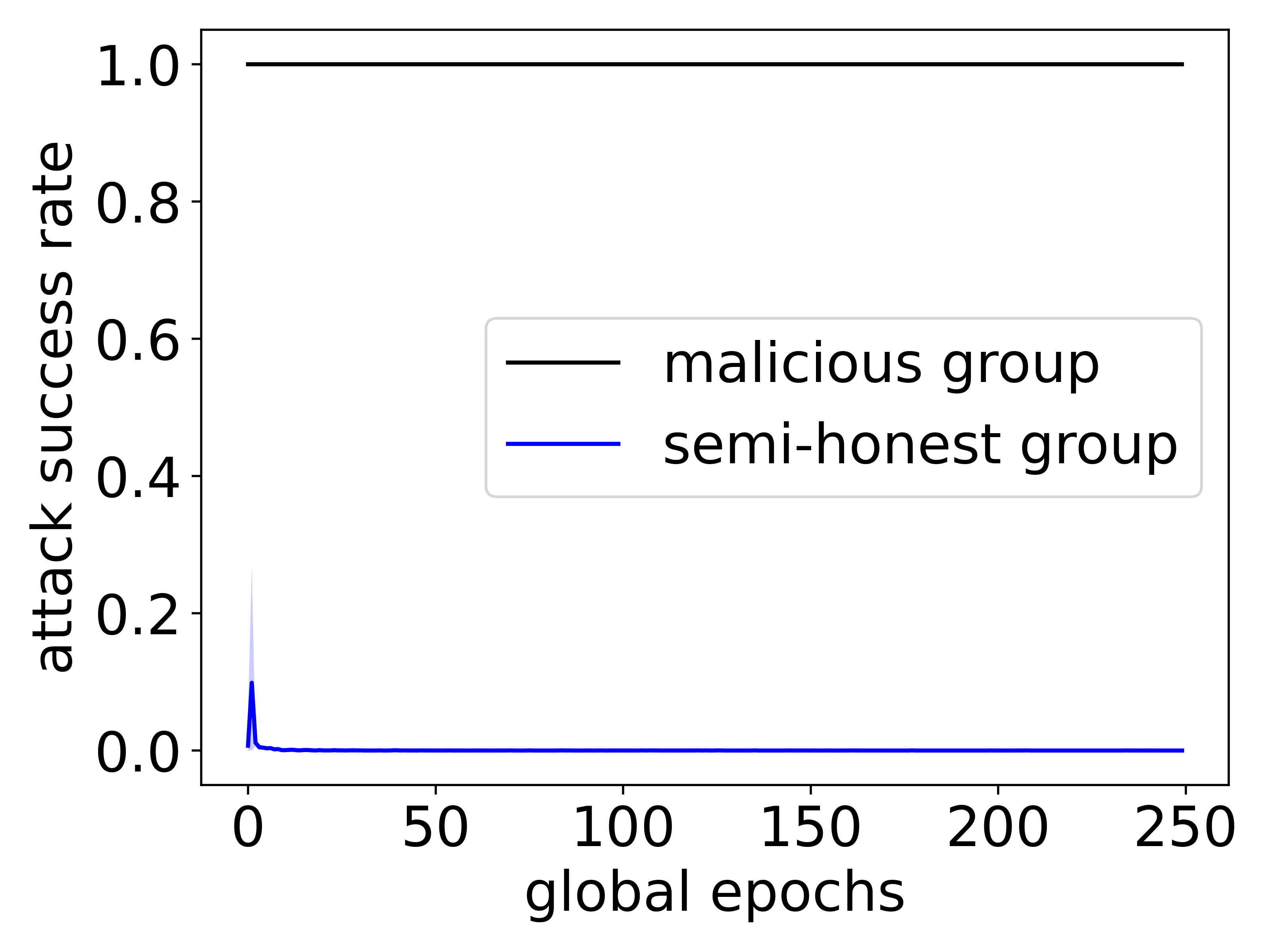}
        \caption{Backdoor Attack}
    \end{subfigure}
    \begin{subfigure}[b]{0.235\textwidth}
        \centering
        \includegraphics[width=1.04\textwidth]{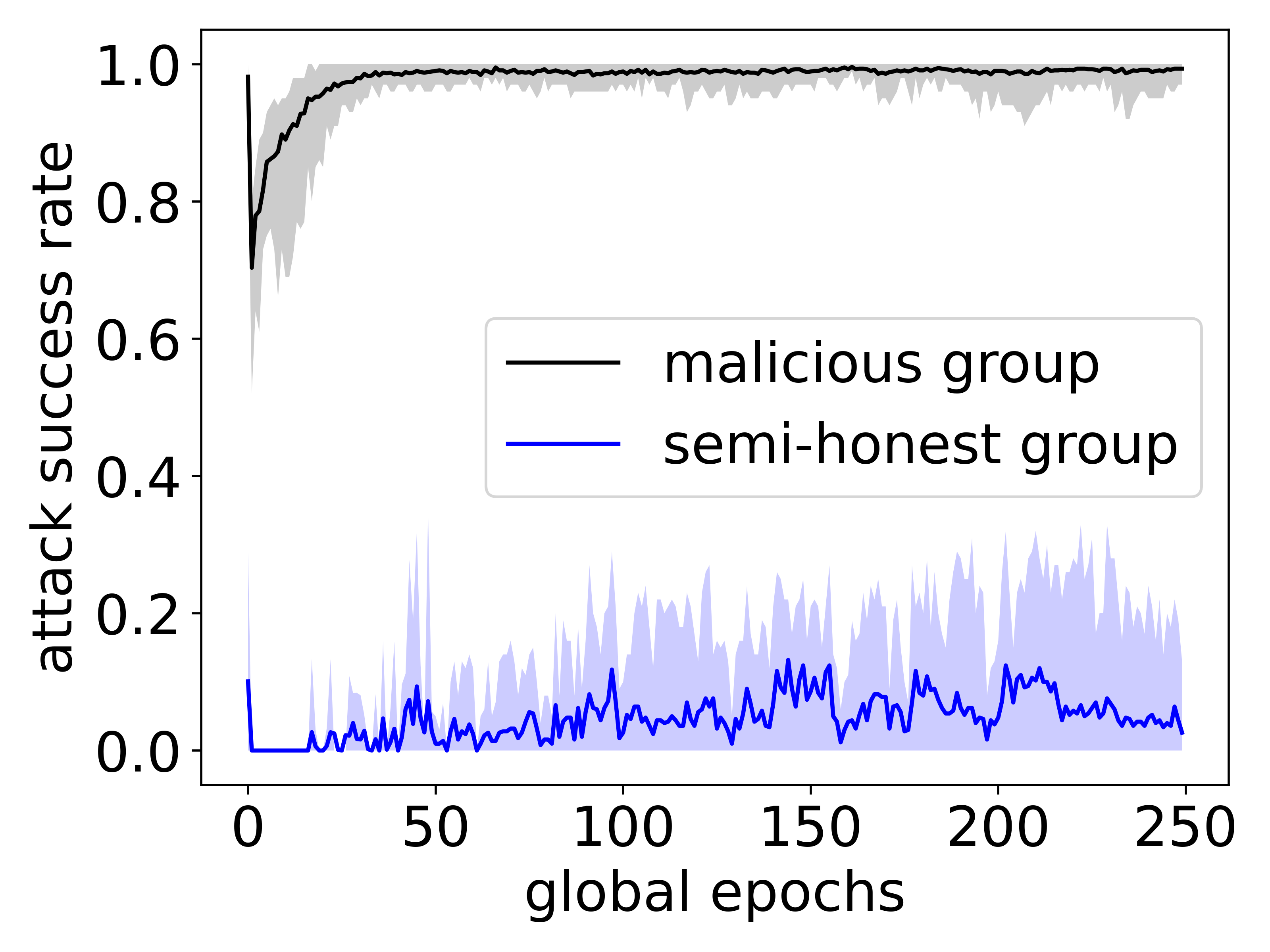}
        \caption{Edge-case Attack}
    \end{subfigure}    
    \caption{Comparison of testing accuracy among baseline, semi-honest, and malicious groups under untargeted attacks (a-d) and ASR between the groups under targeted attacks (e-f), where we train MNIST by the default settings in Table~\ref{tab:flsetting}.}
    \label{fig:xi06}
\end{figure}
\subsection{Evaluation on Accuracy}
\label{sec:expacc}

We set the baseline as a ``no-attack-and-defense" FL, which means it excludes the use of any cryptographic tools as well as Byzantine-robust solutions but only trains with fully honest parties. 
This reaches the highest accuracy and fastest convergence speed for FL training.
We then set \#clients participating in the baseline training equal to \#semi-honest clients in the malicious existence case.
We conduct each experiment for 10 independent trials and further calculate the average to achieve smooth and precise accuracy performance. 
{We evaluate \texttt{MUDGUARD}'s accuracy and ASR by varying the total number of clients, the proportion of malicious clients, and the degree of non-iid; and further compare the performance with the baseline.} 

Table~\ref{tab:mnist} shows that, under GA, AA, BA, and EA, the testing accuracy is on par with the baseline (with only a 0.008 gap on average) in MNIST. 
However, compared with the baseline, the results of \texttt{MUDGUARD} under LFA, Krum, and Trim attacks show slight drops (on average, 0.025 in MNIST). 
This is so because \texttt{MUDGUARD} has slow convergence and large fluctuation.
This is incurred by two factors. 
To reduce the overheads of secure computations, we apply binary SS in SignSGD. 
SignSGD could cause negative impacts on clustering. 
Only taking the signs of the gradients can ignore the effect of the magnitudes of the malicious gradients. 
This makes the clustering a bit prone to inaccuracy. 
The other factor is the LFA and Krum/Trim attacks either poison the training data and further poison updates or the local model to optimize the attacks. 
In the early stage of training, the malicious models do not perfectly fit the poisoned training data and local models yet. Thus, the semi-honest and malicious clients could be classified into the same cluster. 


Figure~\ref{fig:xi06} presents an overview of {the testing accuracy (of baseline and semi-honest and malicious groups) and ASR (of the two groups) under Byzantine attacks in the default settings of Table~\ref{tab:flsetting}, where MNIST is used.}\footnote{The lines refer to average cases, while the shadow outlines the max and min accuracy of each epoch.}
We see that semi-honest clients can obtain comparable accuracy to the baseline at the end of the training.
{In Figure~\ref{fig:xi06}a-d, the accuracy of the semi-honest group and the baseline sharply increase from 0.1 at epoch 0 to around 0.95 at epoch 25, then gradually converge to 0.97.
In the GA, since the malicious group can only receive aggregation of noise, their accuracy always fluctuates around 0.1, equalling a random guess probability.   
As for LFA, the model accuracy gradually drops from 0.1 (at the beginning) to 0. 
This is because their models are trained on label-flipped datasets, while the labels of the testing set are not flipped. 
If the testing set is used to detect a poisoned model, the result should be flipped labels and failing to match the labels in the testing set, which results in 0. Since semi-honest and malicious clients can be classified into the same cluster at the beginning of the training, the accuracy of their models, w.r.t. malicious clients, is larger than 0.1 in some trials.

As shown in Figure~\ref{fig:xi06}b-e, the accuracy of the semi-honest group under these attacks converges slightly slower than the baseline. 
LFA, Krum, and Trim attacks aim to either train poisoned data or optimize local poisoned models to deteriorate the global model's testing accuracy.
Due to the attacks being relatively slow and not as direct as GA, malicious updates cannot deviate 100\% from benign updates at the beginning of the training (which means that malicious and semi-honest clients could be clustered together).
However, with more training rounds, the deviation becomes clearer. Thus, \texttt{MUDGUARD} separates the two groups easily. 

AA, BA, and EA have no impact on the model's testing accuracy since their main purpose is to improve the ASR (nearly equal to 1 without defense). 
Under \texttt{MUDGUARD}, the final ASR is well suppressed.
The ASR of AA and BA are close 
to 0 in MNIST (see Figure~\ref{fig:xi06}f-g). 
But the ASR of EA is much higher than that of AA and BA, reaching an average of 0.041. 
This is because, in EA, the edge-case training sets owned by attackers are very similar to the training sets with the target labels.
If the discriminative capability of the model is not strong enough, the update directions of semi-honest and malicious gradients are also very close, making it difficult for \texttt{MUDGUARD} to distinguish them.}

The experimental results in FMNIST and CIFAR-10 show the same trends as those in MNIST under the tested attacks. Due to space limitations, we present these results in Appendix~\ref{sec:otherexp}.

\noindent\textbf{Impact of the proportion of malicious clients.}
We evaluate testing accuracy and ASR when the proportion of malicious clients $\xi\geq 0.5$. 
In Tables~\ref{tab:mnist},~\ref{tab:fmnist}, and~\ref{tab:cifar}, we can see that all accuracy results show a slightly downward trend with the increase of $\xi$ in three datasets.
For the baseline, the accuracy on average drops 0.008, 0.057, and 0.084 in MNIST, FMNIST, and CIFAR-10, respectively. 
Under GA, AA, BA, and EA, this kind of decline is on par with the baseline, whether in MNIST (0.003-0.009), FMNIST (0.059-0.66), or CIFAR-10 (0.078-0.093). 
Under the LFA, Krum, and Trim attacks, affected by the slow convergence and fluctuation, the testing accuracy of \texttt{MUDGUARD} also declines a bit more than the baseline, which is 0.012-0.028, 0.035-0.055, and 0.089-0.107 in MNIST, FMINST, and CIFAR-10, respectively.
Recall that malicious clients hold a portion of the benign dataset but do not contribute to the global model (note this equals to the case where the portion of the benign dataset is missing).
From this perspective, the accuracy should be related to the number of semi-honest clients, {where the max accuracy we achieve could correspond to the case when the clients are all semi-honest.} 
Beyond the accuracy, the ASR of EA has an upward trend while the number of malicious clients is increasing, rising by 0.005 and 0.048 in MNIST and FMNIST.
Since EA is not perfectly distinguished by \texttt{MUDGUARD}, the ASR naturally grows with the increase in the number of malicious clients. 
\textit{In conclusion, 
\texttt{MUDGUARD} is effective in maintaining accuracy even when the proportion of malicious clients is $\geq$ 50\%. While there is a slight decline in accuracy in some cases, it is on par with the baseline and does not significantly affect the overall performance of the system. }

\noindent\textbf{Impact of the total number of clients.} Tables~\ref{tab:mnist},~\ref{tab:fmnist} and~\ref{tab:cifar}, show the comparable testing accuracy of \texttt{MUDGUARD} under different attacks, as well as ASR of BA and EA when the total number of clients is set from 10 to 500.
We observe that the accuracy appears to fall whilst the client number is increasing, especially when \#clients = 500, it descends by about 0.2, 0.25, and 0.4 in MNIST, FMINST, and CIFAR-10, respectively.
This is caused by a relatively small number of training samples. For example, in CIFAR-10, each client can only be assigned 100 samples, which does not capture one minibatch size, resulting in a ``bad" performance in terms of testing accuracy. 
However, \texttt{MUDGUARD} is not affected by this factor, and it can further defend against all untargeted attacks to maintain accuracy at the same level as the baseline. 
The ASR of AA and BA are controlled to nearly 0\%.
Although EA provides a higher ASR (than AA and BA), it drops to nearly 0 when \#clients = 500, which confirms that its effectiveness relies on how well the model learns.
\textit{Overall, \texttt{MUDGUARD} can maintain a high level of accuracy under different attacks and across a range of client numbers and can effectively defend against untargeted attacks. Even when against EA, \texttt{MUDGUARD} can reduce its ASR to nearly 0\%. }

\noindent\textbf{Impact of the degree of non-iid.}
We further present the testing accuracy and ASR for the cases where the degree of non-iid ranges from 0.1 to 0.9 in Tables~\ref{tab:mnist},~\ref{tab:fmnist}, and~\ref{tab:cifar}.
We can see that in the presence of attacks, \texttt{MUDGUARD} can still remain at the same level of performance as the baseline, dropping only 0.018 on average. 
The largest decrease is 0.067 when $q$ = 0.5, which happens under the Krum attack on training LeNet with FMNIST.
Note the accuracy and the degree of non-iid show a negative correlation with/without attacks, which is also in line with the conclusion of~\cite{mcmahan2017communication} that \texttt{FedAvg} performs not well in the case of heterogeneous data distribution.
The ASR of BA appears to have a slight growth as $q$ ascends in (F)MNIST.
This is because, in the high degree of non-iid, the distances among semi-honest clients also raise. 
For targeted attacks like AA and BA, the directions of updates are closer to those of benign updates than those of untargeted attacks. 
At the beginning of training, there are cases when the distances between malicious clients and semi-honest clients are similar to those between semi-honest clients, making it difficult for \texttt{MUDGUARD} to capture subtle differences. 
For the ASR of EA, as concluded in analyzing the impact of total clients, EA performs poorly when the model's accuracy is low.
\emph{As a general conclusion, \texttt{MUDGUARD} achieves a high robustness. 
Semi-honest clients can get accurate models, while malicious clients fail to attack but also are unable to get the models.}

\noindent\textbf{Effectiveness of clustering.}
To investigate the effectiveness of our clustering approach, we present the impact on True Positives Rate (TPR) and True Negatives Rate (TNR) under all attacks of $\xi=0.6$ in Table~\ref{tab:tptn} and compare against the method of \texttt{FLAME}.
\begin{table}[t]
\centering
\scalebox{0.9}{
\begin{tabular}{@{}cccccccc@{}}
\toprule
\multicolumn{2}{c|}{\multirow{2}{*}{$\xi=$0.6}}                                                                                   & \multicolumn{2}{c}{MNIST} & \multicolumn{2}{c}{FMNIST} & \multicolumn{2}{c}{CIFAR-10} \\ \cmidrule(l){3-4}\cmidrule(l){5-6}\cmidrule(l){7-8} 
\multicolumn{2}{c|}{}                                                                                                       & TPR         & TNR         & TPR          & TNR         & TPR           & TNR          \\ \midrule
\multicolumn{1}{c|}{\multirow{3}{*}{GA}}   & \multicolumn{1}{c|}{\texttt{FLAME}}                                                 & 0.821       & 0.846       & 0.848        & 0.847       & 0.879         & 0.928        \\
\multicolumn{1}{c|}{}                      & \multicolumn{1}{c|}{\begin{tabular}[c]{@{}c@{}}\texttt{weights-}\\ \texttt{MUDGUARD}\end{tabular}} & 1           & 1           & 1            & 1           & 1             & 1            \\
\multicolumn{1}{c|}{}                      & \multicolumn{1}{c|}{\texttt{MUDGUARD}}                                                    & 0.957       & 1           & 0.94         & 1           & 0.966         & 1            \\ \midrule
\multicolumn{1}{c|}{\multirow{3}{*}{LFA}}  & \multicolumn{1}{c|}{\texttt{FLAME}}                                                 & 0.653       & 0.612       & 0.634        & 0.655       & 0.742         & 0.711        \\
\multicolumn{1}{c|}{}                      & \multicolumn{1}{c|}{\begin{tabular}[c]{@{}c@{}}\texttt{weights-}\\ \texttt{MUDGUARD}\end{tabular}} & 0.974       & 0.987       & 0.975        & 0.977       & 0.98          & 0.985        \\
\multicolumn{1}{c|}{}                      & \multicolumn{1}{c|}{\texttt{MUDGUARD}}                                                    & 0.929       & 0.924       & 0.927        & 0.916       & 0.943         & 0.967        \\ \midrule
\multicolumn{1}{c|}{\multirow{3}{*}{Krum}} & \multicolumn{1}{c|}{\texttt{FLAME}}                                                 & 0.587       & 0.622       & 0.521        & 0.63        & 0.527         & 0.578        \\
\multicolumn{1}{c|}{}                      & \multicolumn{1}{c|}{\begin{tabular}[c]{@{}c@{}}\texttt{weights-}\\ \texttt{MUDGUARD}\end{tabular}} & 0.974       & 0.953       & 0.973        & 0.968       & 0.971         & 0.966        \\
\multicolumn{1}{c|}{}                      & \multicolumn{1}{c|}{\texttt{MUDGUARD}}                                                    & 0.916       & 0.929       & 0.96         & 0.933       & 0.967         & 0.959        \\ \midrule
\multicolumn{1}{c|}{\multirow{3}{*}{Trim}} & \multicolumn{1}{c|}{\texttt{FLAME}}                                                 & 0.691       & 0.679       & 0.699        & 0.664       & 0.646         & 0.615        \\
\multicolumn{1}{c|}{}                      & \multicolumn{1}{c|}{\begin{tabular}[c]{@{}c@{}}\texttt{weights-}\\ \texttt{MUDGUARD}\end{tabular}} & 0.976       & 0.964       & 0.975        & 0.965       & 0.973         & 0.988        \\
\multicolumn{1}{c|}{}                      & \multicolumn{1}{c|}{\texttt{MUDGUARD}}                                                    & 0.938       & 0.944       & 0.927        & 0.913       & 0.964         & 0.958        \\ \midrule
\multicolumn{1}{c|}{\multirow{3}{*}{AA}}   & \multicolumn{1}{c|}{\texttt{FLAME}}                                                 & 0.591       & 0.573       & 0.612        & 0.625       & 0.766         & 0.719        \\
\multicolumn{1}{c|}{}                      & \multicolumn{1}{c|}{\begin{tabular}[c]{@{}c@{}}\texttt{weights-}\\ \texttt{MUDGUARD}\end{tabular}} & 0.998       & 0.982       & 0.99         & 0.982       & 0.984         & 0.982        \\
\multicolumn{1}{c|}{}                      & \multicolumn{1}{c|}{\texttt{MUDGUARD}}                                                    & 0.971       & 0.943       & 0.941        & 0.935       & 0.943         & 0.96         \\ \midrule
\multicolumn{1}{c|}{\multirow{3}{*}{BA}}   & \multicolumn{1}{c|}{\texttt{FLAME}}                                                 & 0.777       & 0.763       & 0.794        & 0.83        & 0.856         & 0.897        \\
\multicolumn{1}{c|}{}                      & \multicolumn{1}{c|}{\begin{tabular}[c]{@{}c@{}}\texttt{weights-}\\ \texttt{MUDGUARD}\end{tabular}} & 0.957       & 0.969       & 0.965        & 0.97        & 0.963         & 0.979        \\
\multicolumn{1}{c|}{}                      & \multicolumn{1}{c|}{\texttt{MUDGUARD}}                                                    & 0.936       & 0.928       & 0.926        & 0.931       & 0.947         & 0.928        \\ \midrule
\multicolumn{1}{c|}{\multirow{3}{*}{EA}}   & \multicolumn{1}{c|}{\texttt{FLAME}}                                                 & 0.313       & 0.32        & \_            & \_           & 0.248         & 0.288        \\
\multicolumn{1}{c|}{}                      & \multicolumn{1}{c|}{\begin{tabular}[c]{@{}c@{}}\texttt{weights-}\\ \texttt{MUDGUARD}\end{tabular}} & 0.899       & 0.903       & \_            & \_           & 0.893         & 0.921        \\
\multicolumn{1}{c|}{}                      & \multicolumn{1}{c|}{\texttt{MUDGUARD}}                                                                          & 0.856       & 0.876       & \_            & \_           & 0.827         & 0.83         \\ \bottomrule
\end{tabular}}
\caption{Effectiveness of clustering among FLAME method, \texttt{weights-MUDGUARD}, and \texttt{MUDGUARD}.}
\label{tab:tptn}
\end{table}
The \texttt{FLAME} takes updates as inputs and cosine similarity as a metric for clustering.
Note on the server side, \emph{Model Segmentation} does not need to identify which cluster is malicious/semi-honest. 
We consider false positives to occur if semi-honest clients are grouped with the malicious.
On average, under GA, the TPR and TNR improve from 0.151 and 0.126 in \texttt{FLAME} to 1 in \texttt{weights-MUDGUARD}, respectively.
Since \texttt{MUDGUARD} is based on SignSGD, only the signs of updates are taken. 
Ignoring the magnitude effect, there is a reduction in TPR (an average reduction of 0.046 as compared to \texttt{weights-MUDGUARD}). 
Furthermore, TNR does not drop as we set the appropriate parameters according to Theorem~\ref{the:1}.
The same changes can be captured in the case of LFA: \texttt{weights-MUDGUARD} has an average increase of 0.3 and 0.324 in TPR and TNR, respectively, as compared to \texttt{FLAME}. 
Compared with \texttt{weights-MUDGUARD}, \texttt{MUDGUARD} drops by 0.04 and 0.05. 
We see that under other attacks (LFA, Krum, Trim, AA, BA, and EA), TPR and TNR are lower than the case under GA. 
Because they launch attacks on either training data or optimizing poisoned models, all updates at the beginning of training have high similarities, yielding those updates being clustered together and the cases of misclustering.
The true rates of CIFAR-10 are higher than those of (F)MNIST, 
because we can set more rounds to train ResNet-18. 
After the model converges, the true rates reach almost 100\%.
Thence, \texttt{MUDGUARD} obtains more correct clusters.

{From the above analysis, we conclude that TNR and TPR are related to the number of training rounds, attack type, and the values of updates. 
Because \texttt{MUDGUARD} groups high similarity updates into one cluster and does not need to identify malicious/semi-honest clusters, the performance of clustering is less affected by the proportion of malicious clients. 
Similar results, like Table~\ref{tab:tptn}, can be captured even in the case when $\xi>$0.6.} 
Through Figure~\ref{fig:xi06}, Table~\ref{tab:tptn}, and the above discussion, we state that although TNR and TPR are affected to a certain extent by binary SS,   
from the view of testing accuracy and ASR, \texttt{MUDGUARD} achieves higher TPR and TNR than \texttt{FLAME}.
In terms of other analyses of hypeparameters (i.e., $\alpha$ and $\lambda$), please refer to Appendix~\ref{sec:otherexp}.
Appendix~\ref{sec:convana} shows the detailed convergence analysis of \texttt{MUDGUARD}.

\noindent\textbf{Robustness comparison against other methods.}
We present a comparison among \texttt{MUDGUARD} and SOTA methods (\texttt{FLTrust}, \texttt{FLAME}, \texttt{Zeno++}, and \texttt{EIFFeL}) in terms of robustness, as shown in Figure~\ref{fig:comp}, where MNIST is used. 
Several Byzantine-robust FL systems can easily and directly apply to \texttt{EIFFeL}. 
We select the two of them (please refer to~\cite{roy2022EIFFeL}) for comparison, namely FLTrust and Zeno++. 
For brevity, we refer to them as \texttt{EIFFeL-FLtrust} and \texttt{EIFFeL-Zeno++} hereafter. 
To demonstrate the advantages of \texttt{MUDGUARD} (based on SignSGD), we also compare its robustness with both SignSGD and \texttt{FedAvg} w/o defense.
{To investigate the impact of the cryptographic tools on testing accuracy and ASR, we also compare \texttt{MUDGUARD} with \texttt{weights-MUDGUARD}.} 
%
{One may see that \texttt{MUDGUARD}, countering the case of the malicious majority on the client side, does outperform most existing approaches.} 

In Figure~\ref{fig:comp}a-d, the accuracy of \texttt{(weights-)MUDGUARD} and \texttt{EIFFeL-(FLTrust/Zeno++)} can be maintained at the same level as the baseline (about 0.97).
Due to the impacts of misclustering, \texttt{weights-MUDGUARD} has a 0.02 accuracy gap with \texttt{EIFFeL-FLTrust}.  
\texttt{MUDGUARD} (with DP noise) commits a roughly 0.01 accuracy loss as compared to \texttt{weights-MUDGUARD}.
The accuracy of others decreases with the increase in malicious clients, especially when $\xi\geq0.5$, the accuracy drops abruptly to the same level of \texttt{FedAvg} without defense. 
For the ASR of AA and BA, apart from \texttt{EIFFeL-(FLTrust/Zeno++)}, \texttt{MUDGUARD} and \texttt{weights-MUDGUARD}, all the remaining methods suddenly increase to 1 at $\xi$=0.4/0.5.
Since EA has better attack ability (than AA and BA), \texttt{weights-MUDGUARD} and \texttt{MUDGUARD} suffer from a nearly 0.08 gap to \texttt{EIFFeL-FLTrust}. 
The ASR of others can raise from $\xi=0.1$ and finally reach 1.0 at $\xi=0.5$. 
\texttt{SignSGD} only limits the magnitude of malicious updates rather than filtering them out. Still, it can provide a certain level of defense (Figure~\ref{fig:comp}) when there is a low malicious proportion ($\xi$=0.1-0.2) (compared to \texttt{FedAvg} having an average of 0.3 higher testing accuracy under untargeted attacks, and an average lower ASR of 0.4 under targeted attacks). 
As the number of malicious clients rises, its robustness drop to the level of \texttt{FedAvg} w/o defense.

\texttt{FLAME} indicates that a small-size cluster should be a malicious group. Thus, it is easy to confirm malicious clients via clustering. 
In the case of the malicious majority, it is hard to identify the malicious/semi-honest via group size.
\texttt{FLTrust} assumes that before training, an honest server collects and trains on a small dataset. 
In each round, the server takes the updates trained by this small dataset as the root of trust. 
The ``trusted" results are then compared to the updates sent by the clients. 
If the cosine similarity between them is too small, the updates will be filtered out.
With this approach, the accuracy of the global model remains equivalent to that of the baseline. 
We state that \texttt{MUDGUARD} is on par with \texttt{FLTrust}, but it does not suffer from the restriction that the servers need to collect an auxiliary dataset ahead of training. 
We also see that when the proportion of malicious clients rises, the accuracy of \texttt{MUDGUARD} shows a slight decline. 
When clients upload their updates, \texttt{MUDGUARD} can only aggregate them with similar directions.
If there is only a small percentage of semi-honest clients in the system, we naturally have an incomplete training set, causing a loss in accuracy. Note the same trends as those in MNIST can be seen in FMNIST (Figure~\ref{fig:compfmnist}) and CIFAR-10 (Figure~\ref{fig:compcifar10}).

\begin{figure}[!htbp]
    \centering
    \begin{subfigure}[b]{0.235\textwidth}
        \centering
        \includegraphics[width=1.04\textwidth]{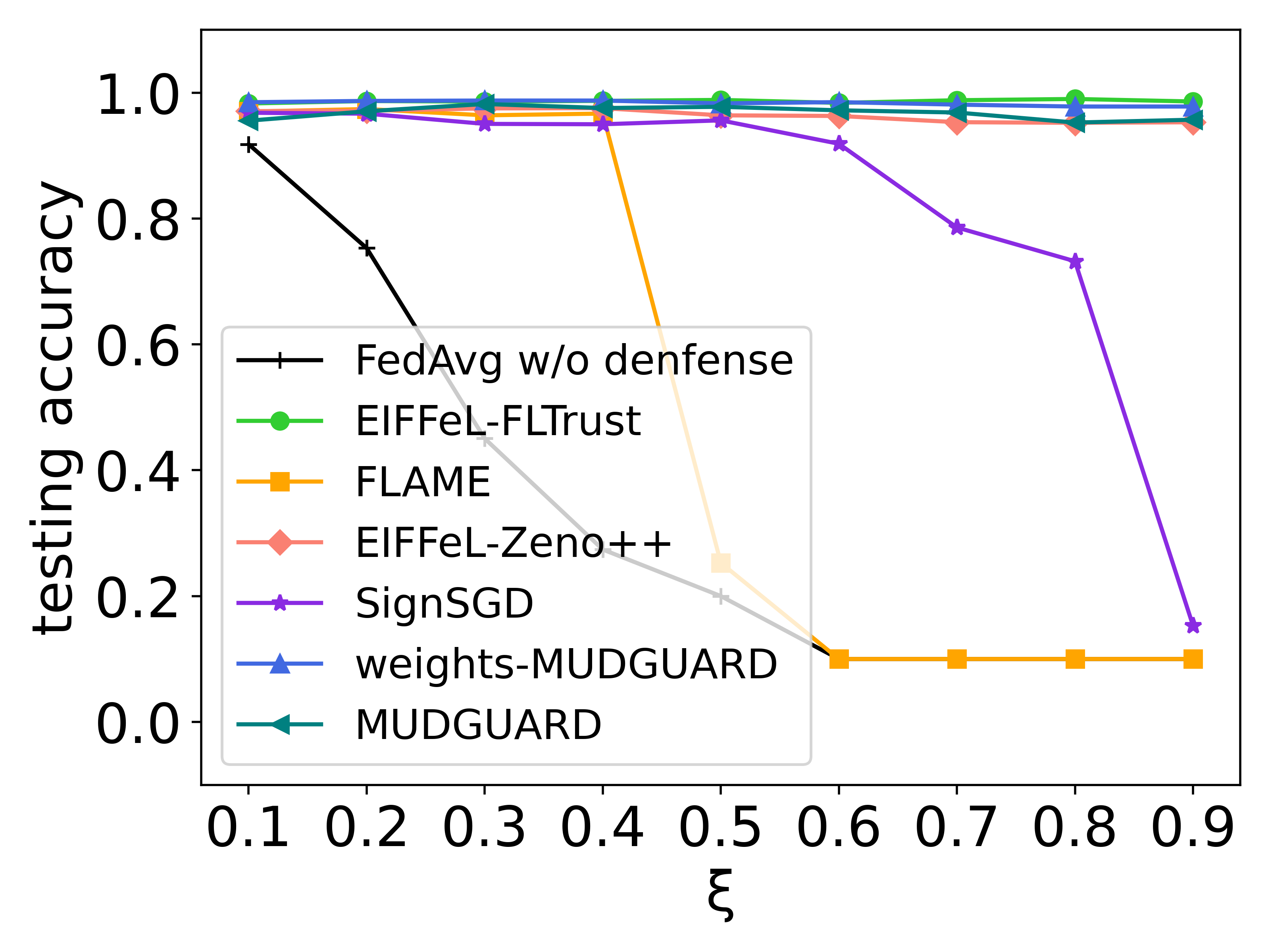}
        \caption{Gaussian Attack}
    \end{subfigure}
    \begin{subfigure}[b]{0.235\textwidth}
        \centering
        \includegraphics[width=1.04\textwidth]{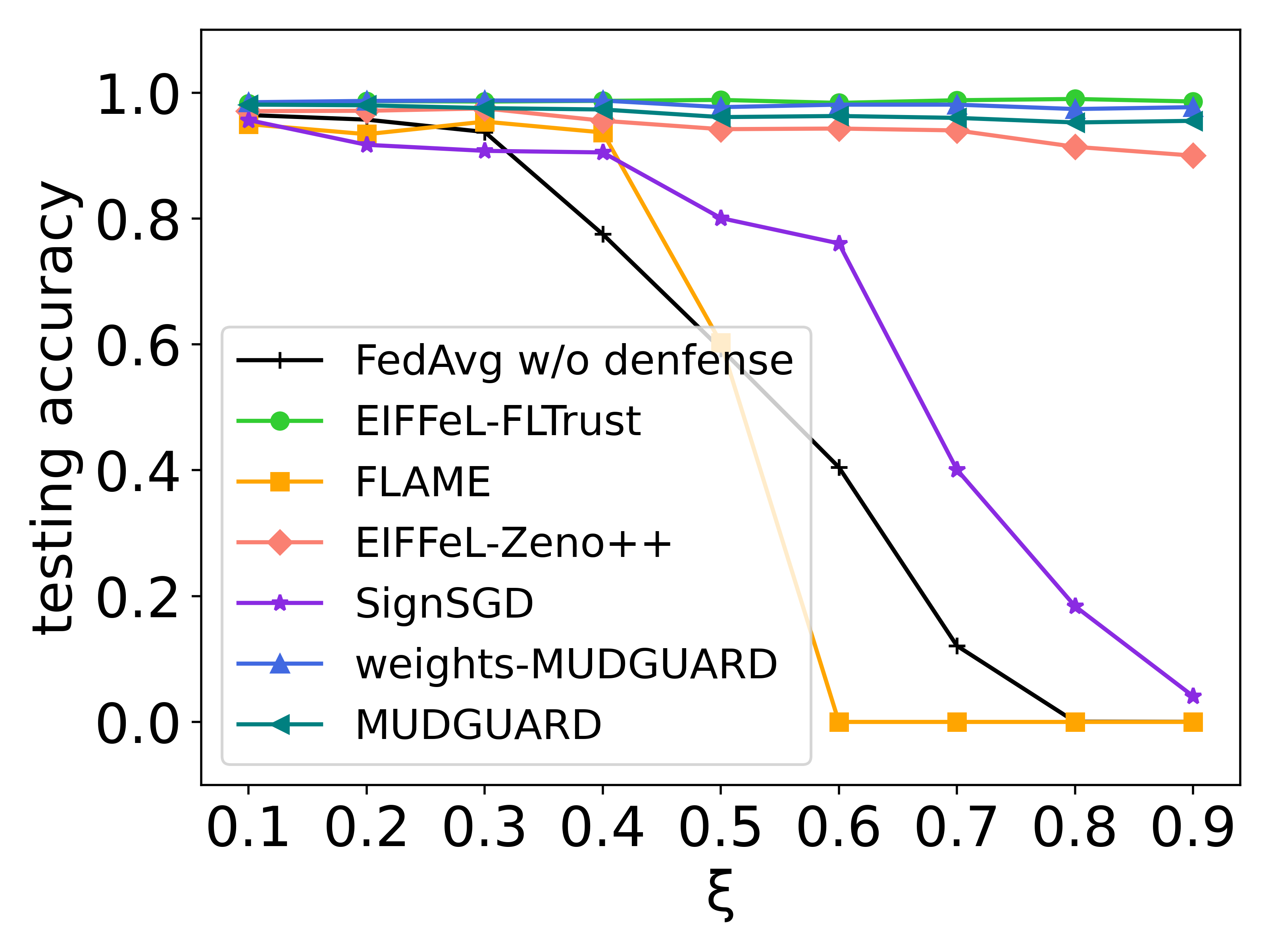}
        \caption{Label Flipping Attack}
    \end{subfigure}    
    \begin{subfigure}[b]{0.235\textwidth}
        \centering
        \includegraphics[width=1.04\textwidth]{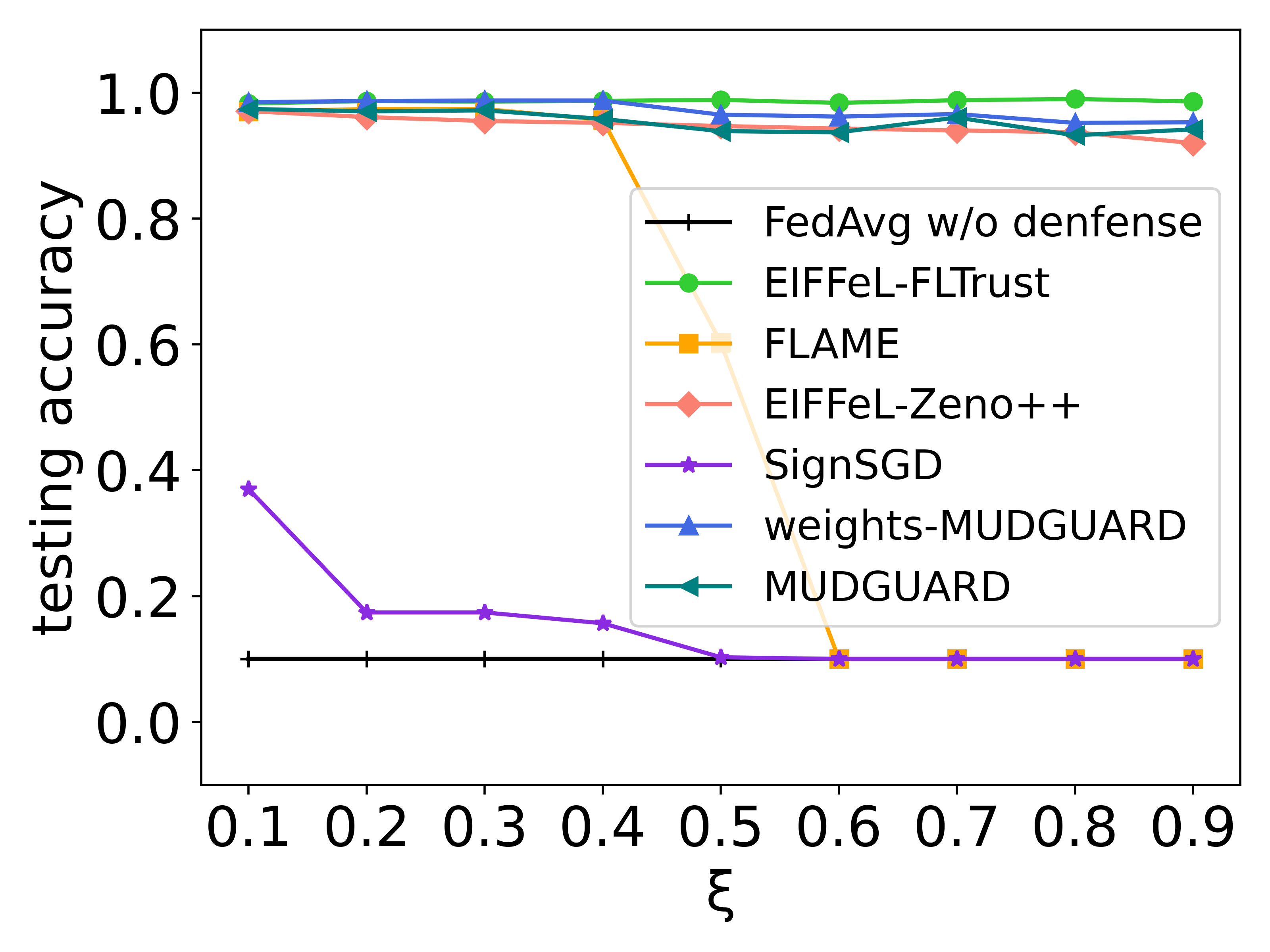}
        \caption{Krum Attack}
    \end{subfigure}    
    \begin{subfigure}[b]{0.235\textwidth}
        \centering
        \includegraphics[width=1.04\textwidth]{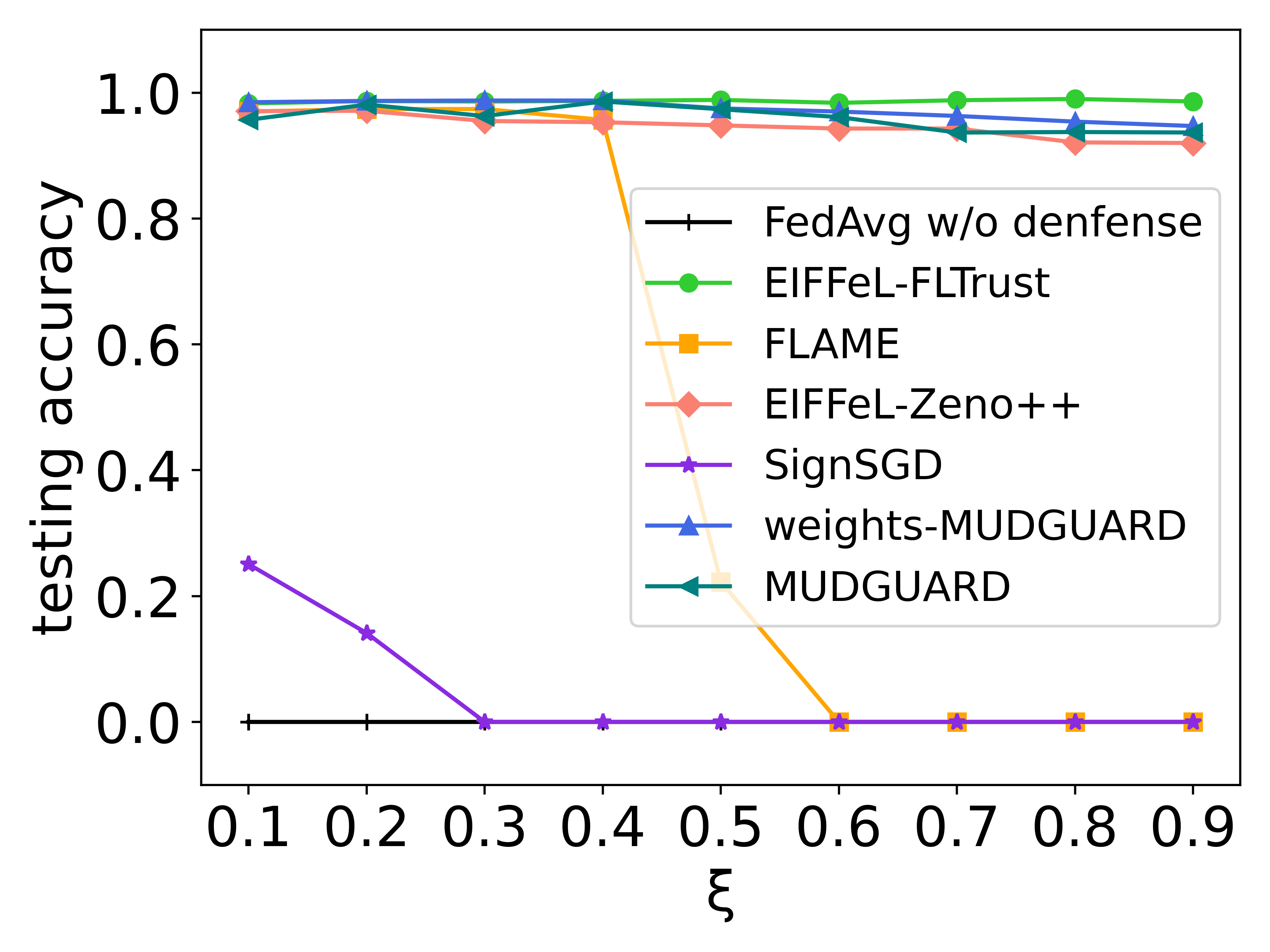}
        \caption{Trim Attack}
    \end{subfigure}  
        \begin{subfigure}[b]{0.235\textwidth}
        \centering
        \includegraphics[width=1.04\textwidth]{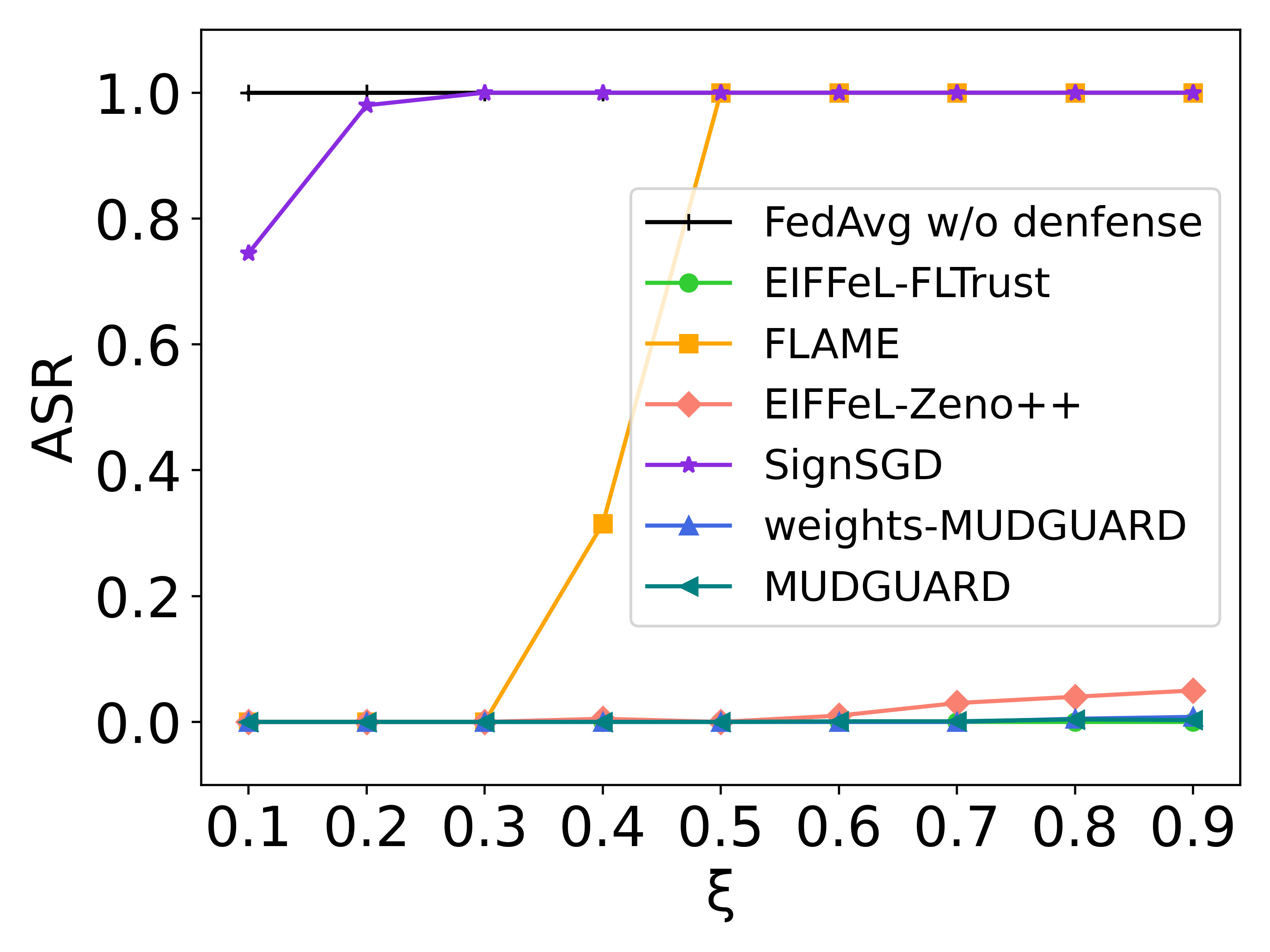}
        \caption{Adaptive Attack}
    \end{subfigure}  
    \begin{subfigure}[b]{0.235\textwidth}
        \centering
        \includegraphics[width=1.04\textwidth]{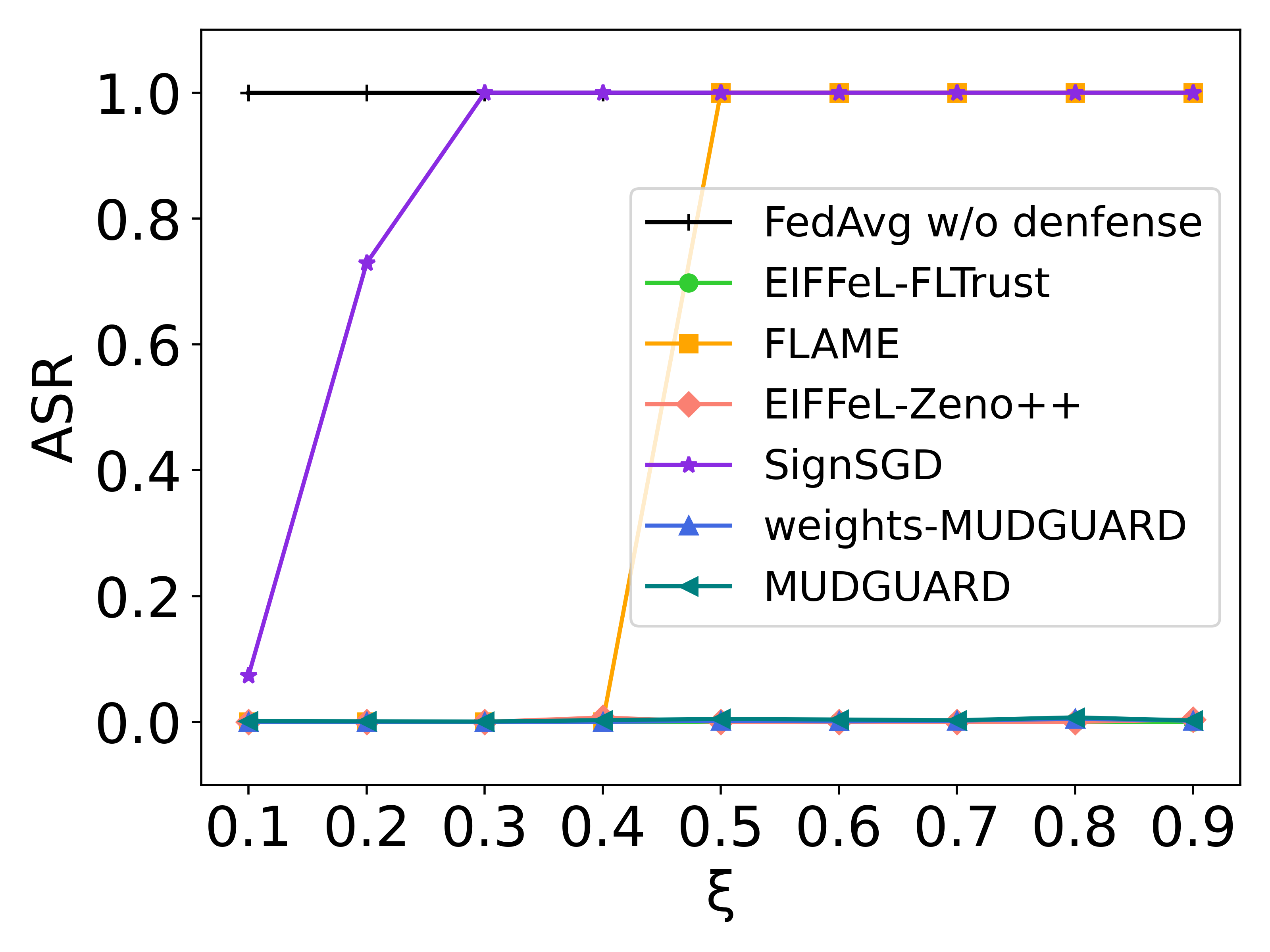}
        \caption{Backdoor Attack}
    \end{subfigure}
    \begin{subfigure}[b]{0.235\textwidth}
        \centering
        \includegraphics[width=1.04\textwidth]{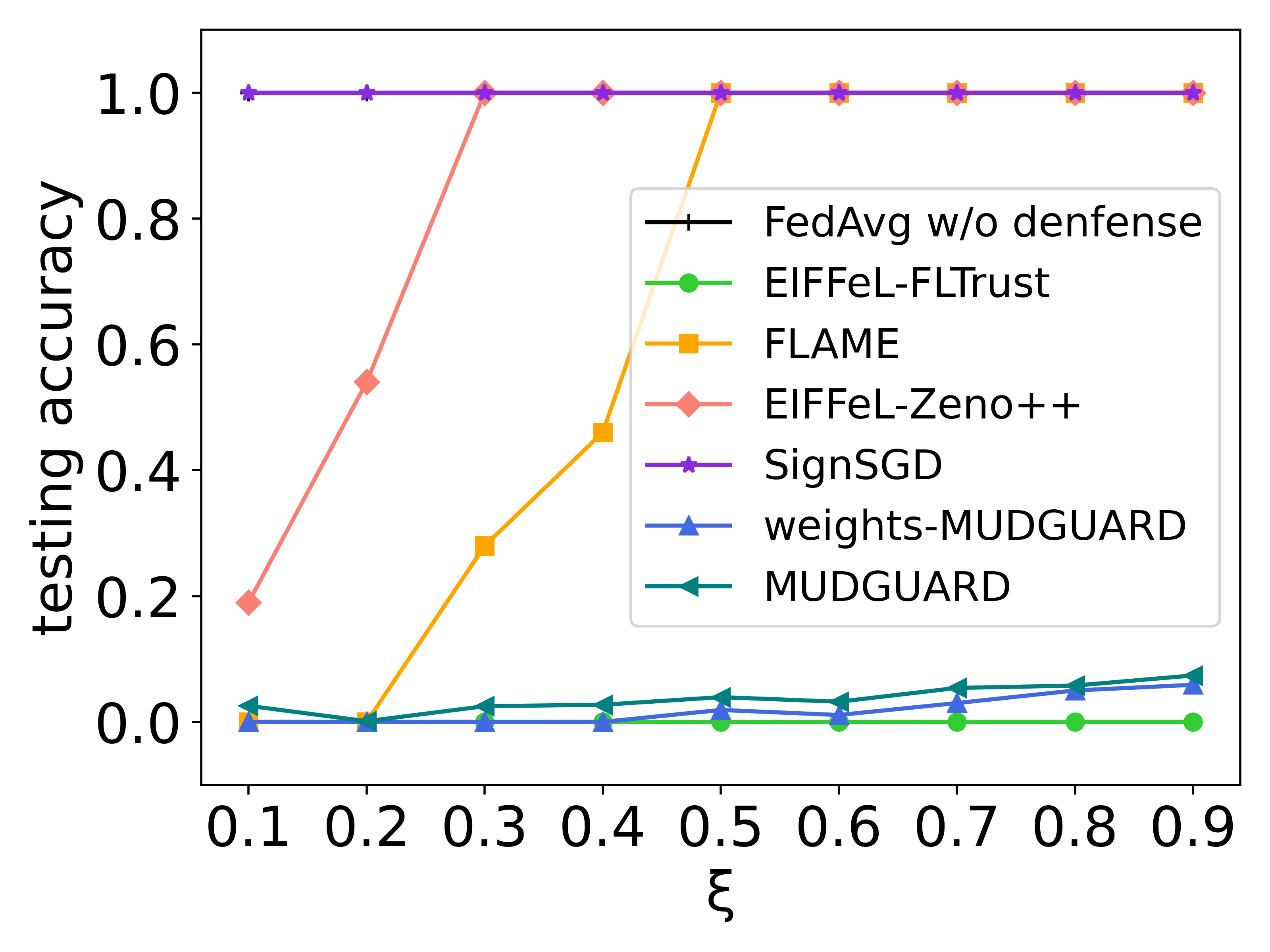}
        \caption{Edge-case Attack}
    \end{subfigure}    
    \caption{Comparison with Byzantine-robust methods by $\xi=0.1 - 0.9$.}
    \label{fig:comp}
\end{figure}

\subsection{Evaluation on Overheads}
\label{sec:overheads}
\begin{table*}[!ht]
\centering
\scalebox{0.9}{
\begin{tabular}{@{}c|cccccccc@{}}
\toprule
\multirow{2}{*}{Threat Model} & \multicolumn{4}{c}{Server-side}                                          & \multicolumn{4}{c}{Client-side}                                          \\ \cmidrule(l){2-5}\cmidrule(l){6-9} 
                              & \multicolumn{2}{c}{Semi-honest} & \multicolumn{2}{c}{Malicious Minority} & \multicolumn{2}{c}{Semi-honest} & \multicolumn{2}{c}{Malicious Majority}\\ \cmidrule(l){2-3}\cmidrule(l){4-5} \cmidrule(l){6-7}\cmidrule(l){8-9}  
Traning Model                 & LeNet         & ResNet-18       & LeNet             & ResNet-18          & LeNet        & ResNet-18        & LeNet            & ResNet-18           \\ \midrule
Runtime(Second)               & 0.43$\pm$0.07/1.3$\pm$0.12      & 1.28$\pm$0.25/3.15$\pm$0.31        & 4.54$\pm$0.84/14.41$\pm$1.49         & 24.03$\pm$2.83/70.74$\pm$2.83         & 14.33$\pm$0.74         & 23.71$\pm$3.45             & 14.56$\pm$1.61              & 23.93 $\pm$4.31               \\
Communication Costs (MB)      & 16.20/53.46         & 34.82/314.45         & 873.23/2776.38          & 5151.18/15572.68         & 16.34         & 758.48             & 16.34            & 758.48               \\ \bottomrule
\end{tabular}}
\caption{Comparison of overheads among different threat models over LeNet $\&$ ResNet-18. The results on the server side are in the form of "optimized/unoptimized".}
\label{tab:overheads}
\end{table*}
We conduct overheads assessment together with the evaluation of accuracy. 
The overheads presented in Table~\ref{tab:overheads} capture the runtime and communication costs incurred by the implemented cryptographic tools on the server side and model training on the client side.
Recall that we propose an optimization in  Figure~\ref{fig:optimization} (Section~\ref{sec:MUDGUARD}). 
We present the average overheads of each round of training so as to illustrate the optimized and unoptimized results in terms of different training models and honest/malicious contexts on the server side.
We use LeNet and ResNet as models, and the overheads are related to their dimensionality (instead of the training data). 

\noindent\textbf{Runtime.}
In general, we see that providing robustness in the malicious context should require more runtime than in the semi-honest.
This is because extra operations for verification are taken, {e.g., using HHF to verify whether a received aggregation is correct}.
The unoptimized ResNet-18 takes 3.15s per round in the semi-honest context while costing 70.74s (approx. an increase of 22 times) in the malicious minority. 
ResNet-18 has more model parameters than LeNet, leading to extra computational operations on cryptographic tools, which can be seen, in the malicious context, 70.74s v.s. 14.41s. 
By binary SS and polynomial transformation,  Table~\ref{tab:overheads} shows that the runtime of LeNet and ResNet-18 are reduced by 68.75\% (4.54s) and 66.05\% (24.03s), respectively, under malicious-minority servers.

\noindent\textbf{Communication costs.}
Similar to runtime, malicious-minority servers consume a considerable amount of communication cost compared to semi-honest ones. 
Table~\ref{tab:overheads} shows that after optimization, the communication costs drop to 33\% in the malicious minority and 10.83\% in the semi-honest with ResNet-18. 
In the worse case, we consume 15,572.68MB  bandwidth per round under malicious minority, but we optimize the cost to 5,151.18MB.
In the semi-honest context, LeNet achieves the best performance, requiring 16MB with optimization, which is 30.19\% of the unoptimized cost (53.46MB).

Under the same contexts, we present the overheads of the client side for FL training in Table~\ref{tab:overheads}. Note the use of advanced FL techniques, such as those outlined in~\cite{pmlr-v119-hamer20a,NEURIPS2020_e32cc80b}, can be employed to enhance computing and communication efficiency in \texttt{MUDGUARD}. 
Since applying those is straightforward, we will not go into further detail on this matter.



\section{Conclusion}
\label{sec:conclusions}

We propose a novel Byzantine-robust and privacy-preserving FL system.
To defend against malicious majority clients, we introduce \emph{Model Segmentation}.
Leveraging cryptographic tools and DP, our design enables training to be performed correctly without breaching privacy.
Our experimental results demonstrate that the proposed protocol effectively deals with various malicious settings and outperforms most existing solutions.
In addition to theoretical and empirical analysis, we also provide extensive discussions on model accuracy, advantages of \texttt{MUDGUARD}, and its limitations. Due to page limits, please refer to Appendix~\ref{sec:otherdis}.


\bibliographystyle{IEEEtranS}
\bibliography{bare_conf_NDSS2024}
\appendix
\subsection{Notation}
\label{sec:notation}

The frequently used notations are in Table~\ref{tab:notations}.
\begin{table}[t]
\centering
\begin{tabular}{@{}c|l@{}}
\toprule
 \textbf{Notation}&  \textbf{Description}  \\ \midrule
 $\gradients_{t}^{i}$ & gradients of $i$-th client at $t$-th round \\ 
 $\weights_{t}^{i}$ & weights of $i$-th client at $t$-th round  \\ 
 $T$ & the number of rounds \\  
 $n$&the number of clients  \\  
 $S$ & the number of servers \\  
 $m$ & the number of malicious clients \\  
 $k_i$ & the number of data instances of $i$-th client \\  
 $c$ & the number of clusters \\  
 $l$ & the cluster labels \\  
 $E$ & the number of epochs \\  
 $\mathcal{D}_i$ & the dataset of $i$-th client\\  
 $\eta$ & learning rate \\  
 $\aggradient^z$ & the aggregation of gradients of $z$-th cluster\\  
 $[n]$ & a set of numbers ranging from 1 to n \\  
 $[\![\cdot]\!]$ & secret shared format  \\  
 $\mathsf{CosM}$ & pairwise adjusted cosine similarity  matrix \\  
 $\mathsf{EudM}$ & pairwise $L_2$ distance matrix \\  
 $\mathsf{IndM}$ & indicator matrix\\  
 $\delta, \phi$ & secret keys of homomorphic hash function\\  
 $\Delta, \sigma, \epsilon$ & parameters of differential privacy \\  
 $\mathcal{N}$ & Gaussian noise \\  
 $\alpha$ & density parameter \\  
 $\encoding(\cdot)$ & encoding algorithm\\  
 $\decoding(\cdot)$ & decoding algorithm \\  
 \bottomrule
\end{tabular}
\caption{Notation summary}
\label{tab:notations}
\end{table}
\subsection{Implementation Algorithms}
\label{sec:IA}

In the evaluation, we implement the proposed \texttt{MUDGUARD} (with optimization) mainly based on Algorithms~\ref{al:fl} and~\ref{al:mpc}.  

\begin{algorithm}[!ht]
\small
\SetAlgoLined
\SetKwInput{KwInput}{Input}
\SetKwInput{KwOutput}{Output}
\DontPrintSemicolon
    \KwInput{
training dataset $\mathcal{D}=\bigcup\limits_{i=1}^n \mathcal{D}_{i}$\\}
    \KwOutput{
    global models $\{\weights^{i} \mid i \in [n] \}$\\
    }
\textbf{ServerAggregation:}\\

Initialize global model $\weights_{0}$\\

\For{each global epoch $t$ = 1,2,$\cdots$,$T$}{
\For{client $i \in n$ \textbf{in parallel}}{
$[\![\gradients^{i}_{t}]\!]\leftarrow$ ClientUpdate$(i, \weights^{i}_{t-1})$
}
$[\![\aggradient_t^j]\!]\leftarrow$\textbf{Algorithm 2}\\
return $[\![\aggradient_t^j]\!], \hash_{\delta, \phi}([\![\aggradient_t^j]\!])$ to clients
}
\textbf{ClientUpdate($i$, $\weights^{i}_{t-1}$):}\\
$\mathcal{B}\leftarrow$(split $\mathcal{D}_{i}$ into batches of size $b$)\\
$\indm\leftarrow$MajorityVote($\{\indm_i\mid i \in [s]\}$)\\
  $l\leftarrow$DBSCAN($\indm$)\\
  reconstruct $\aggradient^{i}_{t-1}$ by $[\![\aggradient^{i}_{t-1}]\!]$ \\
  \eIf{$\prod\hash_{\delta, \phi}(\hat{\gradients}_{t-1}^i)=\hash_{\delta, \phi}(\aggradient_{t-1}^i)$}{accept and continue
  }{refuse and break}
 $\weights_{t}^{i} \leftarrow \weights_{t-1}^{i} - \eta\cdot\signn(\aggradient^{i}_{t-1})$\\
  $\gradients^{i}_{t}\leftarrow$LocalTraining$(\weights_{t}^{i}; batch; loss)$\\
  $\widetilde{\gradients}_t^i\leftarrow \gradients_t^i/\max(1, |\!|\gradients_t^i|\!|_2/\Delta) + \mathcal{N}(0, \Delta^2\sigma^2)$\\
   $\hat{\gradients}_t^i\leftarrow \ks(\widetilde{\gradients}_t^i, \mathcal{N})\cdot \widetilde{\gradients}_t^i$\\
   $\overline{\gradients}_t^i\leftarrow\encoding(\signn(\hat{\gradients}_t^i))$\\
  send $[\![\overline{\gradients}_t^i]\!]$ to servers \\
  broadcast $\hash_{\delta, \phi}(\signn(\hat{\gradients}_t^i))$
\caption{\texttt{MUDGUARD}.}
\label{al:fl}
\end{algorithm}

\begin{algorithm}[!ht]
\small
\SetAlgoLined
\SetKwInput{KwInput}{Input}
\SetKwInput{KwOutput}{Output}
\DontPrintSemicolon
    \KwInput{
shares of gradients: $\{[\![\overline{\gradients}_{t}^{i}]\!]\mid i\in [n]\}$}
    \KwOutput{
    shares of aggregation $\{[\![\aggradient_{t}^{z}]\!]\mid z\in c\}$ \\
    }

\For{each $i,j\in n$}{
 $[\![dot\_product_{ij}]\!]\leftarrow[\![\overline{\gradients}_{t}^{i}]\!]\xor [\![\overline{\gradients}_{t}^{j}]\!]$\\
 convert binary sharing ($[\![dot\_product_{ij}]\!], [\![\overline{\gradients}_t^i]\!]$) to arithmetic sharing by B2A\\
 $[\![\cosm_{ij}]\!]\leftarrow1-\frac{2}{np}\sum[\![dot\_product_{ij}]\!]$
}
\For{each $i,j\in n$}{
 $[\![x_{ij}]\!]\leftarrow([\![\cosm_{i}]\!]-[\![\cosm_{j}]\!])^2$\\
 $[\![\eucm_{ij}]\!]\leftarrow1+\frac{[\![x_{ij}]\!]-1}{2}-\frac{([\![x_{ij}]\!]-1)^2}{8}+\frac{([\![x_{ij}]\!]-1)^3}{16}$\\
 \eIf{$\eucm_{ij}\leq\alpha$}{$[\![\indm_{ij}]\!]==[\![1]\!]$}{$[\![\indm_{ij}]\!]==[\![0]\!]$}
}
reconstruct $\indm$\\
each server broadcasts $\indm$\\
$l \leftarrow$ DBSCAN($\indm$)\\
\For{each $z \in c$ \textbf{all servers in parallel}}{
$[\![\aggradient_{t}^z]\!]\leftarrow\sum_{l_i= z}\decoding([\![\overline{\gradients}_{t}^i]\!]), i\in n$\\
return $[\![\aggradient_{t}^z]\!]$ to clients $\{i\mid l_i=z\}$
}
\caption{Secure clustering.}
\label{al:mpc}
\end{algorithm}
\subsection{Tools}
\label{sec:background}

\subsubsection{Federated Learning}

Federated Learning (FL) enables $n$ clients to train a global model $\weights$ collaboratively without revealing local datasets. Unlike centralized learning,  
FL requires clients to upload the weights of local models ($\{\weights^i \mid i \in n\}$) to a parametric server. 
It aims to optimize a loss function:
\begin{math}
    \mathop {\arg\min }\limits_\weights\sum\limits_{i = 1}^n {\frac{{{k_i}}}{K}{\mathcal{L}_i}(} \weights,\mathcal{D}_i ),
    \notag
\end{math}
where ${\mathcal{L}_i}(\cdot)$ and $k_i$ are the loss function and local data size of $i$-th client.
At $t$-th round, the training of FL can usually be divided into the following steps.\\
$\bullet$ \textit{Global model download}: The server selects partial clients engaging in training. All connected clients download the global model $\weights_{t-1}$ from the server.\\
$\bullet$ \textit{Local training}: Each client updates its local model by training with its own dataset: $\gradients_{t-1}^i \leftarrow \frac{\partial \mathcal{L}(\weights_{t-1}, \mathcal{D}_i)}{\partial \weights_{t-1}} $.\\
$\bullet$ \textit{Aggregation}: After the local updates $\{\gradients^i_{t-1} \mid i \in n\}$ are uploaded, the server updates the global model by aggregation: $\weights_{t} \leftarrow \weights_{t-1}- \eta\sum\limits_{i = 1}^n \frac{{{k_i}}}{K} \gradients_{t-1}^i$, where $\eta$ refers to the learning rate.

\subsubsection{DBSCAN}

Unlike traditional clustering algorithms (e.g., k-means, k-means++, bi-kmeans), which need to pre-define the number of clusters, Density-Based Spatial Clustering of Applications with Noise (DBSCAN)~\cite{ester1996density} is proposed to cluster data points dynamically.  
Based on density-based clustering, DBSCAN guarantees that clusters of any shape can always be identified.
Besides, it can recognize noise points effectively.
Basically, after setting the density parameter ($\alpha$) and the minimum cluster size ($mPts$), DBSCAN can conduct effective clustering. 
%
{We note HDBSCAN~\cite{campello2013density} could be also used for clustering. 
Its main difference from DBSCAN is the multiple densities clustering.
In this work, we assume that malicious clients may only conduct one kind of attack during the whole training, e.g., a group of malicious clients conducting a Label Flipping Attack together.
The malicious updates could only derive one density. 
Like~\cite{nguyen2022flame}, we may apply HDBSCAN in the clustering.  
But DBSCAN, in general, requires less computational complexity than HDBSCAN in algorithmic constructions. 
And further, we will conduct the clustering with cryptographic operations. 
Considering efficiency, we choose DBSCAN over HDBSCAN.} 
%
%


\subsubsection{Cryptographic Tools}

The secure Multiparty Computation (MPC) framework aims to enable multiple parties to evaluate a function over ciphertexts securely. 
The parties conducting MPC can access inputs via protection approaches, e.g., in a secret-shared format.
It does not leak any information besides the final output unless these shares are combined to derive plaintexts. 

\noindent \textbf{Secret Sharing (SS).}
It refers to a type of tool for splitting a secret among multiple parties, each of whom is assigned a share of the secret. 
The security of an SS scheme guarantees that one can distinguish shares and randoms with a negligible probability. 
Apart from that, no one can reconstruct the secret unless holding all (or a subset) of shares.
Let us consider Shamir Secret Sharing (SSS)~\cite{shamir1979share} $(t, n)$-threshold scheme as an example. Assume one chooses a polynomial $f(x)=\sum_{i=0}^{t-1}a_{i}\cdot x^i$ over $\mathbb{Z}_q$ and a secret $a_0=f(0)$. 
The secret can be split into $n$ shares by randomly selecting $n$ values: $\{r_j\leftarrow \mathbb{Z}_q^* \mid j\in n\}$, and then calculating shares $\{f(r_j)\mid j\in n\}$.
Given a subset of any $t$ out of $n$ shares, the secret can by reconstructed by Lagrange interpolation~\cite{berrut2004barycentric}:$f(0)=\sum_{j=0}^{t-1}f(r_j)\cdot \prod_{z=0, z\neq j}^{t-1}\frac{r_{z}}{r_{z}-r_{j}}$.
Except for SSS, other schemes like additive SS and replicated SS are used in the MPC framework~\cite{keller2020mp, damgaard2013practical}. 
Note these schemes have a linear property. 
Even if each party performs linear combinations locally with shares, the combined secret matches the result obtained by these linear calculations.
This saves significant communication costs in the FL context, where servers are only required to aggregate shares of gradients.\\
\noindent\textbf{Homomorphic Hash Functions (HHF).}
Given a message $x \in \mathbb{Z}_q$, a collision-resistant HHF~\cite{kim2012device} \hash: $\mathbb{G}_1 \times \mathbb{G}_2 \leftarrow \mathbb{Z}_q$ can be indicated as $\hash(x)=(g^{\hash^{'}_{\delta, \phi}(x)}, h^{\hash^{'}_{\delta, \phi}(x)})$, where $\delta$ and $\phi$ are secret keys randomly and independently selected from $\mathbb{Z}_q$. $\hash^{'}$ is a hash function, and $\mathbb{G}_1 $ and $\mathbb{G}_2$ are two different groups. 
Similar to other one-way hash functions, the security of the HHF requires that one can find a collision with a negligible probability. 
Based on additive homomorphism: $\hash(x_1+x_2) \leftarrow (g^{\hash^{'}_{\delta, \phi}(x_1) + \hash^{'}_{\delta, \phi}(x_2)}, h^{\hash^{'}_{\delta, \phi}(x_1) + \hash^{'}_{\delta, \phi}(x_2)})$, in this work, we will use this tool as a verification of the correctness of aggregation. 

\noindent\textbf{Homomorphic Encryption (HE).}
This tool is an interesting privacy-preserving technology enabling users to evaluate polynomial computations on ciphertexts without revealing underlying plaintexts. 
An encryption scheme is called partial HE if it only supports addition~\cite{paillier1999public} or multiplication~\cite{elgamal1985public}, while fully HE~\cite{gentry2009fully} can support both. 
An HE scheme usually includes the following steps. \\
$\bullet$ \textit{Key Generation:} $(\pk ,\sk)$ $\leftarrow \kgen$ ($1^\lambda$), where based on security level parameter $\lambda$, public key $\pk$ and secret key $\sk$ can be generated.\\
$\bullet$ \textit{Encryption:} $(c_1, c_2) \leftarrow \enc(\pk, m_1, m_2)$. By using $\pk$, the probabilistic algorithm encrypts messages $m_1, m_2$ to ciphertexts $c_1, c_2$.\\
$\bullet$ \textit{Homomorphic evaluation:} $\eval(c_1, c_2)=c_1\circ c_2=\enc(\pk,m_1)\circ \enc(\pk, m_2)=\enc(\pk, m_1\circ m_2)$, where $\circ$ refers to an operator, e.g., addition or multiplication. \\
$\bullet$ \textit{Decryption:} $m_1\circ m_2\leftarrow \dec(\sk, \enc(\pk, m_1\circ m_2))$. Using $\sk$, the operational results of $m_1$ and $m_2$ can be derived.

\noindent\textbf{Oblivious Transfer (OT).} OT~\cite{rabin2005exchange} is one of the crucial building blocks for MPC. 
In an OT protocol (involving two parties), a sender holds $n$ different strings $s_i, i=1\cdots n,$ and a receiver has an index ($ind$) and wants to learn $s_{ind}$. 
At the end of the protocol, the receiver cannot get information about strings rather than $s_{ind}$, while the sender learns nothing about $ind$ selected by the receiver. 
For example, 
a 1-out-of-2 OT protocol only inputs two strings and a 1-bit index. 

\noindent{\textbf{Garbled Circuits (GC)~\cite{bellare2012foundations}.}} The protocol is run between two parties called the garbler and evaluator. The garbler generates the GC corresponding to the Boolean circuit to be evaluated securely by associating two random keys per wire representing the bit values {0, 1}. The garbler then sends the GC together with the keys for the inputs to the evaluator. The evaluator obliviously obtains the keys for his inputs via OT and evaluates the circuit to obtain the output key. Finally, the evaluator maps the output key to the real output.

\subsubsection{Differential Privacy}

Differential Privacy (DP)~\cite{dwork2008differential} is a data protection approach enabling one to publish statistical information of datasets while keeping individual data private. 
The security of DP requires that adversaries cannot statistically distinguish the changes between two datasets where an arbitrary data point is different. 
The most widely used DP mechanism is called $(\epsilon, \delta)$-DP defined  below, requiring less injection noise than the $\epsilon$-DP but standing at the same privacy level. 
\begin{definition}[$(\epsilon, \delta)$ - Differential Privacy~\cite{dwork2008differential}]
 Given two real positive numbers $(\epsilon, \delta)$ and a randomized algorithm $\mathcal{A}$: $\mathcal{D}^{n} \rightarrow \mathcal{Y}$, the algorithm $\mathcal{A}$ provides $(\epsilon, \delta)$ - DP if for all data sets $\mathbf{D}, \mathbf{D}^{'} \in \mathcal{D}^{n}$ differing in only one data sample, and all  $\mathcal{S} \subseteq \mathcal{Y}$:
 $   Pr[\mathcal{A}(\mathbf{D})\in \mathcal{S}] \leq exp(\epsilon)\cdot Pr[\mathcal{A}(\mathbf{D}^{'})\in \mathcal{S}] + \delta.\notag$
\label{dp}
\end{definition}

Note that the Gaussian noise $\mathcal{N}\sim N(0, \Delta^{2}\sigma^{2})$ should be added to the output of the algorithm, where $\Delta$ is $L_{2}$ sensitivity of $\mathbf{D}$ and $\sigma$=$\sqrt{2\ln(1.25/\delta)}$~\cite{abadi2016deep}. 
The robustness of post-processing guarantees for any probabilistic/deterministic functions $\mathcal{F}$, if $\mathcal{A}$ satisfies $(\epsilon, \delta)$-DP, so does $\mathcal{F(A)}$.

\subsection{Complexity Analysis}\label{sec:complexityana}
We use $d$ to denote the dimension of the model. $n$, $S$ and $c$ refer to the number of clients, servers and clusters, respectively.\\
$\bullet$ \textbf{Computation cost.} Each client's computation cost can be computed as binary SS with Tiny OT $-$ $O(d)$. The server's computation cost consists of 4 parts: (1) computing pairwise XOR $-$ $O(n^2)$; (2) bit to arithmetic conversion $-$ $O(d)$; (3) multiplication for $L_2$ distance $-$ $O(n^3)$; (4) comparison with $\alpha$ $-$ $O(n^2)$; (5) calculating results of HHF based on the number of clusters $c$ $-$ $O(nc)$. Thus, the total computation complexity of each server is $O(n^{3})$.\\
$\bullet$ \textbf{Communication cost.} For a client in \textit{MUDGUARD}, the communication cost can be divided into 2 parts: (1) sending updates to $S$ servers with binary SS and Tiny OT $-$ $O(Sd)$; (2) broadcasting hash results of updates to the rest of parties $-$ $O(n+S)$. Thus, we have communication complexity $-$ $O(Sd+n)$ for each client. The servers' communication costs include (1) receiving correlated randomness and doubly-authenticated bits for converting a boolean shared matrix to arithmetic one $-$ $O(n^{2})$; (2) receiving triples for multiplications $-$ $O(n^{2})$; (3) receiving correlated randomness for element-wise comparison $-$ $O(n^2)$; (4) sending shares and a random bit to other servers for reconstruction $-$ $O(Sn^2)$; (5) sending aggregated shares and values of HHF to all clients $-$ $O(nd)$. Overall, the communication cost for every server is $O(Sn^2)$. For detailed experimental results, refer to Section~\ref{sec:overheads}.
\subsection{Security Analysis}
\label{sec:analysis}

We first define the ideal functionality $ \mathcal{F}_\textsf{MUDGUARD} $ to execute a byzantine-robust privacy-preserving FL, and then show that the proposed protocol $ \Pi_\textsf{MUDGUARD} $ securely realizes the functionality. 
Our security is based on the random oracle model where the homomorphic hash function outputs a uniformly random value for a new query and the same value for a previously answered query, hence we prove the UC security in $ \mathcal{F}_\textsf{RO} $-hybrid model. 
Besides, the security is also based on the existence of a secret sharing protocol where the clients derive shares indistinguishable with randoms which securely realizes $ \mathcal{F}_\textsf{SS} $ (a combination of $ \mathcal{F}_\textsf{triples}, \mathcal{F}_\textsf{share}, \mathcal{F}_\textsf{reconst} $~\cite{furukawa2017high}), and a bit-to-arithmetic conversion protocol that securely realizes $ \mathcal{F}_\textsf{B2A} $ (noted as $ \mathcal{F}_\textsf{PREP} $ in~\cite{cryptoeprint:2019:207}). 

\begin{framed}
\label{fshare}
\begin{center}
	\textbf{Ideal Functionality $ \mathcal{F}_\textsf{share} $} 
\end{center}
The functionality $ \mathcal{F}_\textsf{share} $ interacts with a dealer party $ \textsf{P}_j $, and a corrupted party $ \textsf{P}_i $. 

Upon receiving $ (t_i, s_i) $ from the corrupted party $ \textsf{P}_i $, and receiving $ v $ from the dealer $ \textsf{P}_j $, the functionality $ \mathcal{F}_\textsf{share} $ computes $ (t_{j+1}, s_{j+1}) $ and $ (t_{j+2}, s_{j+2}) $ from $ (t_i, s_i) $ and $ v $, and sends the honest $ \textsf{P}_{i-1} $ and $ \textsf{P}_{i+1} $ their respective shares. 

\end{framed}

\begin{framed}
\label{freconst}
\begin{center}
	\textbf{Ideal Functionality $ \mathcal{F}_\textsf{reconst} $} 
\end{center}

The functionality $ \mathcal{F}_\textsf{reconst} $ interacts with an adversary \textsf{Sim} and a corrupted party $ \textsf{P}_i $, and receives information from $ \textsf{P}_{i+1} $ and $ \textsf{P}_{i+2} $. 

Upon receiving $ (t_{i+1}, s_{i+1}, j) $ from $ \textsf{P}_{i+1} $ and $ (t_{i+2}, s_{i+2}, j) $ from $ \textsf{P}_{i+2} $, $ \mathcal{F}_\textsf{reconst} $ computes $ v=s_{i+2}\oplus t_{i+1} $ and sends $ v $ to $ \textsf{P}_j $. 
In addition, the functionality $ \mathcal{F}_\textsf{reconst} $ sends $ (t_i, s_i) $ to the adversary \textsf{Sim}, where $ (t_i, s_i) $ is $ \textsf{P}_i $'s share as defined by the shares received from the honest parties. 

\end{framed}

\begin{framed}
\label{ftriples}

\begin{center}
	\textbf{Ideal Functionality $ \mathcal{F}_\textsf{triples} $} 
\end{center}

The functionality $ \mathcal{F}_\textsf{triples} $ interacts with a corrupted party $ \textsf{P}_i $, and receive information from $ \textsf{P}_1, \textsf{P}_2, \textsf{P}_3 $. 

Upon receiving $ N $ triples of pairs $ \{(t_{a_i}^j, s_{a_i}^j), (t_{b_i}^j, s_{b_i}^j), (t_{c_i}^j, s_{c_i}^j)\}_{j=1}^N $ from $ \textsf{P}_i $, the functionality first $ \mathcal{F}_\textsf{triples} $ chooses random $ a_j,b_j\in\{0, 1\} $ and computes $ a_j b_j $, and then defines a vector of sharings $ \textbf{\textit{d}}=([a_j],[b_j],[c_j]) $, for $ j=1, ..., N $. 
The sharings are computed from $ [(t_{a_i}^j, s_{a_i}^j), (t_{b_i}^j, s_{b_i}^j), (t_{c_i}^j, s_{c_i}^j)] $ provided by $ \textsf{P}_i $ and the chosen $ a_j, b_j, c_j $. 
Next, $ \mathcal{F}_\textsf{triples} $ sends the generated shares to each corresponding party. 
\end{framed}

\begin{framed}
\label{fprep}
\begin{center}
	\textbf{Ideal Functionality $ \mathcal{F}_\textsf{Prep}$ ($ \mathcal{F}_\textsf{B2A} $)} 
\end{center}

Independent copies of $ \mathcal{F}_\textsf{MPC} $ are identified via session identifiers \textsf{sid}. 
For each instance, $ \mathcal{F}_\textsf{Prep} $ maintains a dictionary $ \textsf{Dic}_\textsf{sid} $. 
If a party provides input with an invalid \textsf{sid}, the $ \mathcal{F}_\textsf{Prep} $ outputs \textbf{reject} to all parties and await another message. 

Upon receiving (\textbf{Init}, $ \mathbb{F} $, \textsf{sid}) from all parties, initialize a new database of secrets $ \textsf{Dic}_\textsf{sid} $ indexed by a set $ \textsf{Dic}_\textsf{sid}.\textsf{Keys} $ and store the field $ \mathbb{F}  $as $ \textsf{Dic}_\textsf{sid}.\textsf{Field} $,  if \textsf{sid} is a new session identifier. 
Set the flag $ \textsf{Abort}_\textsf{sid}=\textsc{false} $. 

Upon receiving (\textbf{Input}, $ i $, \textsf{id}, $ x $, \textsf{sid}) from a party $ \textsf{P}_i $ and (\textbf{Input}, $ i $, \textsf{id}, $ \perp $, \textsf{sid}) from all other parties, if $ \textsf{id}\notin \textsf{Dic}_\textsf{sid}.\textsf{Keys} $ then insert it and set $ \textsf{Dic}_\textsf{sid}[\textsf{id}]=x $. 
Then execute the procedure \textbf{Wait}. 

Upon receiving $ (\textbf{Add}, \textsf{id}_x, \textsf{id}_y, \textsf{id}, \textsf{sid}) $, set $ \textsf{Dic}_\textsf{sid}[\textsf{id}]=\textsf{Dic}_\textsf{sid}[\textsf{id}_x]+\textsf{Dic}_\textsf{sid}[\textsf{id}_y] $ if $ \textsf{id}_x, \textsf{id}_y\in \textsf{Dic}_\textsf{sid}.\textsf{Keys} $. 

Upon receiving $ (\textbf{Mult}, \textsf{id}_x, \textsf{id}_y, \textsf{id}, \textsf{sid}) $, set $ \textsf{Dic}_\textsf{sid}[\textsf{id}]=\textsf{Dic}_\textsf{sid}[\textsf{id}_x]\cdot \textsf{Dic}_\textsf{sid}[\textsf{id}_y] $, if $ \textsf{id}_x, \textsf{id}_y \in \textsf{Dic}_\textsf{sid}.\textsf{Keys} $. 
Then execute the procedure \textbf{Wait}. 

Upon receiving $ (\textbf{RanEle}, \textsf{id}, \textsf{sid}) $, set $ \textsf{Dic}_\textsf{sid}[\textsf{id}] $ to a random element in $ \textsf{Dic}_\textsf{sid}.\textsf{Field} $, if $ \textsf{id}\notin \textsf{Dic}_\textsf{sid}.\textsf{Keys} $. 
Then execute the procedure \textbf{Wait}. 

Upon receiving $ (\textbf{RanBit}, \textsf{id}, \textsf{sid}) $, set $ \textsf{Dic}_\textsf{sid}[\textsf{id}] $ to a random bit if $ \textsf{id}\notin \textsf{Dic}_\textsf{sid}.\textsf{Keys} $. 
Then execute the procedure \textbf{Wait}. 

Upon receiving $ (\textbf{Open}, $ i $, \textsf{id}, \textsf{sid}) $ from all parties, if $ \textsf{id}\in \textsf{Dic}_\textsf{sid}.\textsf{Keys} $: 
1) if $ i=0 $, send $ \textsf{Dic}_\textsf{sid}[\textsf{id}] $ to the adversary and executes \textbf{Wait}. If the answer is $ (\textbf{OK}, \textsf{sid}) $, await an error $ \epsilon $ from the adversary. Send $ \textsf{Dic}_\textsf{sid}[\textsf{id}]+\epsilon $ to all honest parties. 
If $ \epsilon\neq 0 $, set the flag $ \textbf{Abort}_\textsf{sid}=\textsc{true} $. 
2) if $ i\in A $, then send $ \textsf{Dic}_\textsf{sid}[\textsf{id}] $ to the adversary. 
Then execute \textbf{Wait}. 
3) if $ i\in[n]\backslash A $, execute \textbf{Wait}. If not already halted, then await an error $ \epsilon $ from the adversary. 
Send $ \textsf{Dic}_\textsf{sid}[\textsf{id}]+\epsilon $ to party $ \textsf{P}_i $. 
If $ \epsilon\neq 0 $, set the flag $ \textbf{Abort}_\textsf{sid}=\textsc{true} $. 

Upon receiving $ (\textbf{Check}, \textsf{sid}) $ from all parties, execute the procedure \textbf{Wait}. If not already halted and $ \textbf{Abort}_\textsf{sid}=\textsc{true} $, send $ (\textbf{Abort}, \textsf{sid}) $ to the adversary and all honest parties, and ignore further messages to $ \mathcal{F}_\textsf{MPC} $ with the same \textsf{sid}. 
Otherwise, send $ (\textbf{OK}, \textsf{sid}) $ and continue. 

Upon receiving $ (\textbf{daBits}, \textsf{id}_1, ..., \textsf{id}_l, \textsf{sid}_1, \textsf{sid}_2) $ from all parties where $ \textsf{id}_i\notin \textsf{Dic}_\textsf{sid}.\textsf{Keys} $ for all $ i\in[l] $, await a message \textbf{OK} or \textbf{Abort} from the adversary. If \textbf{OK} is received, sample a set of random bit $ \{b_j\}_j\in[l] $, and for each $ j\in[l] $ set $ \textsf{Dic}_{\textsf{sid}_1}[\textsf{id}_j]=b_j $ and $ \textsf{Dic}_{\textsf{sid}_2}[\textsf{id}_j]=b_j $, and insert the set $ \{\textsf{id}_i\}_{i\in[l]} $ into $ \textsf{Dic}_{\textsf{sid}_1}.\textsf{Keys} $ and $ \textsf{Dic}_{\textsf{sid}_2}.\textsf{Keys} $. 
Otherwise, send $ (\textbf{Abort}, \textsf{sid}_1) $ and $ (\textbf{Abort}, \textsf{sid}_2) $ to the adversary and all honest parties, and ignore all further messages to $ \mathcal{F}_\textsf{MPC}) $ with the same $ \textsf{sid}_1 $ and $ \textsf{sid}_2 $. 

Procedure \textbf{Wait}: 
Await a message $ (\textbf{OK}, \textsf{sid}) $ or $ (\textbf{Abort}, \textsf{sid}) $ from the adversary. 
If \textbf{OK} is received, then continue. 
Otherwise, send $ (\textbf{Abort}, \textsf{sid}) $ to all honest parties, and ignore all further messages to $ \mathcal{F}_\textsf{MPC} $ with the same $ \textsf{sid} $. 
	
\end{framed}

\begin{remark}
	We use $ \mathcal{F}_\textsf{share} $ as the secret share generating algorithm, $ \mathcal{F}_\textsf{reconst} $ as the reconstructing algorithm, $ \mathcal{F}_\textsf{triples} $ as the secret share multiplication algorithm, and $ \mathcal{F}_\textsf{B2A} $ (see $\mathcal{F}_\textsf{PREP}$ in~\cite{cryptoeprint:2019:207}) as the bit to arithmetic conversion algorithm. 
\end{remark}

We formally define $ \mathcal{F}_\textsf{\textit{MUDGUARD}} $ as follows. 

\begin{framed}
	\label{fMUDGUARD}
	\begin{center}
		\textbf{Ideal Functionality $ \mathcal{F}_\textsf{\textit{MUDGUARD}} $}
	\end{center}
	The functionality $ \mathcal{F}_\textsf{\textit{MUDGUARD}} $ is parameterized with a DBSCAN algorithm with corresponding parameters, a local training SGD algorithm with appropriate variables, Gauss noise parameters $ \Delta $ and $ \sigma $, and the density parameter $ \alpha $.  
	The functionality $ \mathcal{F}_\textsf{\textit{MUDGUARD}} $ interacts with $ n $ clients $ \textsf{P}_1, ..., \textsf{P}_n $, $ s $ remote servers $ \textsf{S}_1, ..., \textsf{S}_s $, and an ideal adversary \textsf{Sim}. 
	
	Upon receiving $ (\textbf{Init}, \{w_0^i\}_{i\in [n]}, \{\aggradient_0^i\}_{i\in [n]}) $ from the adversary \textsf{Sim}, send $ (\textbf{Init}, w_0^i, \aggradient_0^i) $ to each $ \textsf{P}_i $. 
	
	Upon receiving $ (\textbf{Update}, t, w_{t-1}^{i}, \mathcal{D}_i) $ from each honest client $ \textsf{P}_i $, calculate $ w_{t}^i $, $ \gradients_t^i $, $ \tilde{\gradients}_t^i $, $ \hat{\gradients}_t^i $, $ \bar{\gradients}_t^i $, and store $ (t, [\![\bar{\gradients}_t^i]\!]) $ for each server, and notify \textsf{Sim} with $ (\textbf{Update}, t, \textsf{P}_i) $.
	If $ t = T $, terminate the protocol. 
	Later, when \textsf{Sim} replies with $ (\textbf{Update-data}, t) $, send $ (\textbf{Update}, t) $ to each server $ \textsf{S}_j $ for each $ j\in[s] $. Upon receiving $ (\textbf{Update}, t, \{[\![\bar{\gradients'}_t^i]\!]\}_{i\in\mathcal{I}}) $ from \textsf{Sim} for all corrupted client index $ i $, where $ \mathcal{I}\subset [n] $, store $ (t, [\![\bar{\gradients'}_t^i]\!]) $ for each honest server. 
	
	%
	%
	
	Upon receiving $ (\textbf{Update}, t) $ from a server $ \textsf{S}_j $, if $ (t, [\![\bar{\gradients}_t^i]\!]) $ is stored for each $ i\in[n] $ and for each server, calculate \textsf{IndM} and $ \{[\![\aggradient_{t}^z]\!]\}_{z\in[c]} $ for each server. 
	Then send $ (\textbf{Model}, t, \textsf{IndM}, \{[\![\aggradient_{t}^z]\!]\}_{z\in[c]}) $ to each server. 
	If $ \textsf{S}_j $ is honest, upon receiving $ (\textbf{Update-model}, t) $ from the simulator \textsf{Sim}, send $ (\textbf{Update-model}, t, [\![\aggradient_{t}^z]\!]) $ to the corresponding client. 
	Otherwise, upon receiving $ (\textbf{Update-model}, t, \{[\![\aggradient_{t}^{'z}]\!]\}) $ from the simulator \textsf{Sim}, send $ (\textbf{Update-model}, t, [\![\aggradient_{t}^{'z}]\!]) $ to corresponding client. 
	
	Upon receiving $ (\textbf{Abort}) $ from either the adversary or any client, send $ \perp $ to all parties and terminate. 
	
	
\end{framed}

\begin{remark}
	According to the ideal functionality, we could capture not only privacy but also soundness against malicious corruption of servers. 
	However, the differential attack is based on the output of each epoch, which is published roundly and could be obtained legally. 
	Hence, the discussion on differential privacy is not included in this security definition. 
	Detailed proof for differential privacy will be given later. 
	Moreover, soundness against malicious corruption of clients is also not captured by the previous definition since such security is protected by the clustering technique, which is not the concern of cryptography.   
\end{remark}

\begin{definition} [Universally Composable security]
	A protocol $ \Pi $ $ \textsf{UC-realizes} $ ideal functionality $ \mathcal{F} $ if for any PPT adversary $ \mathcal{A} $ there exists a PPT simulator $ \mathcal{S} $ such that, for any PPT environment $ \mathcal{E} $, the ensembles $ \mathsf{EXEC}_{\Pi, \mathcal{A}, \mathcal{E}} $ and $ \mathsf{EXEC}_{\mathsf{IDEAL}_\mathcal{F}, \mathcal{S}, \mathcal{E}} $ are indistinguishable. 
\end{definition}

\begin{definition} [UC security of \texttt{MUDGUARD}]
	A protocol $ \Pi_\textsf{\textit{MUDGUARD}} $ is UC-secure if $ \Pi_\textsf{\textit{MUDGUARD}} $ UC-realizes $ \mathcal{F} $, against malicious-majority clients and malicious-minority servers, considering arbitrary collusion between malicious parties. 
\end{definition}

\begin{theorem} [UC security of \texttt{MUDGUARD}]
	Suppose the existence of a homomorphic hash function in a random oracle model, our protocol is UC-secure in $ (\mathcal{F}_\textsf{RO}, \mathcal{F}_\textsf{SS}, \mathcal{F}_\textsf{B2A}) $-hybrid world. 
\end{theorem}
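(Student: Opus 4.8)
The plan is to exhibit a PPT simulator $\textsf{Sim}$ that, interacting only with the ideal functionality $\mathcal{F}_\textsf{BRIEF}$, reproduces the joint distribution of the real adversary's view together with the honest parties' outputs, and then to argue indistinguishability through a sequence of hybrids that replaces each real subprotocol by its ideal counterpart. Since we work in the $(\mathcal{F}_\textsf{RO}, \mathcal{F}_\textsf{SS}, \mathcal{F}_\textsf{B2A})$-hybrid world, the secret-sharing, bit-to-arithmetic, and random-oracle invocations are already idealized, so $\textsf{Sim}$ has full control over the values these functionalities hand to corrupted parties. First I would split the analysis according to the corruption set the environment declares at setup: (i) a malicious majority of clients, (ii) a malicious minority of servers, and (iii) arbitrary collusion between the two; by composability it suffices to give a single simulator that internally dispatches on this set.

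For the client-corruption case, $\textsf{Sim}$ must generate the messages that honest clients and honest servers would send to the corrupted clients. Because every gradient leaves a client only as the secret-shared $[\![\overline{\gradients}_t^i]\!]$ produced by $\mathcal{F}_\textsf{SS}$ and as a hash $\hash_{\delta,\phi}(\signn(\hat{\gradients}_t^i))$ answered by $\mathcal{F}_\textsf{RO}$, $\textsf{Sim}$ can sample uniformly random shares and program uniformly random oracle responses for the honest clients without knowing their gradients. The linearity guaranteed by $\mathcal{F}_\textsf{SS}$ then lets $\textsf{Sim}$ open exactly the values the protocol opens, namely the reconstructed $\indm$ and the per-cluster aggregation $\aggradient_t^z$ that $\mathcal{F}_\textsf{BRIEF}$ returns, and the random oracle lets it program the verification hash $\hash_{\delta,\phi}(\aggradient_t^z)$ consistently. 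Indistinguishability follows because the only information the corrupted clients see beyond their own inputs is the differentially private per-cluster output they are entitled to, which $\textsf{Sim}$ reads off from $\mathcal{F}_\textsf{BRIEF}$.

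The server-corruption case is the delicate one. Here $\textsf{Sim}$ receives from $\mathcal{F}_\textsf{BRIEF}$ the shares $[\![\overline{\gradients}_t^i]\!]$ destined for the corrupted minority of servers together with the reconstructed matrix $\indm$, and must fabricate a transcript of the feature-extraction, Euclidean-distance, comparison, reconstruction, and model-segmentation steps consistent with these. I would have $\textsf{Sim}$ use the shares supplied by $\mathcal{F}_\textsf{SS}$ for the corrupted servers, complete them to a full honest sharing, simulate the local bit-XOR and the $\mathcal{F}_\textsf{B2A}$ conversion deterministically, and then simulate the multiplications, the polynomial square-root approximation of $[\![\eucm]\!]$, and the secure comparison by invoking the privacy of the comparison subprotocol, so that the corrupted servers' shares of $[\![\eucm]\!]$ carry no information about the true distances beyond what $\indm$ already reveals. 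Soundness against malicious servers, i.e.\ that a corrupted server cannot make an honest client accept a wrong aggregation, is extracted from the collision resistance of the homomorphic hash in the random-oracle model: if a malicious server sends $[\![\aggradient_{t}^{'z}]\!]\neq[\![\aggradient_{t}^{z}]\!]$, the check $\prod\hash_{\delta,\phi}(\signn(\hat{\gradients}_t^i))\stackrel{?}{=}\hash_{\delta,\phi}(\aggradient_{t}^{z})$ fails except with the negligible probability of an $\mathcal{F}_\textsf{RO}$ collision, whereupon $\textsf{Sim}$ forwards an $\textbf{Abort}$ to $\mathcal{F}_\textsf{BRIEF}$.

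The main obstacle I anticipate is arguing joint indistinguishability of the comparison step under collusion: the corrupted servers hold arithmetic shares of $\eucm$ and also learn the opened indicator bits, so the hybrids must show that revealing $\indm$ discloses nothing beyond the thresholded relation $\{\eucm_{ij}\le\alpha\}$ that $\mathcal{F}_\textsf{BRIEF}$ already outputs. This requires sequencing the hybrids so that the masking random bit used to certify the malicious-minority comparison output (the check that the multiplied-and-reconstructed value is a bit) is replaced by a fresh uniform bit \emph{before} the true distances are removed, and invoking the privacy of $\mathcal{F}_\textsf{B2A}$ and $\mathcal{F}_\textsf{SS}$ to decouple the corrupted servers' shares from the honest gradients. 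I would close with a standard walk through hybrids \hybrid{0}, \ldots, \hybrid{k} interpolating between $\realview$ and $\idealview$, bounding each step's advantage by the negligible simulation error of $\mathcal{F}_\textsf{RO}$, $\mathcal{F}_\textsf{SS}$, or $\mathcal{F}_\textsf{B2A}$, and noting, as the accompanying remark states, that differential privacy and soundness against malicious clients lie outside the functionality and are treated separately.
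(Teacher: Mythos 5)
Your proposal is correct and follows essentially the same route as the paper's proof: a simulator constructed by case analysis on the corruption pattern, with honest-client shares and hashes replaced by uniform values justified by $\mathcal{F}_\textsf{SS}$ and $\mathcal{F}_\textsf{RO}$, and with the server-corruption view rebuilt backwards from the functionality's $\indm$ and aggregation shares while invoking the privacy of the secure comparison step to decouple $[\![\eucm]\!]$ from $[\![\indm]\!]$. Your added details (the explicit hybrid sequencing and the soundness-via-hash-collision abort) are elaborations the paper leaves implicit rather than a different argument.
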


\begin{proof}We show the validity of the theorem by proving that the protocol $ \Pi_\textsf{\textit{MUDGUARD}} $ securely realizes $ \mathcal{F} $ in the $ (\mathcal{F}_\textsf{RO}, \mathcal{F}_\textsf{SS}, \mathcal{F}_\textsf{B2A}) $-hybrid world against any corruption pattern. 
	We construct a simulator $ \textsf{Sim} $ for any non-uniform PPT environment $ \mathcal{E} $ such that $ \textsf{EXEC}_{\Pi_\textsf{\textit{MUDGUARD}}, \mathcal{A}, \mathcal{E}}^{\mathcal{F}_\textsf{RO}, \mathcal{F}_\textsf{SS}, \mathcal{F}_\textsf{B2A}}\approx\textsf{EXEC}_{\mathcal{F}_\textsf{\textit{MUDGUARD}}, \textsf{Sim}, \mathcal{Z}} $. 
	The \textsf{Sim} is constructed as follows. 
	
	It writes on $ \mathcal{A} $'s input tape upon receiving an input value from $ \mathcal{E} $, as if coming from $\mathcal{E}$, and writes on $ \mathcal{Z} $'s output tape upon receiving an output value from $ \mathcal{A} $, as if from $ \mathcal{A} $. 
	
	\noindent\textit{Case 1: If all clients and servers are not corrupted. }
	Since we assume private channels between client and server, \textsf{Sim} could just simply randomly choose all intermediate values. 
	There is no distinguisher who could tell the difference between random values and real transcripts. 
	
	\noindent\textit{Case 2: If corrupted clients exist. }
	We note the corrupted subset as $ \mathcal{I}\subset [n] $. 
	We need to simulate the adversary's view, which is the secret share and its corresponding hash value. 
	For $ t\in[T] $ and $ i\in \mathcal{I} $, the simulator randomly chosen $ \bar{\gradients'}_t^i\leftarrow \{0, 1\} $, and internally executes $ \mathcal{F}_\textsf{SS}$ to obtain $ [\![\bar{\gradients'}_t^i]\!] $. 
	Then, \textsf{Sim} internally executes $ \mathcal{F}_\textsf{RO} $ to obtain $ \hash_t^i $. 
	Because of the UC security of $ \mathcal{F}_\textsf{SS} $ and $ \mathcal{F}_\textsf{RO} $, there is no distinguisher that could tell the difference between $ ([\![\bar{\gradients'}_t^i]\!], \hash_t^i) $ and $ ([\![\bar{\gradients}_t^i]\!], \textsf{H}_{\delta, \phi}(\textsf{sign}(\hat{g}_t^i))) $. 
	
	\noindent\textit{Case 3: If corrupted servers exist.  }
	We not the corrupted subset as $ \mathcal{I}\subset [T] $. 
	We need to simulate the adversary's view, including all secret shares in the protocol. 
	It is worth noticing that we should not only guarantee the indistinguishability between two groups of shares but also the relationship among elements within each group. 
	After obtaining \textsf{IndM}, the \textsf{Sim} executes DBSCAN protocol on \textsf{IndM} and acquires cluster labels $ l $, and executes the functionality $ \mathcal{F}_\textsf{SS} $ to obtain $ [\![\textsf{IndM}_{ij}]\!] $ for each $ i,j\in [n] $. 
	Then, \textsf{Sim} randomly chosen $ |l| $ secret sharing values $ [\![\bar{\gradients'}_t^i]\!] $ such that the summation $ \Sigma_{l_i=z}\textsf{DCD}([\![\bar{\gradients'}_t^i]\!]) $ equals to the given share $ [\![\aggradient_{t}^z]\!] $. 
	This procedure could be easily achieved by first randomly choosing the first $ |l|-1 $ values and then calculating the last value. 
	For each $ i\notin c $, the simulator \textsf{Sim} simply chooses the shares of gradients randomly since those values are irrelevant to the calculation. 
	After acquiring all the shares of gradients $ [\![\bar{\gradients'}_t^i]\!] $, the simulation \textsf{Sim} pairwisely calculate $ [\![dot\_product'_{ij}]\!] $, and convert it to arithmetic sharing by executing the functionality $ \mathcal{F}_\textsf{B2A} $, and then calculate the adjusted cosine similarity matrix share $ [\![\textsf{CosM}'_{ij}]\!] $. 
	Next, as  in Algorithm 2, \textsf{Sim} calculates $ [\![x'_{ij}]\!], [\![\textsf{EucM}'_{ij}]\!] $ for each $ i, j\in[n] $. 
	We claim that all shares that were previously generated are interdeducible, except between $ [\![\textsf{EucM}'_{ij}]\!] $ and $ [\![\textsf{IndM}'_{ij}]\!] $, since the latter two are the input/output pair of the element-wise comparison algorithm computed by a secure comparison algorithm in our protocol. 
	Fortunately, the privacy of a secure comparison algorithm guarantees the indistinguishability between real and ideal input/output pairs. 
	Hence, we claim that if there exists a distinguisher that could tell the difference between the real and ideal world, it contradicts either the UC security of $ \mathcal{F}_\textsf{SS} $ or the privacy of secure comparison protocol. 
	
	\noindent\textit{Case 4: If there exists both corrupted clients and servers. }
	The situation, in this case, is simply the combination of Case 2 and 3 since there is no extra view needed to simulate. 
	
	In summary, for any PPT adversary $ \mathcal{A} $ we could construct a $ \textsf{Sim} $, so that for any PPT environment $ \mathcal{E} $, the $ \mathsf{EXEC}_{\Pi, \mathcal{A}, \mathcal{E}} $ and $ \mathsf{EXEC}_{{\mathcal{F}_\textsf{\textit{MUDGUARD}}}, \mathcal{S}, \mathcal{E}} $ are indistinguishable. 
\end{proof}


UC framework captures attacks on input and intermediate data. 
On the contrary, differential privacy prevents the adversary from inferring about private information from outputs or updates, and such information might also be utilized by malicious clients. 
When false positives clustering exists, or malicious clients pretend to be honest, local updates have a chance to be revealed to the adversary. 
The following theorem shows that these updates do not leak any individual data due to differential privacy. 
\begin{theorem}
	No adversary in corrupted client set $\adv^c \subset \mathcal{C}$, where $|\adv^c|\leq n-1$, can retrieve the individual values of honest clients.
\end{theorem}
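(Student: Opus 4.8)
The plan is to reduce the statement to the per-client $(\epsilon,\delta)$-differential privacy guarantee established for each noised update, combined with the post-processing robustness of DP. First I would pin down the adversary's view in the worst case $|\adv^c| \le n-1$, where exactly one client — say client $i$ — is honest and the colluding malicious clients together with the malicious-minority servers may learn every message that depends on $\mathcal{D}_i$. The key observation is that the tightest quantity the adversary could ever recover is client $i$'s contribution \emph{in isolation}: through a misclustering event or a malicious-but-acting-honest client sharing the group, the adversary could subtract all other (adversarially known) updates from a returned group aggregation and be left with client $i$'s own processed update. Hence it suffices to show that this isolated contribution leaks nothing about any single data sample of $\mathcal{D}_i$, since aggregation with additional honest clients only improves the protection.

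Second, I would invoke the Gaussian mechanism directly. By step \ding{203} of Protocol~\ref{protocolbrief}, client $i$ forms $\widetilde{\gradients}_t^i \leftarrow \gradients_t^i + \mathcal{N}(0,\Delta^2\sigma^2)$ with $\sigma=\sqrt{2\ln(1.25/\delta)}$ and $\Delta$ the $\ell_2$-sensitivity of the local dataset. By Definition~\ref{dp}, this noised gradient already satisfies $(\epsilon,\delta)$-DP with respect to any two neighboring local datasets $\mathbf{D}_i,\mathbf{D}_i'$ differing in a single sample. Crucially, this is a \emph{per-client} guarantee that holds even when client $i$'s update is exposed on its own, which is precisely the worst case identified above; no reliance on aggregation-induced noise is needed.

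Third, I would lift the guarantee from the single noised gradient to the adversary's entire view via post-processing. Every subsequent transformation applied to $\widetilde{\gradients}_t^i$ — the denoising $\hat{\gradients}_t^i \leftarrow \ks(\widetilde{\gradients}_t^i,\mathcal{N})\cdot \widetilde{\gradients}_t^i$, the sign-and-encode step $\overline{\gradients}_t^i \leftarrow \encoding(\signn(\hat{\gradients}_t^i))$, the secret sharing, and the $\aggradient$ returned to the client — is a (possibly randomized) function $\mathcal{F}$ that injects no fresh information about $\mathcal{D}_i$. By the post-processing robustness stated after Definition~\ref{dp}, $\mathcal{F}(\mathcal{A})$ is again $(\epsilon,\delta)$-DP, so for any measurable set $\mathcal{S}$ of adversarial views one has $\Pr[\mathrm{view}(\mathbf{D}_i)\in\mathcal{S}] \le e^{\epsilon}\Pr[\mathrm{view}(\mathbf{D}_i')\in\mathcal{S}] + \delta$. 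Therefore the adversary cannot statistically distinguish which individual sample client $i$ held, and so cannot retrieve its individual values; the argument extends verbatim to any honest subset of size at least one, since each honest client adds independent noise and the joint view is a post-processing of independent DP outputs.

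The main obstacle I anticipate lies in the first step: rigorously arguing that the tightest leakage really is a single honest client's \emph{isolated}, DP-protected update, rather than some correlated quantity that escapes the per-client noise — for instance, whether colluding servers and clients could combine secret shares and aggregations \emph{across the $T$ rounds} to cancel the noise of the same honest client. Here I would appeal to the secret-sharing confidentiality already proven in the UC theorem, so that nothing beyond the legitimately published per-round aggregation is learned, and then bound the cumulative privacy loss by DP composition over the $T$ training rounds, charging the total budget against the $(\epsilon,\delta)$ parameters reported in Table~\ref{tab:flsetting}.
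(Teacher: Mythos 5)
Your proof takes essentially the same route as the paper's: the paper's own argument for this theorem is a three-sentence appeal to the $(\epsilon,\delta)$-DP guarantee of the noised local updates together with the post-processing robustness of DP, which is exactly the core of your second and third steps. Your version is considerably more explicit --- in particular, the worst-case reduction to a single honest client's isolated update, and the observation that leakage must be bounded by composition across the $T$ rounds, are points the paper's proof glosses over entirely --- but the underlying idea is identical.
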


\begin{proof}
	Since we apply differential privacy~\cite{dwork2008differential}, the local updates cannot leak information regarding the inputs. According to Def.~\ref{dp}, the added differentially private noise guarantees that the aggregation is indistinguishable whether an individual update participates or not.
	Therefore, it guarantees the security of individual local updates while aggregation can be calculated. 
\end{proof}
\subsection{Convergence Analysis}
\label{sec:convana}

Let $M$ be the total number of clients in a semi-honest majority client cluster. Semi-honest clients and malicious clients are indexed by $\{1,\cdots,h\}$ and $\{h+1,\cdots,h+m\}$, respectively, where $M=h+m$ and $h>m$ if TNR is greater than 50\%. The component $j$ of stochastic gradient and of true gradient are denoted as $\{\tilde{g}_{i,j}\}_{i=1}^M$ and $g_j$ respectively. An error probability is shown as follows.

\begin{lemma}[The bound of error probability with malicious clients]
\label{lm:ErrorProbBound}
If the TNR (h/M) of the clustering is relatively high, then we have  the error probability
    $\mathbb{P}\left[\sign\left[\sum_{i=1}^M \sign(\tilde{g}_{i,j})\right]\not=\sign(g_j)\right]\leq \mathbb{P}_{h}\cdot \mathcal{O}(\sqrt{M/h})$,
where $\mathbb{P}_{h}$ is the bound for the error probability without malicious clients.
\end{lemma}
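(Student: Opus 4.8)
The plan is to reduce the statement to a single coordinate and then compare two lower-tail probabilities of the same binomial random variable. Fix a coordinate $j$ and assume without loss of generality that $\sign(g_j)=+1$, so the majority vote is wrong exactly when $\sum_{i=1}^{M}\sign(\tilde g_{i,j})\le 0$. I would split this sum into the honest part $S_h=\sum_{i=1}^{h}\sign(\tilde g_{i,j})$ and the malicious part $S_m=\sum_{i=h+1}^{M}\sign(\tilde g_{i,j})$. Invoking the same SGD variance-bound assumption used by Bernstein \emph{et al.}~\cite{bernstein2018signsgd}, each honest sign bit is an independent $\pm1$ variable that is correct with probability $1-q_j>\tfrac12$, where $q_j$ is governed by the per-worker signal-to-noise ratio; hence $S_h=2B_h-h$ with $B_h\sim\mathrm{Bin}(h,1-q_j)$, and the honest-only vote error is exactly $\mathbb{P}_h=\mathbb{P}[S_h\le 0]=\mathbb{P}[B_h\le h/2]$, the quantity already bounded in~\cite{bernstein2018signsgd}.

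Next I would model the malicious contribution. Because the high-TNR clustering only admits malicious clients whose update directions are statistically indistinguishable from honest ones, their residual sign votes are treated as unbiased $\pm1$ noise independent of $S_h$, so that $\mathbb{E}[S_m]=0$ and $\mathrm{Var}(S_m)=m$. This is the key modeling choice: pure adversarial sign-flips would degrade the error exponentially in $m$, and it is precisely the clustering that licenses the milder noise model. Then $S_M=S_h+S_m$ keeps the honest mean $\mu=h(1-2q_j)$ but has inflated variance $v=v_h+m$ with $v_h=4hq_j(1-q_j)$, and the target probability is $\mathbb{P}[S_M\le 0]$. The crux is therefore to bound the ratio $\mathbb{P}[S_M\le 0]/\mathbb{P}[S_h\le 0]$.

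For this I would pass to a Gaussian approximation, using a Berry--Esseen bound to control the error uniformly in $h$ and $m$, giving $\mathbb{P}_h\approx\Phi(-\mu/\sqrt{v_h})$ and $\mathbb{P}[S_M\le0]\approx\Phi(-\mu/\sqrt{v})$. Applying the Mills-ratio estimate $\Phi(-t)=\tfrac{\phi(t)}{t}(1+o(1))$ then factorizes the ratio as
\[
\frac{\mathbb{P}[S_M\le 0]}{\mathbb{P}_h}\;\approx\;\sqrt{\tfrac{v}{v_h}}\,\exp\!\Big(-\tfrac{\mu^2}{2}\big(\tfrac1{v_h}-\tfrac1v\big)\Big)\;=\;\sqrt{1+\tfrac{m}{v_h}}\;\exp\!\Big(-\tfrac{\mu^2 m}{2v_h v}\Big).
\]
Since $v_h=4hq_j(1-q_j)\le h$ with equality as $q_j\to\tfrac12$, the prefactor is of order $\sqrt{1+m/h}=\sqrt{M/h}$, while the exponential factor never exceeds $1$; in the decision-hard regime $q_j\to\tfrac12$ one has $\mu\to0$, the exponential tends to $1$, and the bound is tight, yielding $\mathbb{P}[S_M\le0]\le \mathbb{P}_h\cdot\mathcal{O}(\sqrt{M/h})$.

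The step I expect to be the main obstacle is this control of the tail-probability ratio: I must show the product of the prefactor $\sqrt{1+m/v_h}$ and the exponential correction $\exp(-\mu^2 m/(2v_h v))$ stays $\mathcal{O}(\sqrt{M/h})$ across the whole range of $q_j$. When $q_j\to\tfrac12$ the prefactor saturates at $\sqrt{M/h}$ and the exponential tends to $1$, which is the tight case; when $q_j$ is bounded away from $\tfrac12$ the prefactor grows but the exponential decays quickly enough to offset it, and $\mathbb{P}_h$ is itself already exponentially small. Making this trade-off precise, together with a Berry--Esseen justification of the Gaussian approximation that is uniform in $h$ and $m$ and the handling of ties at $0$, is the technical heart of the argument. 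The high-TNR hypothesis $h>m$ enters to keep $m/h<1$, so that $\sqrt{M/h}<\sqrt2$ remains a modest constant and both $\mathbb{P}_h$ and $\mathbb{P}[S_M\le0]$ stay genuine lower tails.
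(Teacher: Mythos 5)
Your decomposition into an honest sub-sum and an unbiased malicious sub-sum matches the paper's setup (the paper models both groups as Bernoulli trials with success probabilities $p_h$ and $p_m$), but the heart of your argument --- controlling the ratio of the two \emph{true} tail probabilities via a Gaussian/Mills-ratio approximation --- contains a sign error that is fatal. Writing $t_1=\mu/\sqrt{v_h}$ and $t_2=\mu/\sqrt{v}$ with $v=v_h+m>v_h$, we have $t_2<t_1$, so
\[
\frac{\Phi(-t_2)}{\Phi(-t_1)}\;\approx\;\frac{t_1}{t_2}\cdot\frac{\phi(t_2)}{\phi(t_1)}\;=\;\sqrt{\frac{v}{v_h}}\;\exp\!\Big(+\frac{\mu^2}{2}\Big(\frac{1}{v_h}-\frac{1}{v}\Big)\Big)\;=\;\sqrt{\frac{v}{v_h}}\;\exp\!\Big(\frac{\mu^2 m}{2 v_h v}\Big),
\]
i.e.\ the exponential factor is at least $1$, not at most $1$ as you claim: the contaminated vote has the same mean but larger variance, so its standardized threshold is smaller and the Gaussian density ratio $\phi(t_2)/\phi(t_1)$ exceeds $1$. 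When $q_j$ is bounded away from $\tfrac12$ and $m=\Theta(h)$, the exponent is $\Theta(h)$, so the ratio of the two genuine tails blows up exponentially and is nowhere near $\mathcal{O}(\sqrt{M/h})$; your proposed trade-off (``the exponential decays quickly enough to offset the prefactor'') runs in the wrong direction, as both factors grow. In fact, with $\mathbb{P}_h$ read as you read it --- the \emph{actual} error probability of the honest-only vote --- the claimed inequality is false: both probabilities decay exponentially but with different rate constants, and no $\mathcal{O}(\sqrt{M/h})$ factor can bridge them.

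The lemma is only true because $\mathbb{P}_h$ denotes the polynomially decaying Chebyshev-type \emph{bound} inherited from~\cite{bernstein2018signsgd}, namely $\sqrt{Mp_hq_h}/(2M(p_h-\tfrac12))\sim 1/\sqrt{M}$ for an all-honest vote of $M$ clients, not a true tail probability. Accordingly, the paper never compares two tails: it applies Cantelli's one-sided inequality directly to the full vote $Z=Z_h+Z_m$ (after a Gaussian symmetrization step), obtaining $\mathbb{P}[Z\le M/2]\le \sqrt{hp_hq_h+mp_mq_m}\,/\,(2(hp_h+mp_m-M/2))$, and then algebraically factors this expression as $\mathbb{P}_h$ times a remainder that is $\mathcal{O}(\sqrt{M/h})$ once $hp_h+mp_m>M/2$ and $m/h<1$ (the high-TNR hypothesis). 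To repair your argument, drop the Mills-ratio comparison entirely and instead bound $\mathbb{P}[S_M\le 0]$ directly by a variance-over-squared-gap inequality using the unchanged mean gap $\mu$ and the inflated variance $v=v_h+m$, then factor that bound against the all-honest one.
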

\begin{proof}
Every client is a Bernoulli trial with success probability $p_h$ for semi-honest clients and $p_m$ for malicious clients, respectively, to receive the true gradient signs. Let $Z_h$ be the number of semi-honest clients with true signs, which therefore equals the sum of $h$ independent Bernoulli trials, so we know $Z_h$ follows the binomial distribution $B(h,p_h)$. Similarly, we know the number of malicious clients with correct signs $Z_m$ follows the binomial distribution $B(m,p_m)$. Denote $q_h=1-p_h$ and $q_m=1-p_m$.

Let $Z$ be the total number of clients with true gradient signs, so $Z=Z_h+Z_m$. We use the Gaussian distribution to simplify the analysis. Notice that $B(h,p_h)\sim N(hp_h,hp_hq_h)$ and $B(m,p_m)\sim N(mp_m,mp_mq_m)$, so we get 
$Z\sim N\left(hp_h+mp_m, hp_hq_h+mp_mq_m\right).$
The event $\sign\left[\sum_{i=1}^M \sign(\tilde{g}_{i,j})\right]\not=\sign(g_j)$ is equivalent to event $Z \leq M/2$.
Then the error probability equals $\mathbb{P}(Z\leq M/2)$. By using Cantelli's inequality, we know 
\begin{equation}
\begin{aligned}
    &\mathbb{P}\left[Z\leq M/2 \right] = 
    \mathbb{P}[Z\geq 2(hp_h+mp_m)-M/2]\\
    &=\mathbb{P}[Z-(hp_h+mp_m)\geq (hp_h+mp_m)-M/2]\\
    &\leq 
    \frac{1}{1+\frac{[(hp_h+mp_m)-M/2]^2}{hp_hq_h+mp_mq_m}}
    \leq
    \frac{\sqrt{hp_hq_h+mp_mq_m}}{2|(hp_h+mp_m)-M/2|}\\
    &=
    \frac{\sqrt{Mp_hq_h}}{2M(p_h-1/2)}
     \cdot\frac{\sqrt{h/M+m/M\cdot p_mq_m/p_hq_h}}
    {(h/M\cdot p_h + m/M\cdot p_m-1)/(p_h-1/2)}\\
    &=\mathbb{P}_{h}\cdot\mathcal{O}(\sqrt{M/h})
    \end{aligned}
\label{eq:errorProbBound}
\end{equation}

where the second inequality holds since $\frac{1}{x^2+1}\leq \frac{1}{2x}$ for $x>0$, and the last two equalities hold since $p_h>1/2$ by~\cite{bernstein2018signsgd} and we assume $hp_h>M/2$ with overwhelming probability for a sufficient large TNR. The assumption is reasonable because $hp_h=Mp_h>M/2$ if $TNR=100\%$. The first factor in Eq.~\eqref{eq:errorProbBound} is the bound for the error probability without malicious clients, so we get the error probability less than a $ \mathcal{O}(\sqrt{M/h})$ factor of that in the case of without malicious clients.
\end{proof}

Let $\textbf{\textit{L}}$ and $\bm{\sigma}$ be non-negative losses and standard deviation of stochastic gradients $\tilde{g}$ respectively. 
$\forall x$, the objective values $f(x)$ are bounded by constants $f_*$ (i.e. $f(x)\geq f_*$). The objective value of 0-$th$ round is referred to as $f_0$.  Under the above conditions, the results are the following.

\begin{theorem}[Non-convex convergence rate of \texttt{MUDGUARD}]
     If the TNR of the clustering is relatively high, then the global model generated in the semi-honest cluster converges at a rate
\begin{equation}
\begin{aligned}
    \mathbb{E}\left[\frac{1}{T} \sum_{t=0}^{T-1}\left\|g_{t}\right\|_{1}\right]^{2} 
\leq &\frac{1}{\sqrt{N}}\Bigg[\sqrt{\|\textbf{L}\|_{1}}\left(f_{0}-f_{*}+\frac{1}{2}\right)\\
&+\frac{2}{\mathcal{O}(\sqrt{h})}\|\bm{\sigma}\|_{1}\Bigg]^2,
\notag
\end{aligned}
\end{equation}
where $N$ is the cumulative number of stochastic gradient calls up to round T $(i.e., N=\mathcal{O}(T^2))$.
Therefore, the higher the rate is, the closer the convergence speed is to the case without malicious clients.
\end{theorem}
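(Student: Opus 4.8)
The plan is to adapt the majority-vote SignSGD convergence argument of Bernstein \emph{et al.}~\cite{bernstein2018signsgd}, whose only structural modification here is to replace their clean per-coordinate sign-error bound with the Byzantine-aware bound already established in Lemma~\ref{lm:ErrorProbBound}. First I would invoke the coordinate-wise smoothness assumption to produce a one-step descent inequality for the majority-vote update $x_{t+1}=x_t-\delta_t V_t$, where $V_{t,j}=\signn\!\bigl[\sum_i \signn(\tilde g_{i,j})\bigr]\in\{-1,+1\}$. Since $V_{t,j}^2=1$, expanding the smoothness bound and writing $g_{t,j}V_{t,j}=|g_{t,j}|\bigl(1-2\,\mathbb{I}[V_{t,j}\neq\signn(g_{t,j})]\bigr)$ gives
\begin{equation}
f(x_{t+1})-f(x_t)\le -\delta_t\sum_j |g_{t,j}|\bigl(1-2\,\mathbb{I}[V_{t,j}\neq\signn(g_{t,j})]\bigr)+\tfrac{\delta_t^2}{2}\|\textbf{\textit{L}}\|_1. \notag
\end{equation}
Taking expectations turns each indicator into the per-coordinate sign-flip probability $\mathbb{P}[V_{t,j}\neq\signn(g_{t,j})]$.

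The Lemma enters at the second step. The honest-only analysis bounds this probability by a quantity of order $\sigma_j/(\sqrt{M}\,|g_{t,j}|)$ via Cantelli's inequality; Lemma~\ref{lm:ErrorProbBound} shows that mixing up to $m$ malicious votes into the cluster degrades this by only a $\mathcal{O}(\sqrt{M/h})$ factor, so the effective bound is of order $\sigma_j/(\mathcal{O}(\sqrt{h})\,|g_{t,j}|)$. Substituting cancels the $|g_{t,j}|$ factors coordinate-by-coordinate and collapses the correction term into a noise term proportional to $\|\bm{\sigma}\|_1$, yielding
\begin{equation}
\mathbb{E}[f(x_{t+1})]-\mathbb{E}[f(x_t)]\le -\delta_t\,\mathbb{E}\|g_t\|_1+\frac{2\delta_t}{\mathcal{O}(\sqrt{h})}\|\bm{\sigma}\|_1+\frac{\delta_t^2}{2}\|\textbf{\textit{L}}\|_1. \notag
\end{equation}

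Third, I would telescope over $t=0,\dots,T-1$, apply the lower bound $f(x)\ge f_*$, and adopt Bernstein \emph{et al.}'s schedule: a step size $\delta_t\propto 1/\sqrt{\|\textbf{\textit{L}}\|_1\,t}$ together with a per-step batch growing linearly in $t$, so the stochastic variance scales down and the cumulative number of gradient calls is $N=\mathcal{O}(T^2)$. Dividing through by $\sum_t\delta_t$ merges the optimization term $(f_0-f_*)/\sum_t\delta_t$, the residual smoothness term $\tfrac12\sqrt{\|\textbf{\textit{L}}\|_1}$, and the noise term into the common $1/\sqrt{N}$ rate; squaring the resulting nonnegative bound (and using that $\mathbb{E}[\tfrac1T\sum_t\|g_t\|_1]=\tfrac1T\sum_t\mathbb{E}\|g_t\|_1$) produces the stated inequality.

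The main obstacle is concentrated entirely in Lemma~\ref{lm:ErrorProbBound}; once that bound is in hand, the theorem is an assembly of the standard telescoping argument. The delicate bookkeeping is to verify that the $\mathcal{O}(\sqrt{M/h})$ contamination factor composes with the growing-batch schedule to leave exactly a $1/\mathcal{O}(\sqrt{h})$ coefficient on $\|\bm{\sigma}\|_1$ while preserving the $1/\sqrt{N}$ optimization rate, and to confirm that the high-TNR hypothesis $hp_h>M/2$ invoked in the Lemma holds with overwhelming probability at \emph{every} iteration, so that the expected descent inequality remains valid throughout training rather than only on average.
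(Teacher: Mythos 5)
Your proposal is correct and follows essentially the same route as the paper: the paper's proof simply cites Theorem 2 of Bernstein \emph{et al.} as a black box, observes that the honest-case per-coordinate bound $|g_{t,j}|\,\mathbb{P}[\text{sign error}]\le \sigma_j/\sqrt{M}$ degrades by the $\mathcal{O}(\sqrt{M/h})$ factor from Lemma~\ref{lm:ErrorProbBound}, and plugs $\sigma_j/\mathcal{O}(\sqrt{h})$ back into the final bound. You unpack the internals of that cited theorem (descent lemma, telescoping, step-size/batch schedule), but the key idea and where the lemma enters are identical.
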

\begin{proof}
    Following the results of Theorem 2 in~\cite{bernstein2018signsgd}, in the distributed SignSGD with a majority vote, we can get the non-convex convergence rate without malicious clients at 
\begin{equation}
    \begin{aligned}
            \mathbb{E}\left[\frac{1}{T} \sum_{t=0}^{T-1}\left\|g_{t}\right\|_{1}\right]^{2} 
\leq &\frac{1}{\sqrt{N}}\Bigg[\sqrt{\|\textbf{L}\|_{1}}\left(f_{0}-f_{*}+\frac{1}{2}\right)\\
&+\frac{2}{\sqrt{M}}\|\bm{\sigma}\|_{1}\Bigg]^2
    \end{aligned}
    \label{eq3}
\end{equation}
from 
\begin{equation}
    \begin{aligned}
            |g_t|\mathbb{P}\left[\sign\left[\sum_{i=1}^M \sign(\tilde{g}_{i,j})\right]\not=\sign(g_j)\right]\leq \frac{\sigma_i}{\sqrt{M}}.
    \end{aligned}
    \label{eq4}
\end{equation}
As Lemma~\ref{lm:ErrorProbBound} proved, in the existence of the malicious clients, we get~(\ref{eq4})$\leq\frac{\sigma_i}{\sqrt{M}}\cdot\mathcal{O}(\sqrt{M/h})=\frac{\sigma_i}{\mathcal{O}(\sqrt{h})}$. By plugging the result into~(\ref{eq3}), we have the convergence rate of \texttt{MUDGUARD}:  
\begin{equation}
\begin{aligned}
    \mathbb{E}\left[\frac{1}{T} \sum_{t=0}^{T-1}\left\|g_{t}\right\|_{1}\right]^{2} 
\leq &\frac{1}{\sqrt{N}}\Bigg[\sqrt{\|\textbf{L}\|_{1}}\left(f_{0}-f_{*}+\frac{1}{2}\right)\\
&+\frac{2}{\sqrt{M}}\|\bm{\sigma}\|_{1}\cdot\mathcal{O}(\sqrt{M/h})\Bigg]^2\\
&=\frac{1}{\sqrt{N}}\Bigg[\sqrt{\|\textbf{L}\|_{1}}\left(f_{0}-f_{*}+\frac{1}{2}\right)\\
&+\frac{2}{\mathcal{O}(\sqrt{h})}\|\bm{\sigma}\|_{1}\Bigg]^2.
\end{aligned}
\end{equation}
\end{proof}

\subsection{Proof of Theorem 1}
\label{sec:pot1}
%
\begin{proof}
We denote $a=(a_1,\cdots, a_{d})$ and $b=(b_1,\cdots,b_{d})$ are two vectors uploaded by malicious clients, where $np$ refers to the number of model parameters:
\begin{equation}
    Pr(a_i, b_i=\signn)=
    \begin{cases}
    \frac{1}{2},& \signn=+1\\
    \frac{1}{2},& \signn=-1
    \end{cases},
    \ \forall i \in [d].
    \notag
\end{equation}

The adjusted cosine similarity can be computed as:
\begin{equation}
\begin{aligned}
        COS\_similarity&= \frac{a_1b_1+\cdots+ a_{d}b_{d}}{\sqrt{a_1^2+\cdots+a_{d}^2}\cdot \sqrt{b_1^2+\cdots+b_{d}^2}}\\
    &=\frac{a_1b_1+\cdots+ a_{d}b_{d}}{d}.
\end{aligned}\notag
\end{equation}
Since $a_i$ and $b_i$ are relatively independent, we have:
\begin{equation}
    Pr(a_i\cdot b_i=\signn)=
    \begin{cases}
    \frac{1}{2},& \signn=+1\\
    \frac{1}{2},& \signn=-1
    \end{cases},
    \ \forall i \in [d].
    \notag
\end{equation}

According to the \textit{Law of large numbers}, $E(COS\_similarity)\sim E(a_ib_i)=0.$ This conclusion can be generalized to any two malicious clients, and malicious clients have the same distance as a semi-honest client. Therefore, if we calculate the adjusted cosine similarity vector of two malicious clients, there should be only two elements of difference. The $L_2$ distance of these two vectors is $\sqrt{2}$. 
\end{proof}
\subsection{Other Experimental Setup and Experimental Results}

\subsubsection{Other Experiment Setup}\label{sec:otherexpsetup}
We implement \texttt{MUDGUARD} in C++ and Python. We use MP-SPDZ library~\cite{keller2020mp} to implement secure computations and Pytorch framework~\cite{paszke2019pytorch} for training.
All the experiments are conducted on a cluster of machines with Intel(R) Xeon(R) CPU E5-2620 v4 @ 2.10GHz and NVIDIA 1080 Ti GPU, with 32GB RAM in a local area network. 
As for the cryptographic tools, all the parameters are set to a 128-bit security level.\\ 
\noindent\textbf{Datasets.} We use MNIST and FMNIST datasets for the image classification task.\\
$\bullet$ \textbf{MNIST~\cite{lecun-mnisthandwrittendigit-2010}.} It consists of 60,000 training samples and 10,000 testing samples, where each sample is a 28$\times$28 gray-scale image of handwritten digital (0-9).\\
$\bullet$ \textbf{FMNIST~\cite{xiao2017fashion}.} It contains article images from Zalando and has the same size as MNIST, where each image is a 28$\times$28 gray-scale image associated with a label from 10 classes.\\
$\bullet$ \textbf{CIFAR-10~\cite{krizhevsky2009learning}.} It offers 50,000 training samples and 10,000 test samples, where each is a 32$\times$32 color image in a label from 10 different objectives, and there are 6,000 images for each class.\\
\noindent\textbf{Classifiers.} We use LeNet and ResNet-18 to perform training and classification of the datasets.\\
$\bullet$ \textbf{LeNet~\cite{lecun1989backpropagation}.} Containing 6 layers (including 3 convolution layers, 2 pooling layers, and 1 fully connected layer), LeNet aims to train 44,426 parameters for image classification. \\
$\bullet$ \textbf{ResNet-18~\cite{He_2016_CVPR}.} It provides 18 layers with 11 million trainable parameters to train color images. We use a light vision of ResNet with approx. 2.07 million parameters and complete the experiments with the CIFAR-10 dataset.\\

\subsubsection{Other Experimental Results}\label{sec:otherexp}
\begin{table*}[]
\centering
\begin{tabular}{@{}cc|ccccccc@{}}
\toprule
\multicolumn{2}{c|}{Attacks}                   & Baseline & GA    & LFA   & Krum  & Trim  & AA          & BA          \\ \midrule
\multicolumn{1}{c|}{\multirow{5}{*}{$\xi$}} & 0.5 & 0.811    & 0.803 & 0.772 & 0.751 & 0.763 & 0.793 / 0     & 0.797 / 0     \\
\multicolumn{1}{c|}{}                    & 0.6 & 0.783    & 0.772 & 0.77  & 0.761 & 0.757 & 0.784 / 0.002 & 0.801 / 0     \\
\multicolumn{1}{c|}{}                    & 0.7 & 0.769    & 0.767 & 0.747 & 0.723 & 0.726 & 0.776 / 0.003 & 0.777 / 0.005 \\
\multicolumn{1}{c|}{}                    & 0.8 & 0.754    & 0.752 & 0.731 & 0.743 & 0.754 & 0.771 / 0.001 & 0.758 / 0.001 \\
\multicolumn{1}{c|}{}                    & 0.9 & 0.755    & 0.737 & 0.73  & 0.718 & 0.724 & 0.753 / 0.002 & 0.731 / 0.008 \\ \midrule
\multicolumn{1}{c|}{\multirow{5}{*}{n}}  & 10  & 0.846    & 0.844 & 0.824 & 0.829 & 0.831 & 0.836 / 0     & 0.845 / 0     \\
\multicolumn{1}{c|}{}                    & 50  & 0.834    & 0.829 & 0.829 & 0.829 & 0.827 & 0.827 / 0     & 0.836 / 0     \\
\multicolumn{1}{c|}{}                    & 100 & 0.783    & 0.772 & 0.77  & 0.761 & 0.757 & 0.784 / 0.002 & 0.801 / 0     \\
\multicolumn{1}{c|}{}                    & 200 & 0.774    & 0.77  & 0.763 & 0.771 & 0.766 & 0.747 / 0.002 & 0.771 / 0.002 \\
\multicolumn{1}{c|}{}                    & 500 & 0.61     & 0.601 & 0.61  & 0.599 & 0.602 & 0.615 / 0.015 & 0.602 / 0.003 \\ \midrule
\multicolumn{1}{c|}{\multirow{5}{*}{q}}  & 0.1 & 0.787    & 0.787 & 0.772 & 0.789 & 0.786 & 0.783 / 0     & 0.787 / 0.004 \\
\multicolumn{1}{c|}{}                    & 0.3 & 0.788    & 0.777 & 0.765 & 0.773 & 0.784 & 0.783 / 0     & 0.782 / 0.002 \\
\multicolumn{1}{c|}{}                    & 0.5 & 0.783    & 0.772 & 0.77  & 0.761 & 0.757 & 0.784 / 0.002 & 0.801 / 0     \\
\multicolumn{1}{c|}{}                    & 0.7 & 0.65     & 0.637 & 0.657 & 0.639 & 0.65  & 0.642 / 0.008 & 0.649 / 0.006 \\
\multicolumn{1}{c|}{}                    & 0.9 & 0.566    & 0.542 & 0.545 & 0.542 & 0.548 & 0.546 / 0.001 & 0.55 / 0.007  \\ \bottomrule
\end{tabular}
\caption{Comparison of accuracy with baseline and ASR by an increasing proportion of malicious clients ($\xi\geq 0.5$), \#clients $n$ and non-iid degree $q$, where FMNIST is used. The results under targeted attacks are in the form of “testing accuracy / ASR".}
\label{tab:fmnist}
\end{table*}
\begin{table*}[]
\centering
\begin{tabular}{@{}cc|cccccccc@{}}
\toprule
\multicolumn{2}{c|}{Attacks}                   & Baseline & GA    & LFA   & Krum  & Trim  & AA          & BA          & EA          \\ \midrule
\multicolumn{1}{c|}{\multirow{5}{*}{$\xi$}} & 0.5 & 0.573    & 0.57  & 0.574 & 0.557 & 0.562 & 0.568 / 0.006 & 0.572 / 0.007 & 0.571 / 0.019 \\
\multicolumn{1}{c|}{}                    & 0.6 & 0.562    & 0.559 & 0.559 & 0.547 & 0.534 & 0.521 / 0.011 & 0.567 / 0     & 0.568 / 0.03  \\
\multicolumn{1}{c|}{}                    & 0.7 & 0.54     & 0.52  & 0.506 & 0.513 & 0.515 & 0.531 / 0.011 & 0.524 / 0     & 0.531 / 0.031 \\
\multicolumn{1}{c|}{}                    & 0.8 & 0.519    & 0.508 & 0.489 & 0.494 & 0.488 & 0.482 / 0.003 & 0.52 / 0.004  & 0.501 / 0.05  \\
\multicolumn{1}{c|}{}                    & 0.9 & 0.489    & 0.492 & 0.474 & 0.45  & 0.483 & 0.475 / 0.006 & 0.484 / 0.008 & 0.478 / 0.059 \\ \midrule
\multicolumn{1}{c|}{\multirow{5}{*}{n}}  & 10  & 0.677    & 0.672 & 0.668 & 0.665 & 0.668 & 0.658 / 0     & 0.659 / 0.001 & 0.66 / 0.037  \\
\multicolumn{1}{c|}{}                    & 50  & 0.641    & 0.637 & 0.635 & 0.653 & 0.634 & 0.639 / 0     & 0.647 / 0.001 & 0.641 / 0.068 \\
\multicolumn{1}{c|}{}                    & 100 & 0.562    & 0.559 & 0.559 & 0.547 & 0.534 & 0.521 / 0.011 & 0.567 / 0     & 0.568 / 0.03  \\
\multicolumn{1}{c|}{}                    & 200 & 0.46     & 0.453 & 0.474 & 0.457 & 0.45  & 0.468 / 0.003 & 0.446 / 0     & 0.458 / 0.028 \\
\multicolumn{1}{c|}{}                    & 500 & 0.27     & 0.26  & 0.254 & 0.274 & 0.252 & 0.276 / 0.004 & 0.262 / 0     & 0.242 / 0     \\ \midrule
\multicolumn{1}{c|}{\multirow{5}{*}{q}}  & 0.1 & 0.573    & 0.574 & 0.566 & 0.562 & 0.554 & 0.555 / 0     & 0.572 / 0.001 & 0.558 / 0.058 \\
\multicolumn{1}{c|}{}                    & 0.3 & 0.567    & 0.567 & 0.556 & 0.535 & 0.553 & 0.561 / 0     & 0.569 / 0     & 0.543 / 0.064 \\
\multicolumn{1}{c|}{}                    & 0.5 & 0.562    & 0.559 & 0.559 & 0.547 & 0.534 & 0.521 / 0.011 & 0.567 / 0     & 0.568 / 0.03  \\
\multicolumn{1}{c|}{}                    & 0.7 & 0.426    & 0.417 & 0.394 & 0.435 & 0.424 & 0.44 / 0.013  & 0.41 / 0.004  & 0.429 / 0.015 \\
\multicolumn{1}{c|}{}                    & 0.9 & 0.229    & 0.227 & 0.216 & 0.224 & 0.229 & 0.219 / 0.024 & 0.217 / 0.013 & 0.214 / 0.018 \\ \bottomrule
\end{tabular}
\caption{Comparison of accuracy with baseline and ASR by an increasing proportion of malicious clients ($\xi\geq 0.5$), \#clients $n$ and non-iid degree $q$, where CIFAR-10 is used. The results under targeted attacks are in the form of “testing accuracy / ASR".}
\label{tab:cifar}
\end{table*}

In Figures~\ref{fig:xi06fmnist} and~\ref{fig:xi06cifar}, we present the comparison of testing accuracy among baseline, semi-honest and malicious groups under targeted attacks and ASR between the groups under untargeted attacks. In addition, we also give the experimental results by varying proportion of malicious clients ($\xi\geq 0.5$), \#clients $n$ and non-iid degree $q$ in Tables~\ref{tab:fmnist} and~\ref{tab:cifar}.
We see that the results are consistent with the analysis in Section~\ref{sec:expacc}: the untargeted attacks nearly have no impact on the accuracy of the final model (only slightly decreasing the speed of convergence). 
Since GA may directly upload noise, it can be easily detected from the beginning of the training to the end, resulting in the same convergence as the baseline.
For LFA, Krum, Trim, and AA Attacks, \texttt{MUDGUARD} is also difficult to distinguish between semi-honest and malicious clients at the beginning.   Thus, the speed of convergence is slightly decreased.
The main difference is that: LeNet achieves around $78\%$ in FMNIST, while ResNet-18 provides approx. $56\%$ accuracy in CIFAR-10.

We also provide comparisons with the state-of-the-art Byzantine-robust methods in Figure~\ref{fig:compfmnist} and~\ref{fig:compcifar10} on FMNIST and CIFAR-10 respectively. 
Similar to Figure~\ref{fig:comp}, under the malicious majority of untargeted attacks, the testing accuracies of \texttt{FLTrust}, \texttt{MUDGUARD}, and \texttt{weights-MUDGUARD} are maintained at the same level of the baseline. 
Under the targeted attacks, \texttt{FLTrust}, \texttt{MUDGUARD}, and \texttt{weights-MUDGUARD} can restrain ASR to about 0\%-10\%. For more detailed explanations, please refer to Section~\ref{sec:expacc}.

\begin{figure}[t]
    \centering
    \begin{subfigure}[b]{0.235\textwidth}
        \centering
        \includegraphics[width=1.04\textwidth]{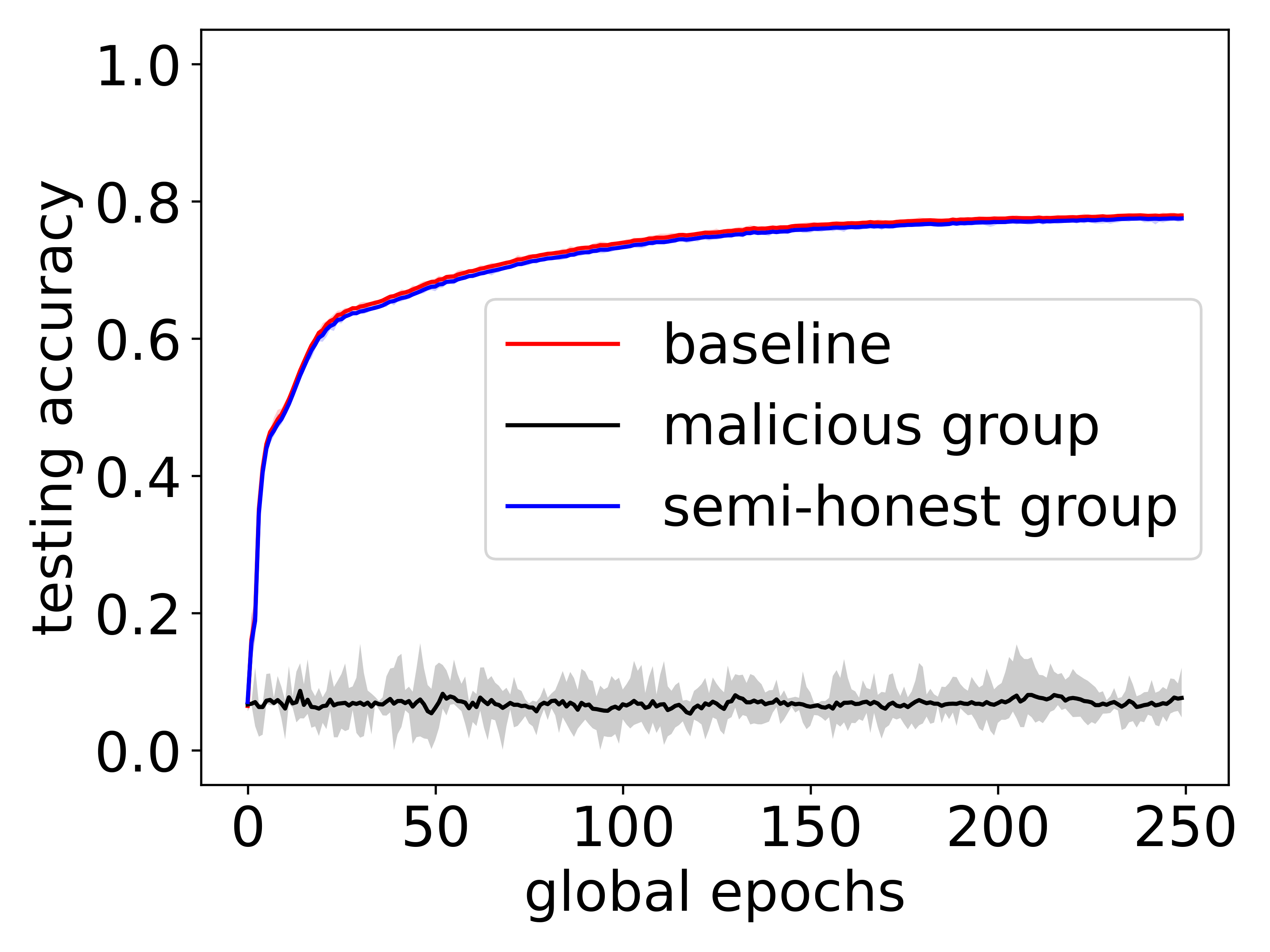}
        \caption{Gaussian Attack}
    \end{subfigure}
    \begin{subfigure}[b]{0.235\textwidth}
        \centering
        \includegraphics[width=1.04\textwidth]{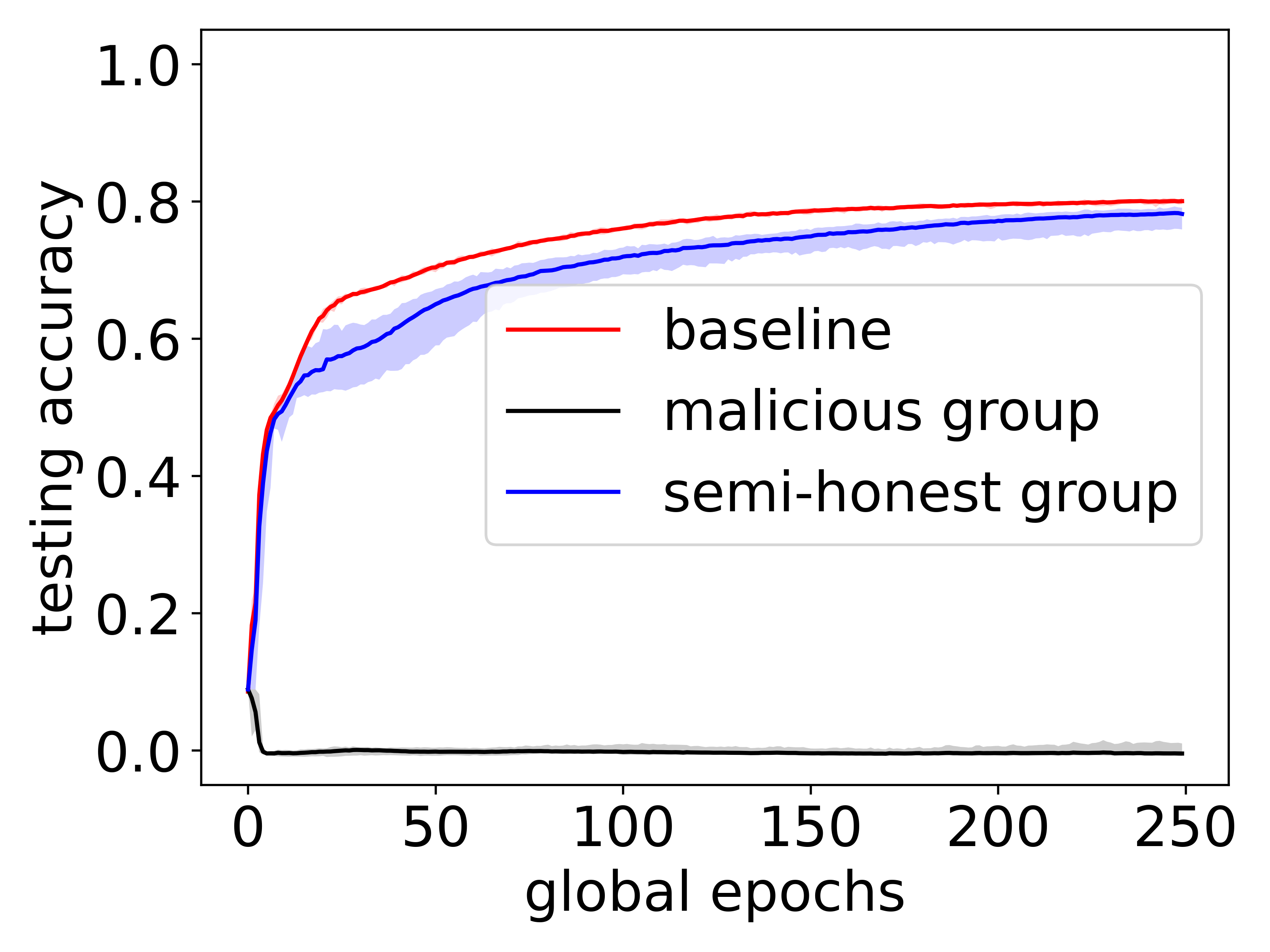}
        \caption{Label Flipping Attack}
    \end{subfigure}    
    \begin{subfigure}[b]{0.235\textwidth}
        \centering
        \includegraphics[width=1.04\textwidth]{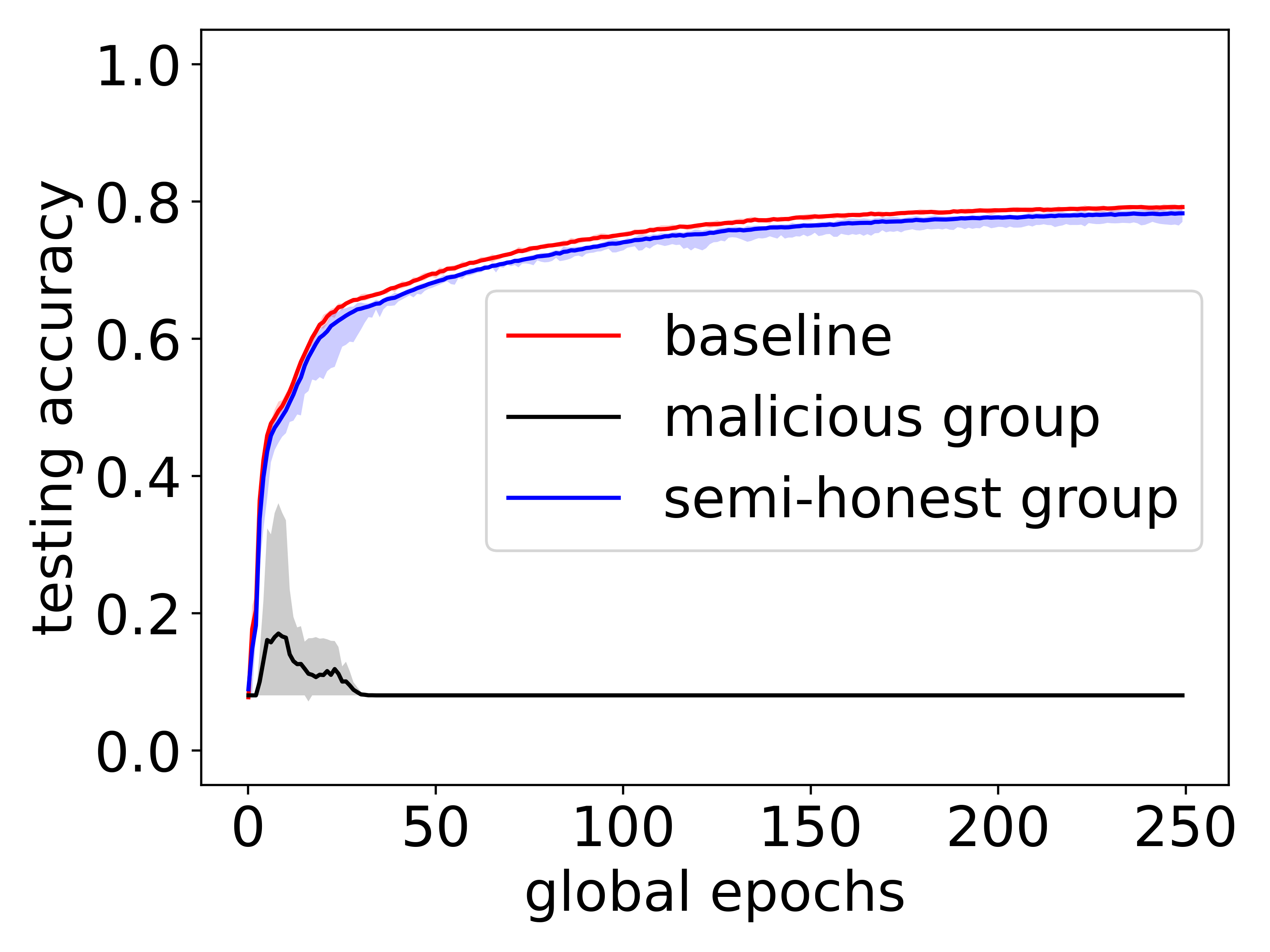}
        \caption{Krum Attack}
    \end{subfigure}    
    \begin{subfigure}[b]{0.235\textwidth}
        \centering
        \includegraphics[width=1.04\textwidth]{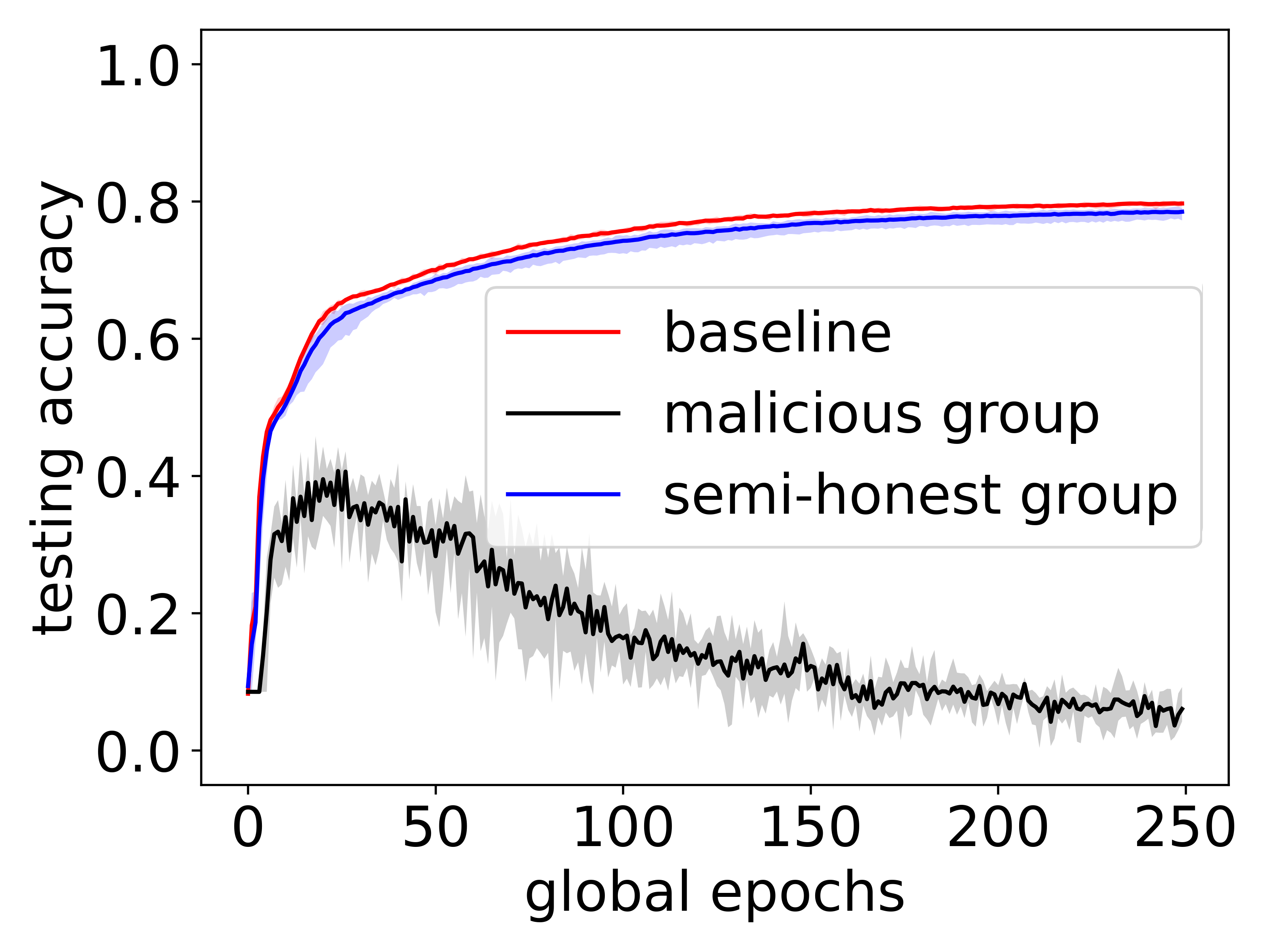}
        \caption{Trim Attack}
    \end{subfigure}  
        \begin{subfigure}[b]{0.235\textwidth}
        \centering
        \includegraphics[width=1.04\textwidth]{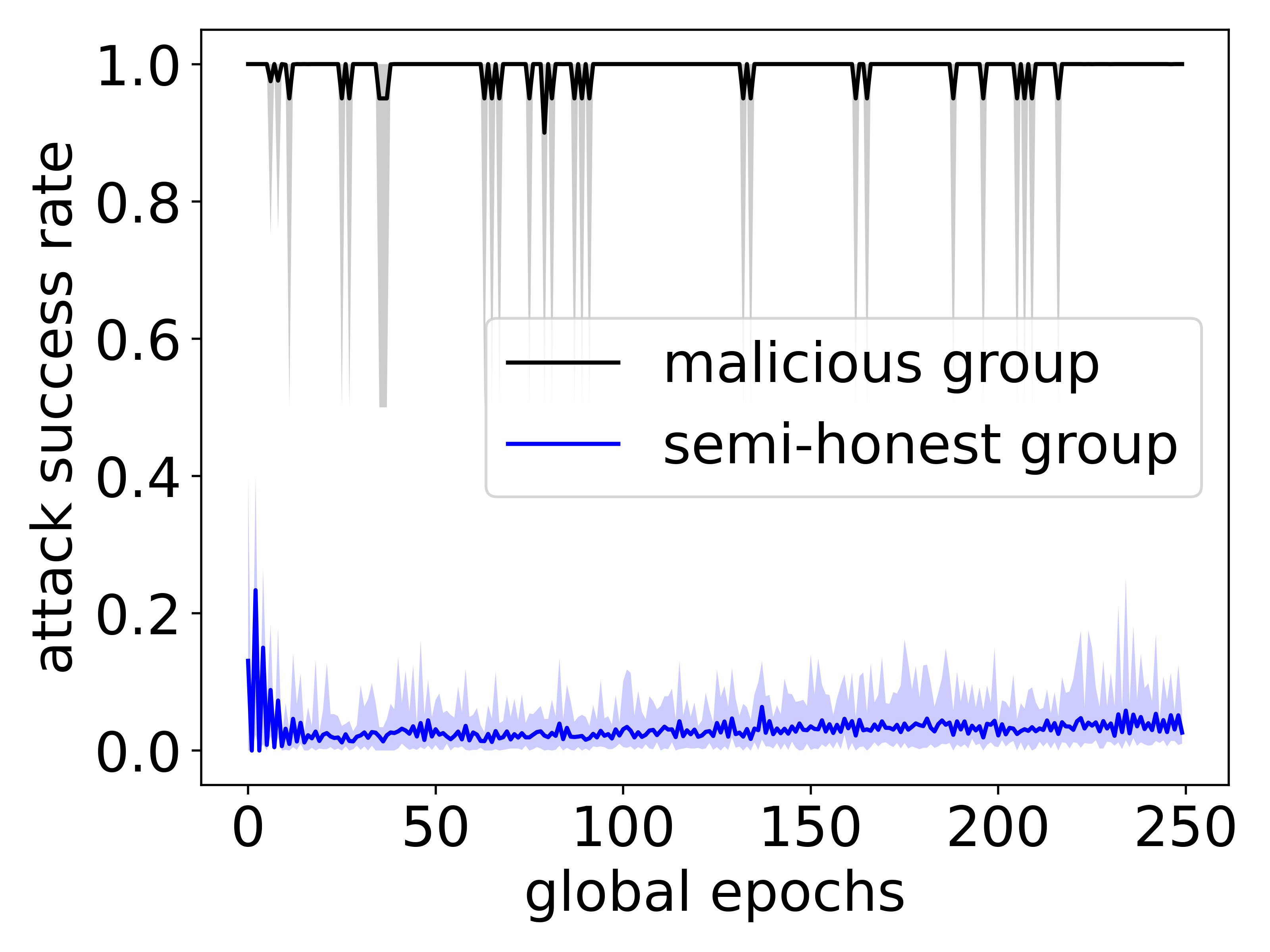}
        
        \caption{Adaptive Attack}
    \end{subfigure}  
    \begin{subfigure}[b]{0.235\textwidth}
        \centering
        \includegraphics[width=1.04\textwidth]{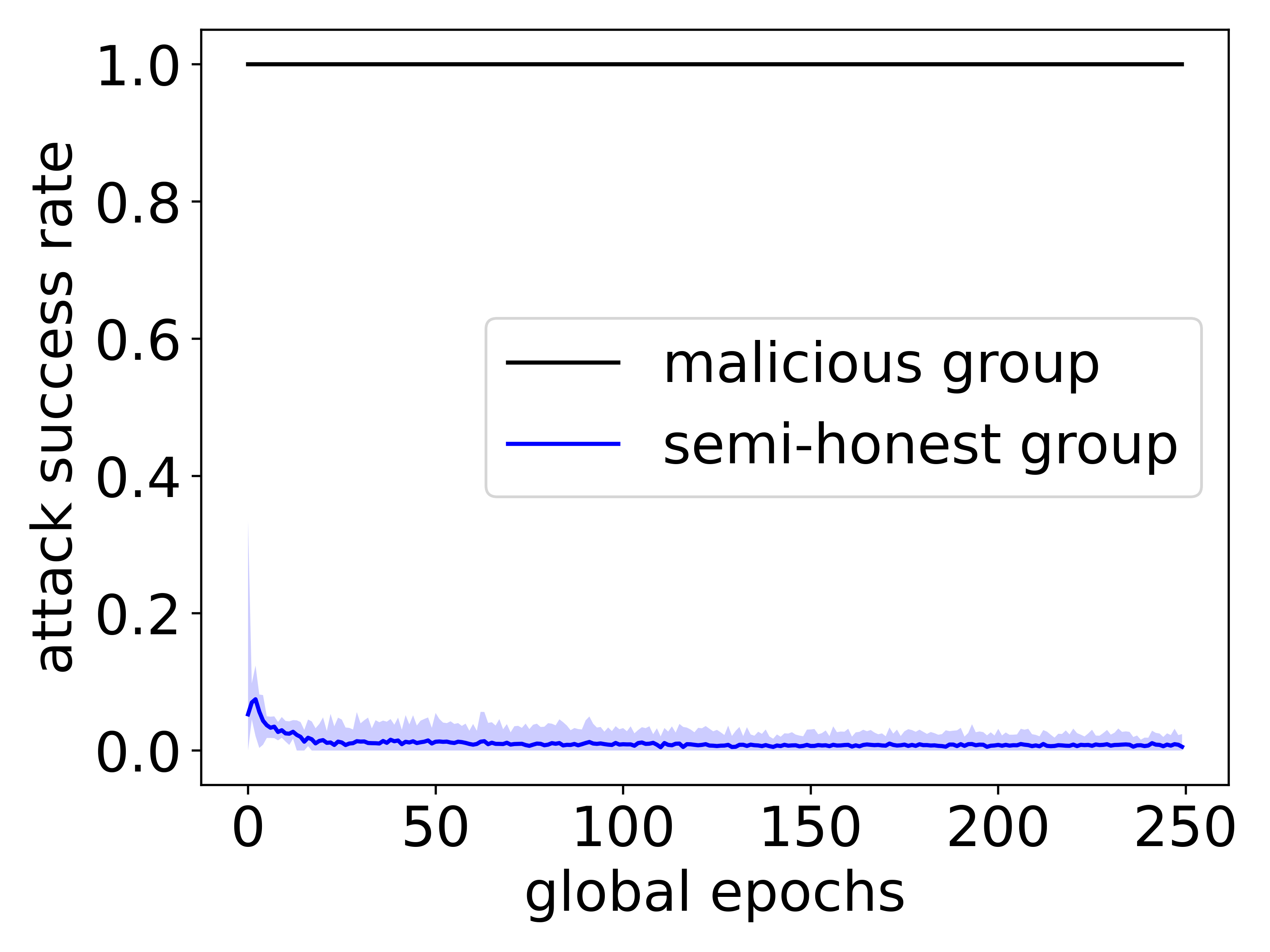}
        \caption{Backdoor Attack}
    \end{subfigure}
    \caption{Comparison of testing accuracy among baseline, semi-honest, and malicious groups under targeted attacks (a-e) and ASR between the groups under untargeted attack (f), where we use LeNet to train FMNIST by default settings in Table~\ref{tab:flsetting}.}
    \label{fig:xi06fmnist}
\end{figure}

\begin{figure}[t]
    \centering
    \begin{subfigure}[b]{0.235\textwidth}
        \centering
        \includegraphics[width=1.04\textwidth]{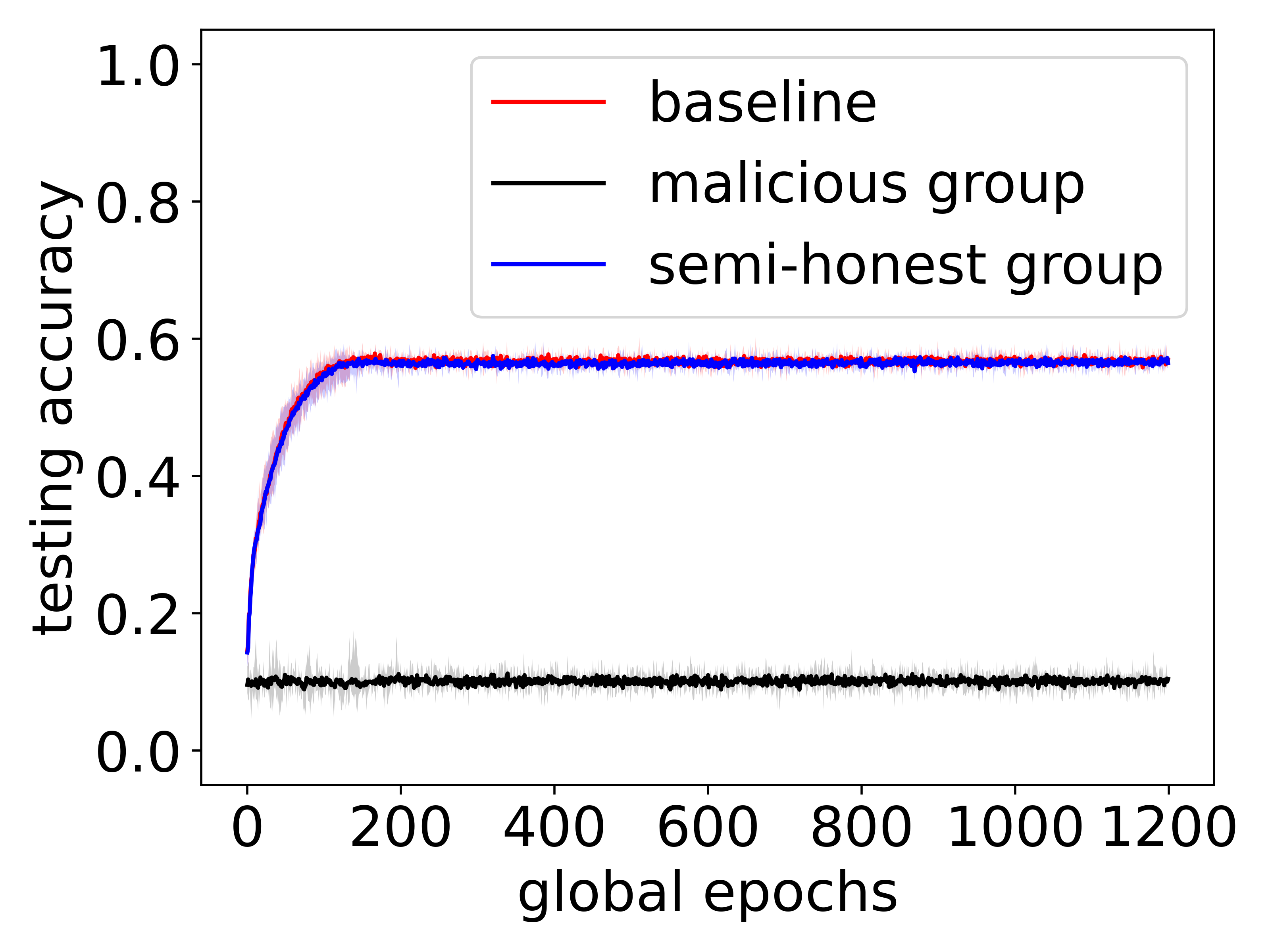}
        \caption{Gaussian Attack}
    \end{subfigure}
    \begin{subfigure}[b]{0.235\textwidth}
        \centering
        \includegraphics[width=1.04\textwidth]{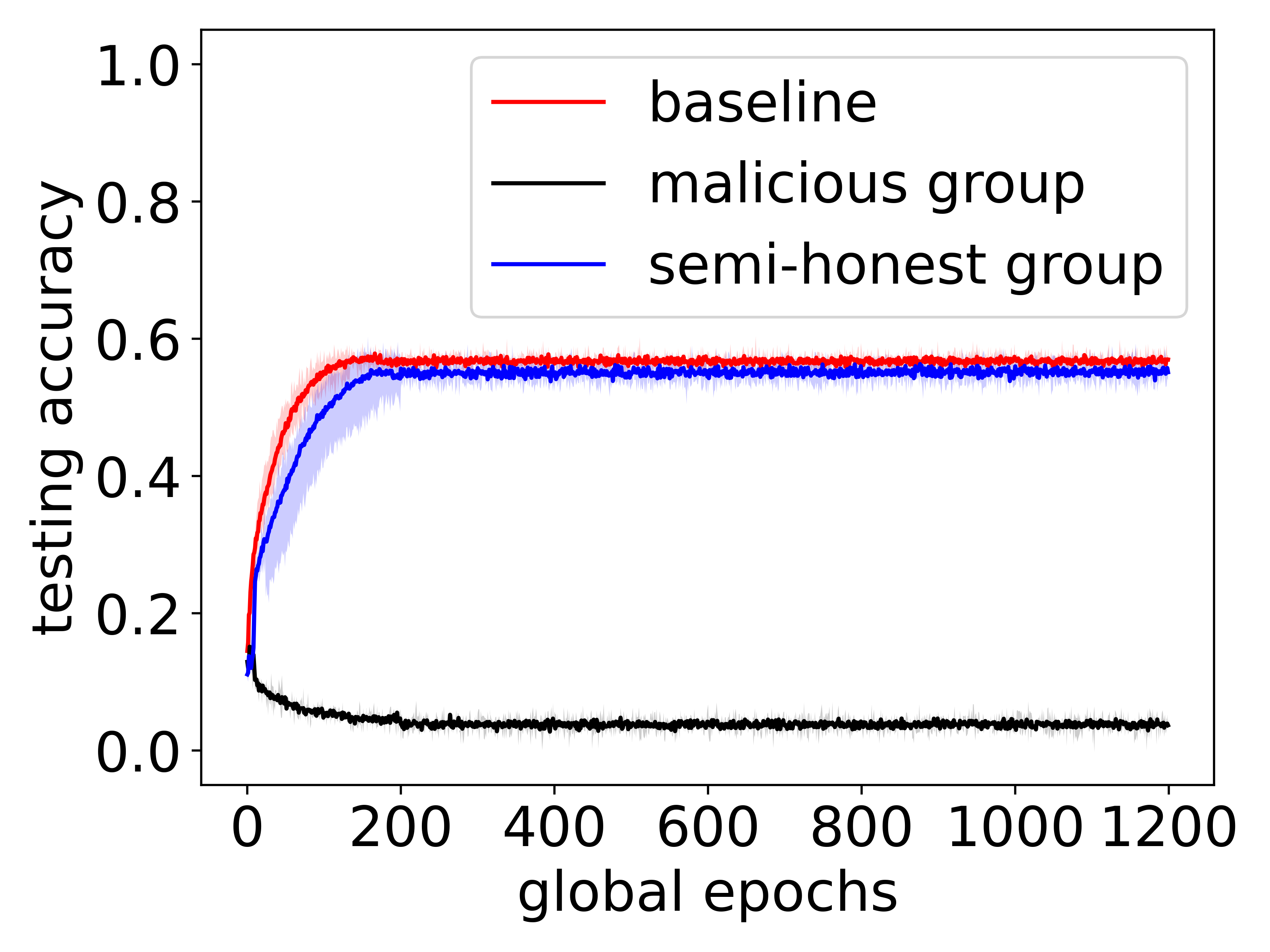}
        \caption{Label Flipping Attack}
    \end{subfigure}    
    \begin{subfigure}[b]{0.235\textwidth}
        \centering
        \includegraphics[width=1.04\textwidth]{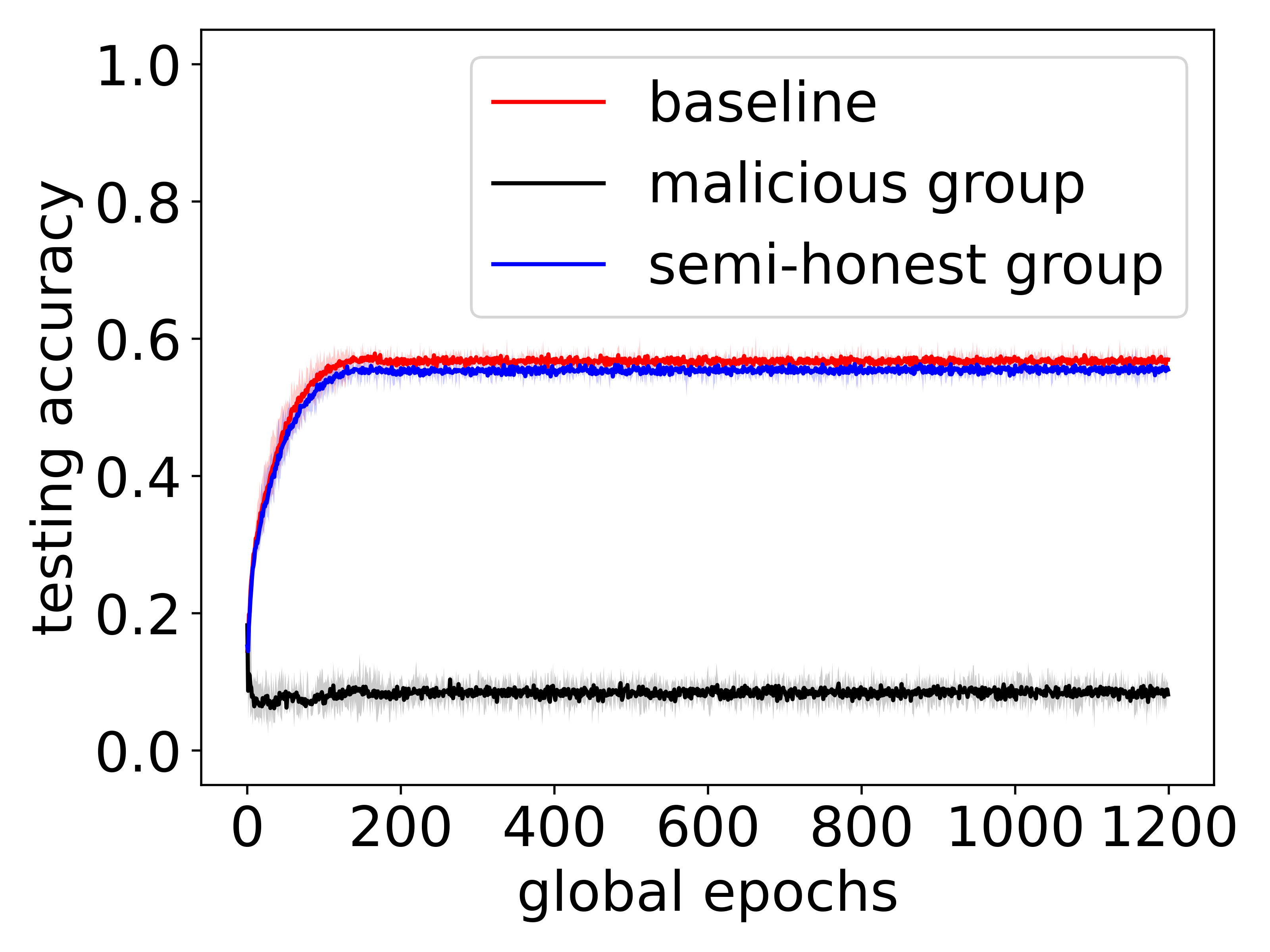}
        \caption{Krum Attack}
    \end{subfigure}    
    \begin{subfigure}[b]{0.235\textwidth}
        \centering
        \includegraphics[width=1.04\textwidth]{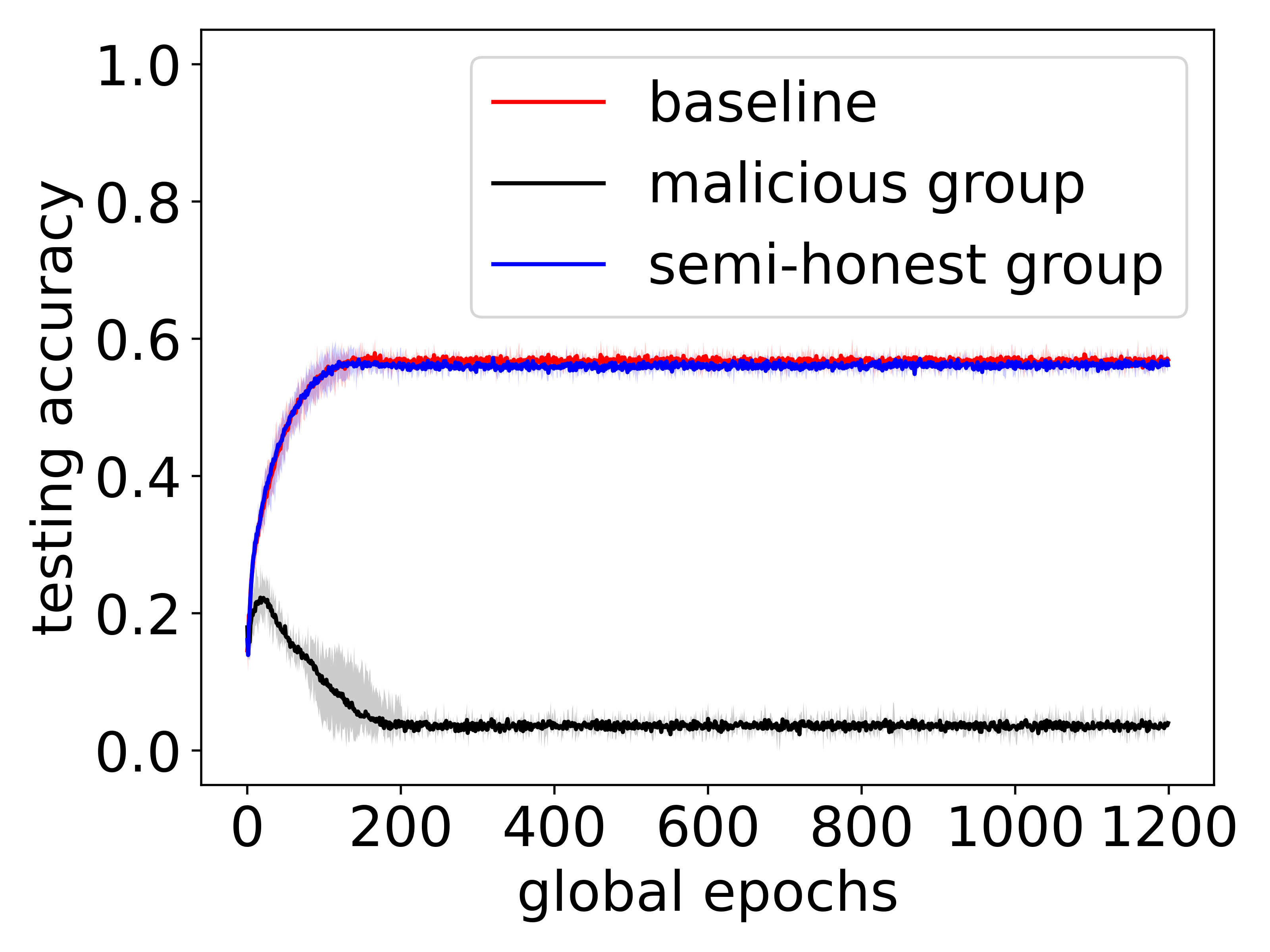}
        \caption{Trim Attack}
    \end{subfigure}  
    \begin{subfigure}[b]{0.235\textwidth}
        \centering
        \includegraphics[width=1.04\textwidth]{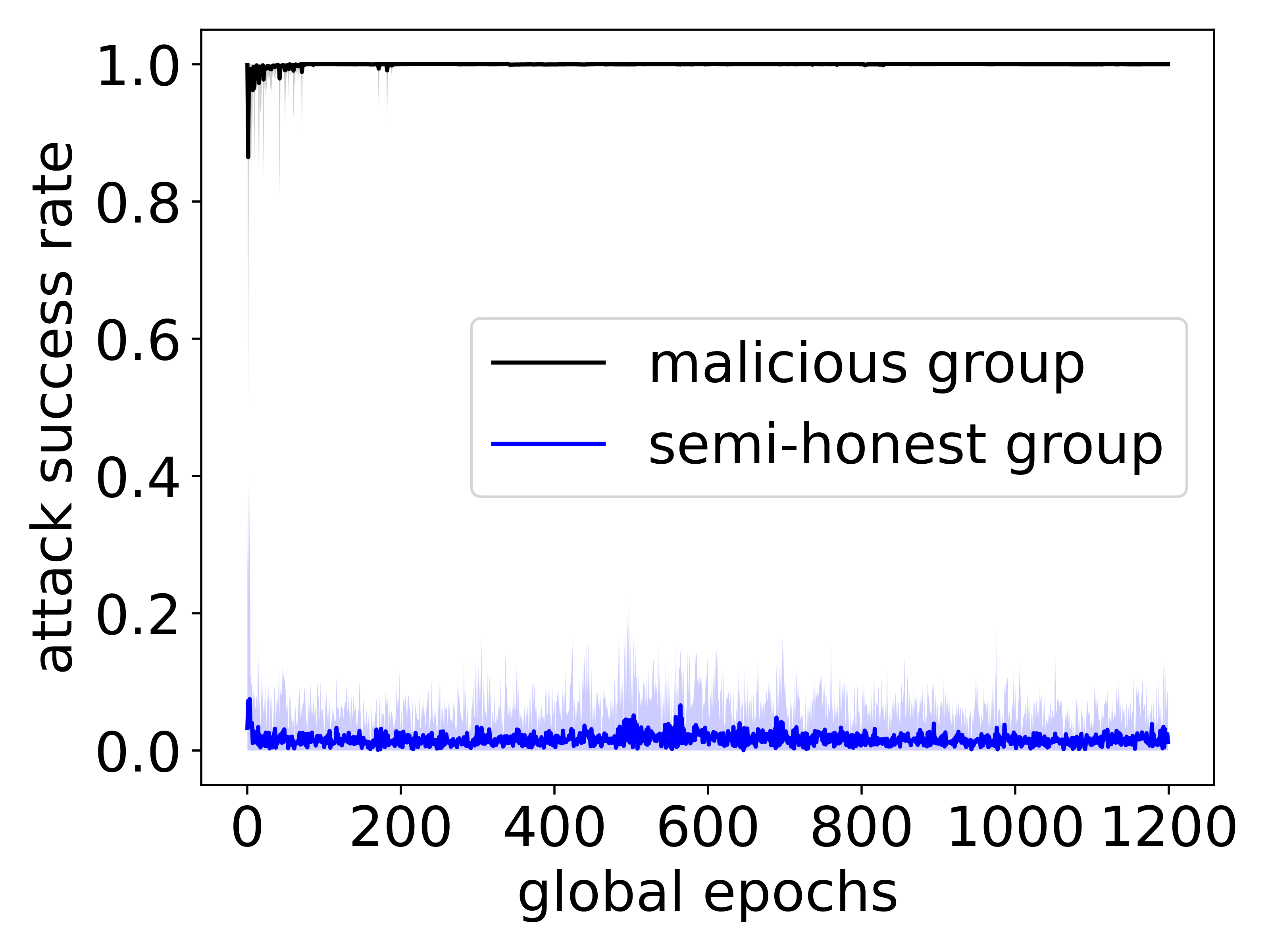}
        
        \caption{Adaptive Attack}
    \end{subfigure}  
    \begin{subfigure}[b]{0.235\textwidth}
        \centering
        \includegraphics[width=1.04\textwidth]{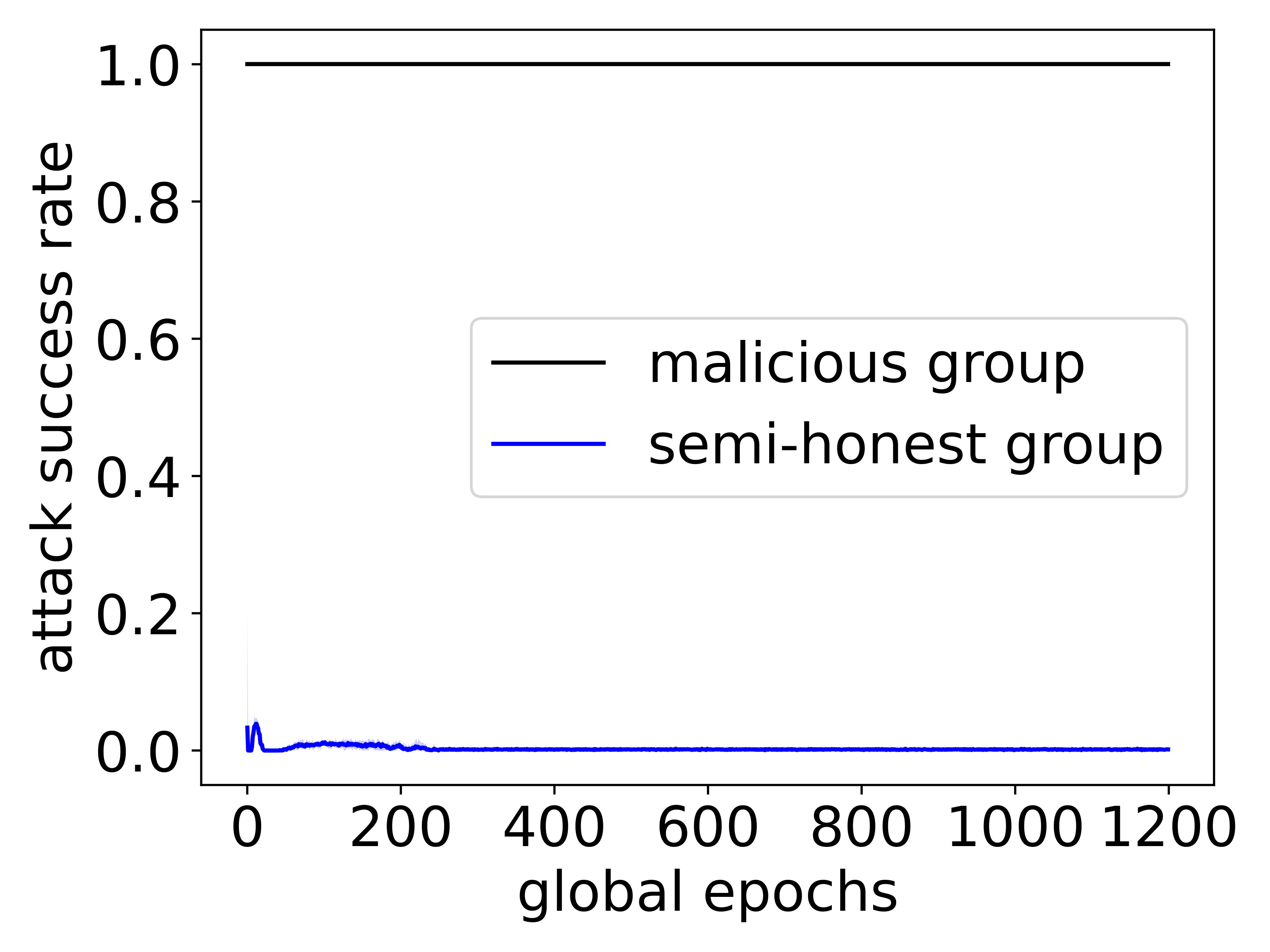}
        \caption{Backdoor Attack}
    \end{subfigure}
    \begin{subfigure}[b]{0.235\textwidth}
        \centering
        \includegraphics[width=1.04\textwidth]{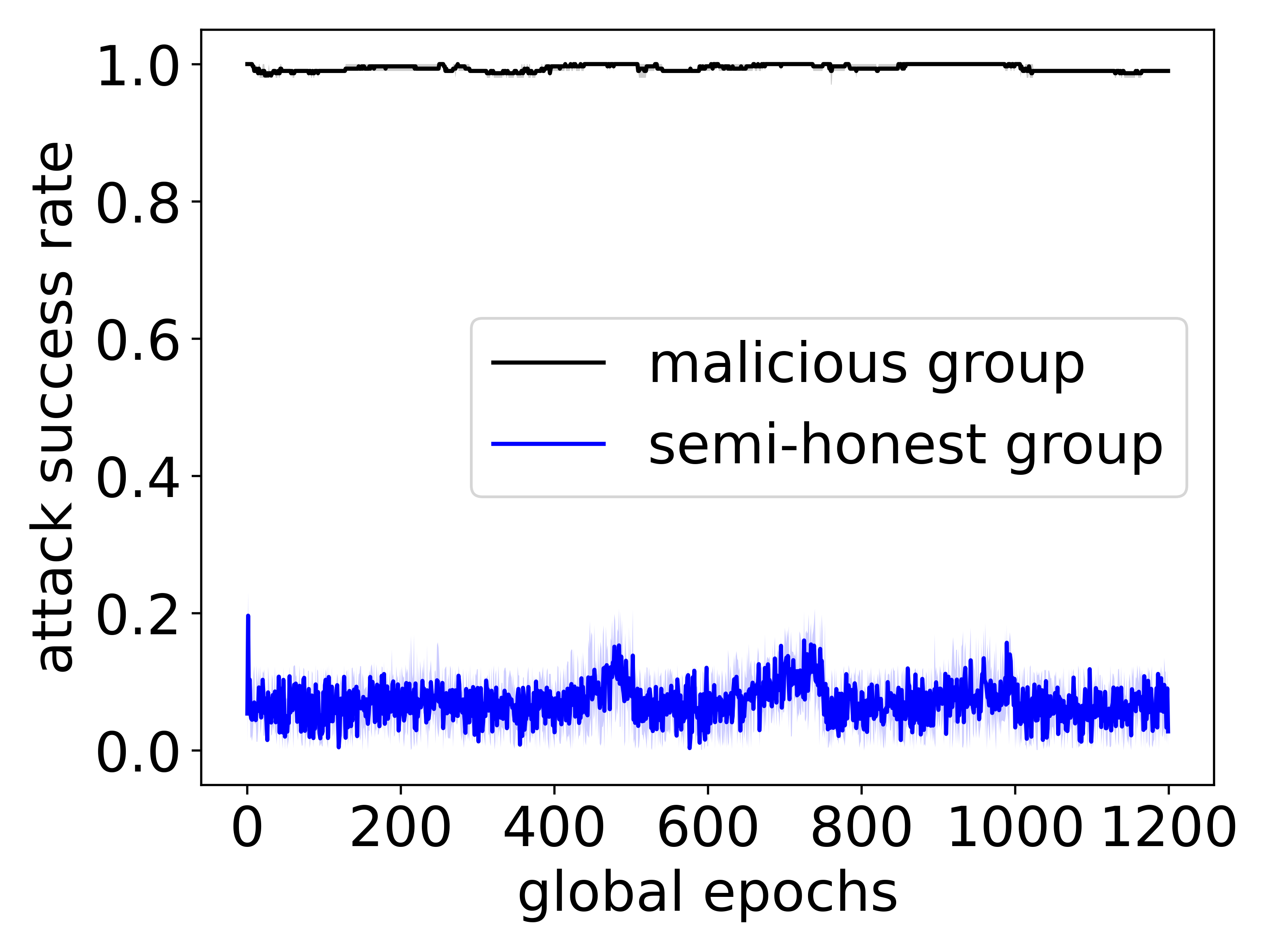}
        \caption{Edge-case Attack}
    \end{subfigure}    
    \caption{Comparison of testing accuracy among baseline, semi-honest, and malicious groups under targeted attacks (a-e) and ASR between the groups under untargeted attacks (f-g), where we use ResNet-18 to train CIFAR-10 by the default settings in Table~\ref{tab:flsetting}.}
    \label{fig:xi06cifar}
\end{figure}

\begin{figure}[!ht]
    \centering
    \begin{subfigure}[b]{0.235\textwidth}
        \centering
        \includegraphics[width=1.04\textwidth]{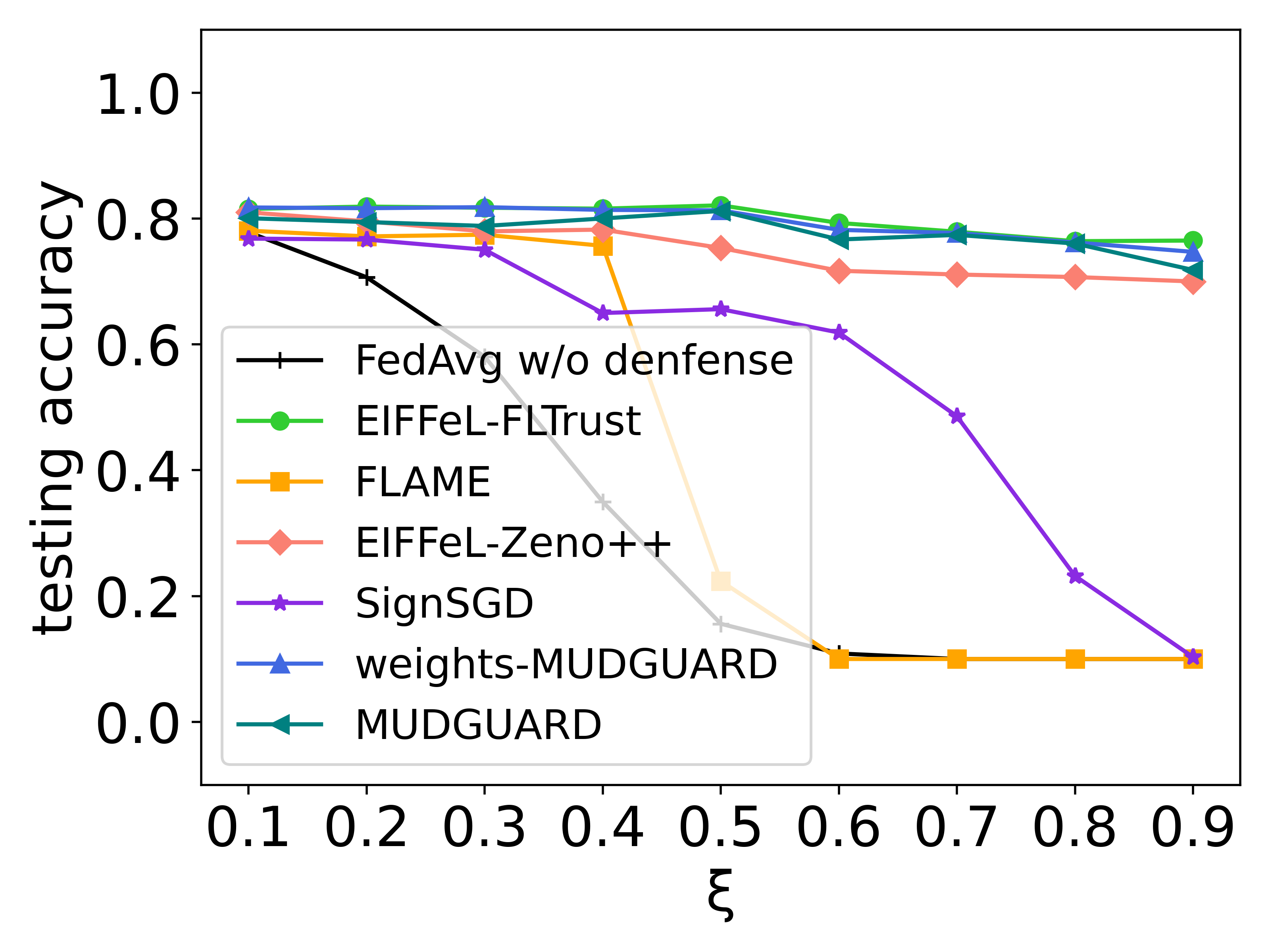}
        \caption{Gaussian Attack}
    \end{subfigure}
    \begin{subfigure}[b]{0.235\textwidth}
        \centering
        \includegraphics[width=1.04\textwidth]{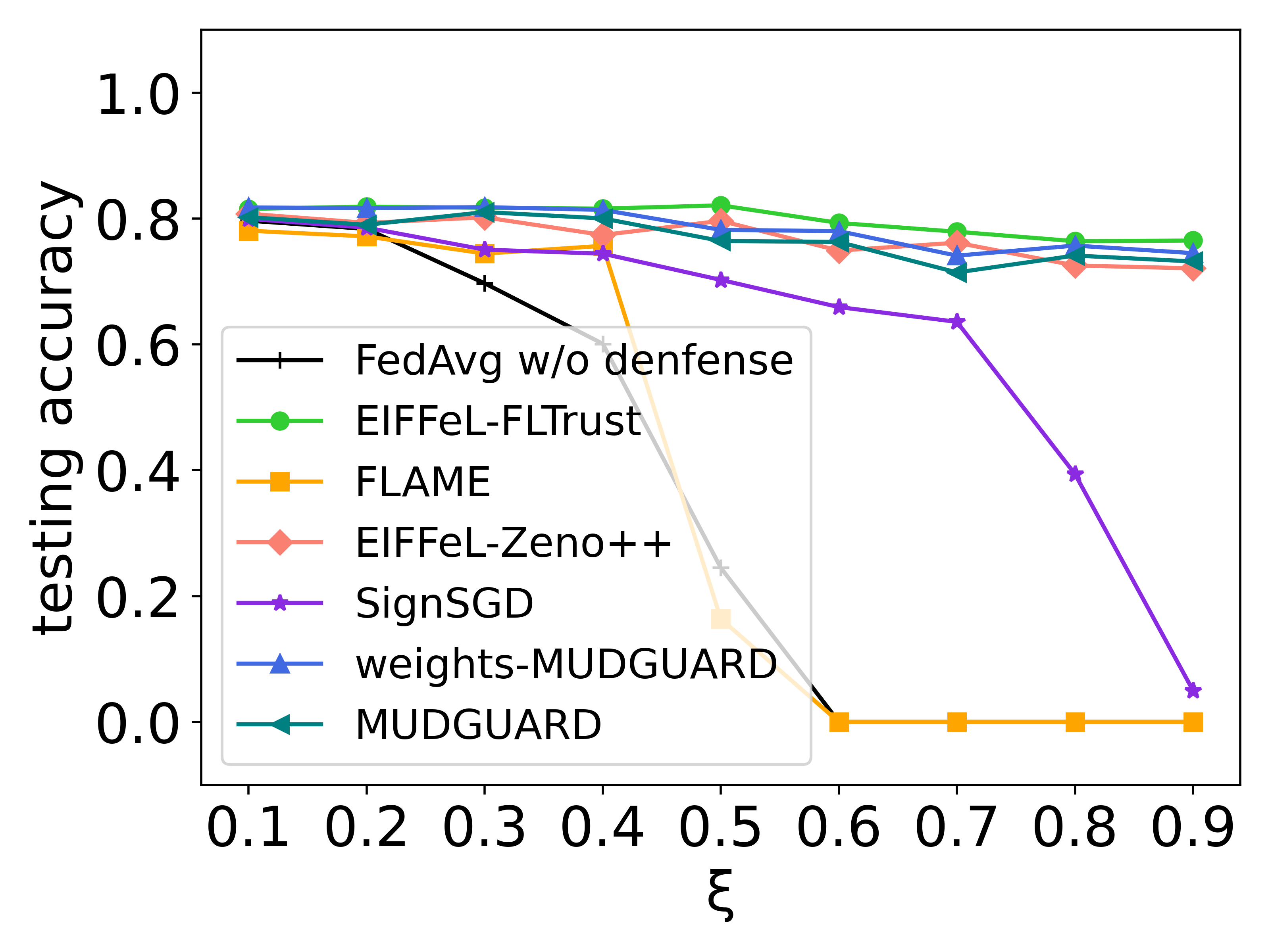}
        \caption{Label Flipping Attack}
    \end{subfigure}    
    \begin{subfigure}[b]{0.235\textwidth}
        \centering
        \includegraphics[width=1.04\textwidth]{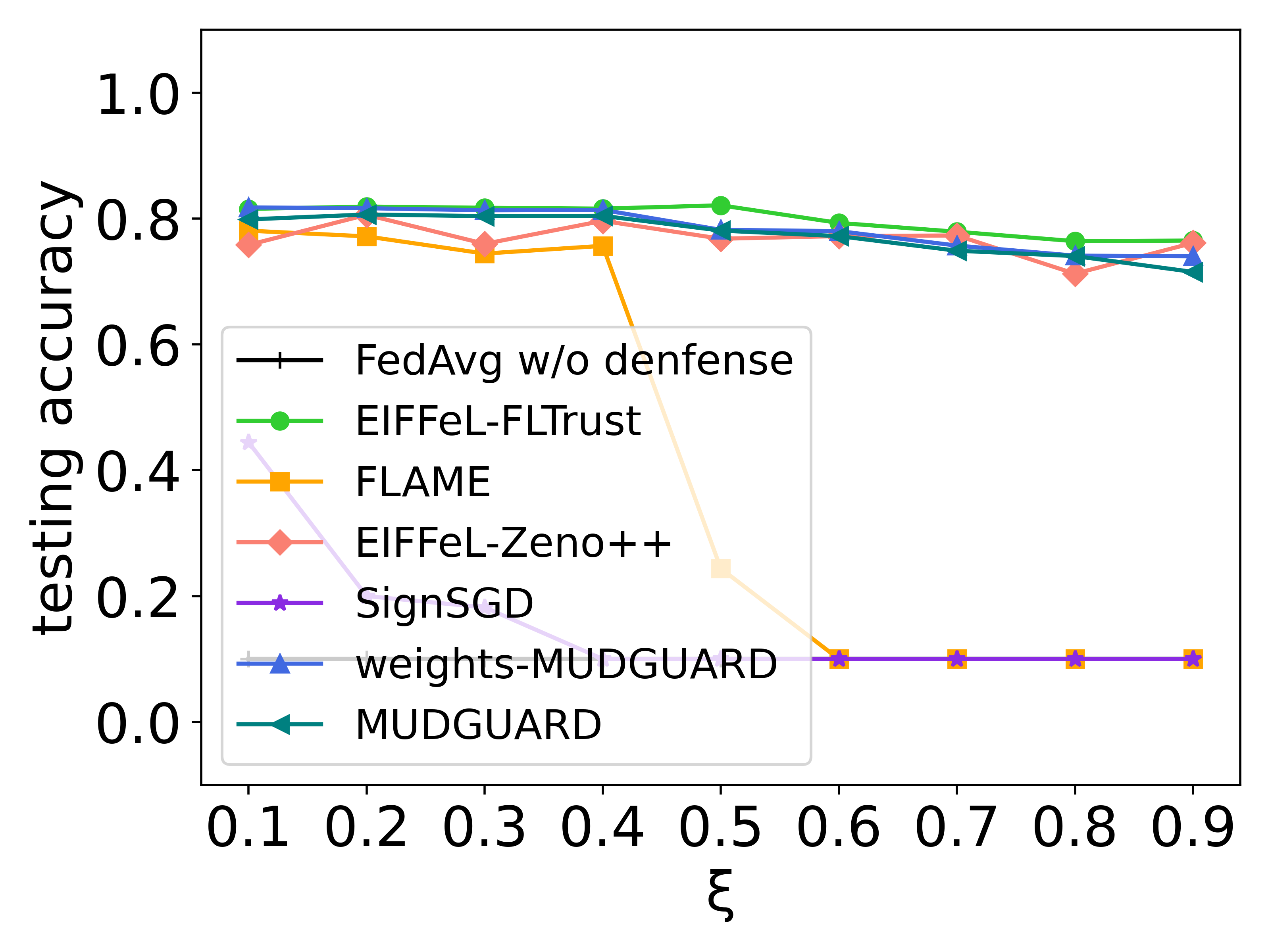}
        \caption{Krum Attack}
    \end{subfigure}    
    \begin{subfigure}[b]{0.235\textwidth}
        \centering
        \includegraphics[width=1.04\textwidth]{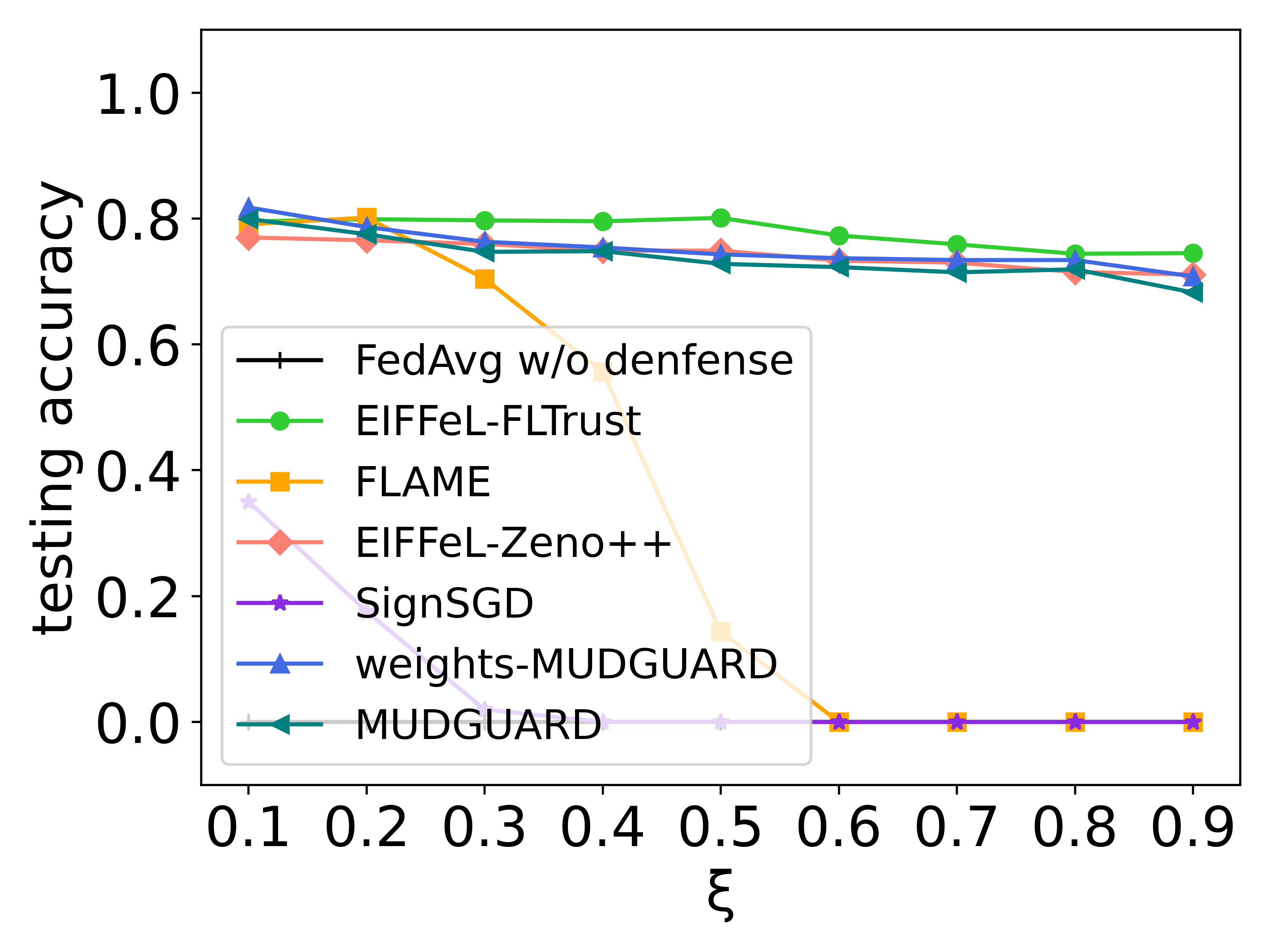}
        \caption{Trim Attack}
    \end{subfigure}  
    \begin{subfigure}[b]{0.235\textwidth}
        \centering
        \includegraphics[width=1.04\textwidth]{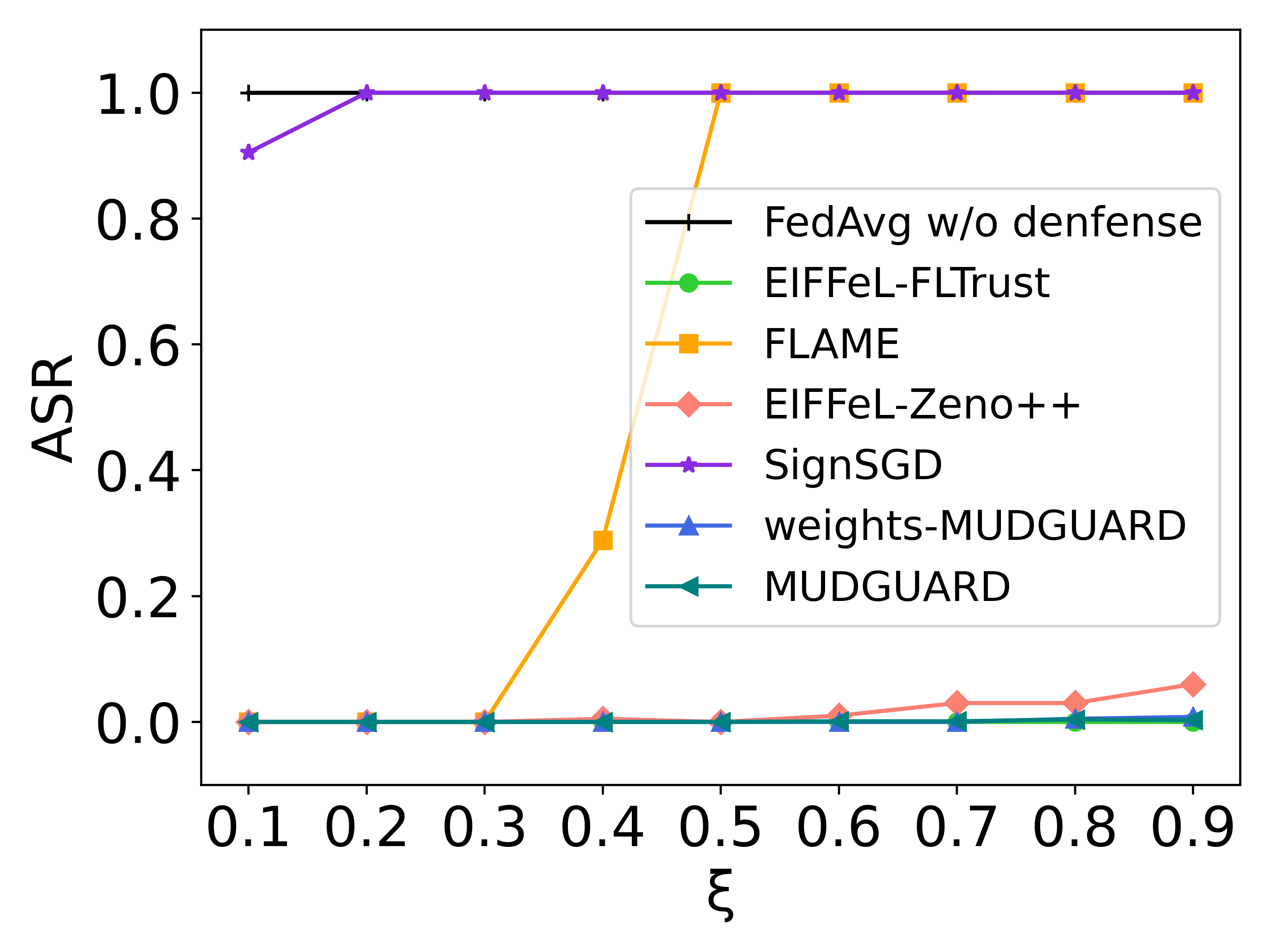}
        \caption{Adaptive Attack}
    \end{subfigure}  
    \begin{subfigure}[b]{0.235\textwidth}
        \centering
        \includegraphics[width=1.04\textwidth]{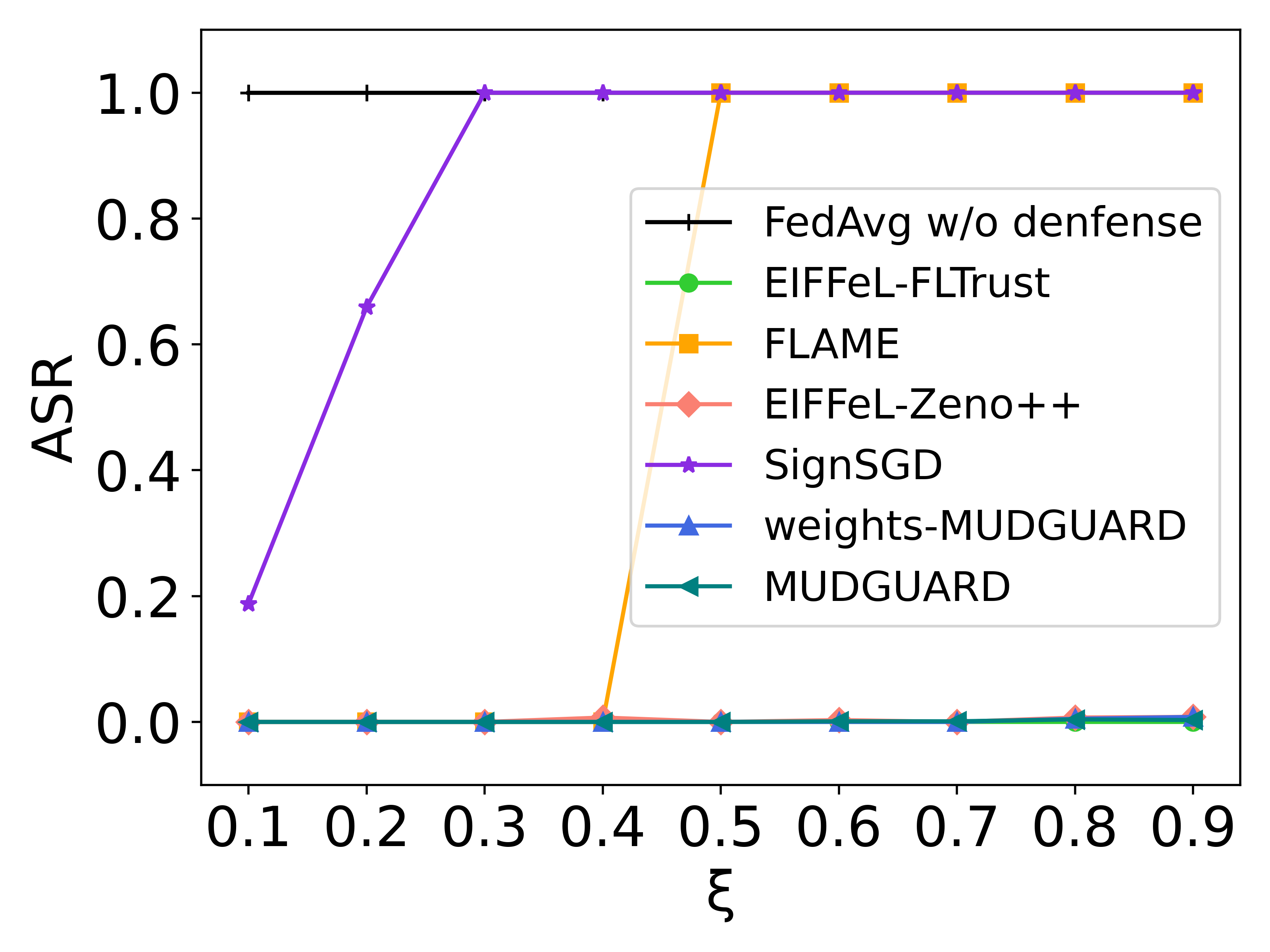}
        \caption{Backdoor Attack}
    \end{subfigure}
  
    \caption{Comparison with Byzantine-robust methods in FMNIST by $\xi=0.1 - 0.9$.}
    \label{fig:compfmnist}
\end{figure}

\begin{figure}[!ht]
    \centering
    \begin{subfigure}[b]{0.235\textwidth}
        \centering
        \includegraphics[width=1.04\textwidth]{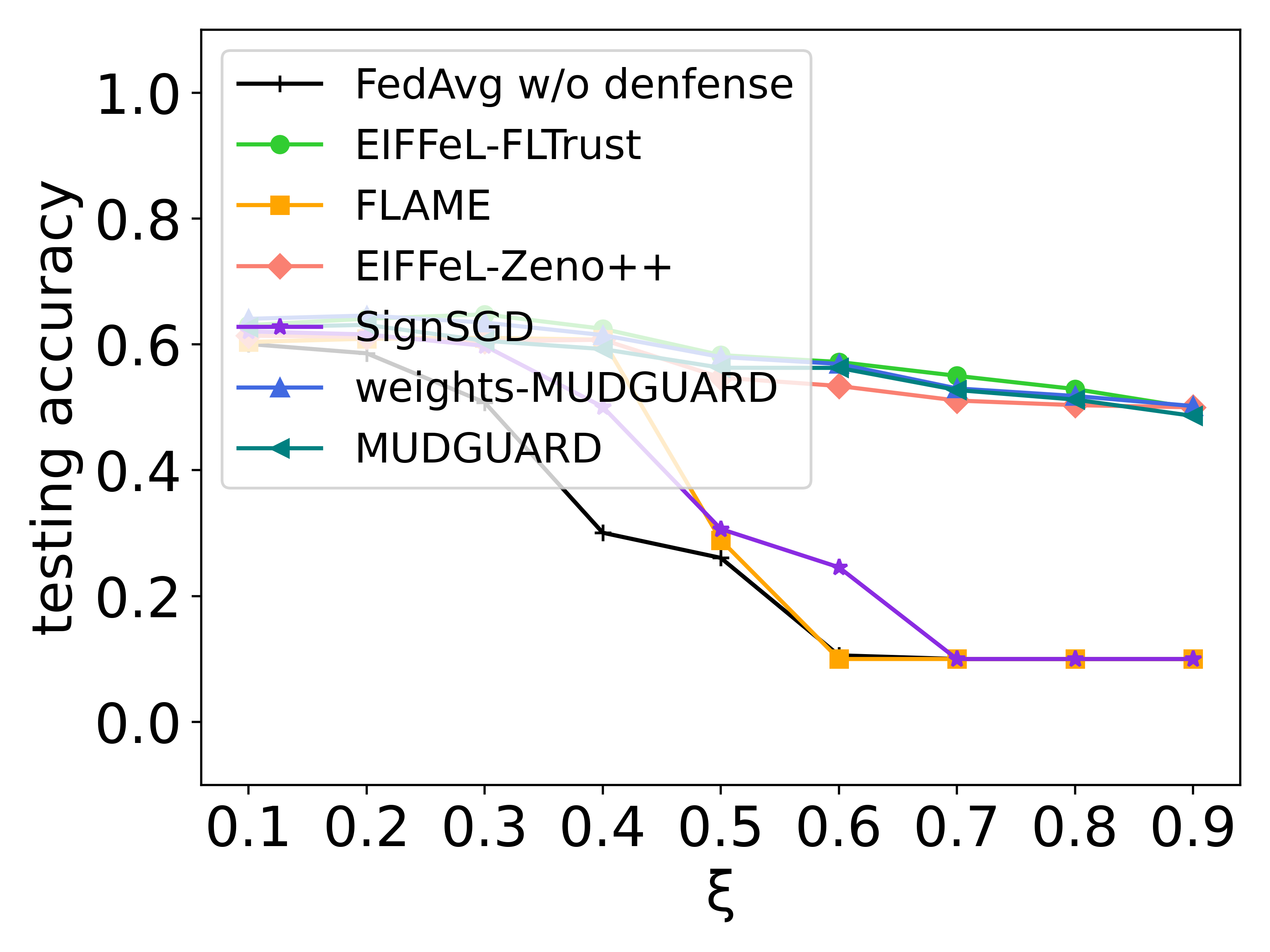}
        \caption{Gaussian Attack}
    \end{subfigure}
    \begin{subfigure}[b]{0.235\textwidth}
        \centering
        \includegraphics[width=1.04\textwidth]{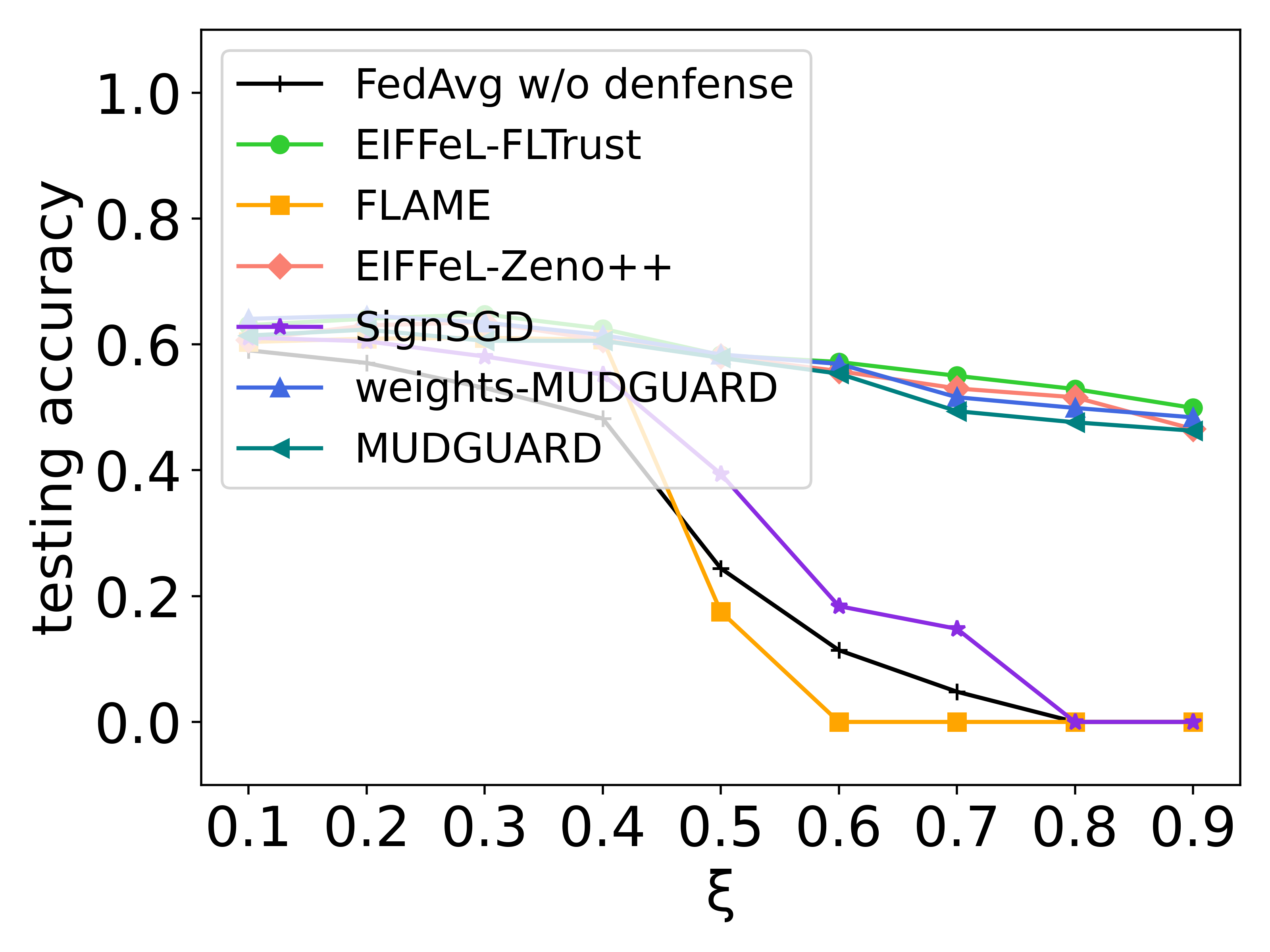}
        \caption{Label Flipping Attack}
    \end{subfigure}    
    \begin{subfigure}[b]{0.235\textwidth}
        \centering
        \includegraphics[width=1.04\textwidth]{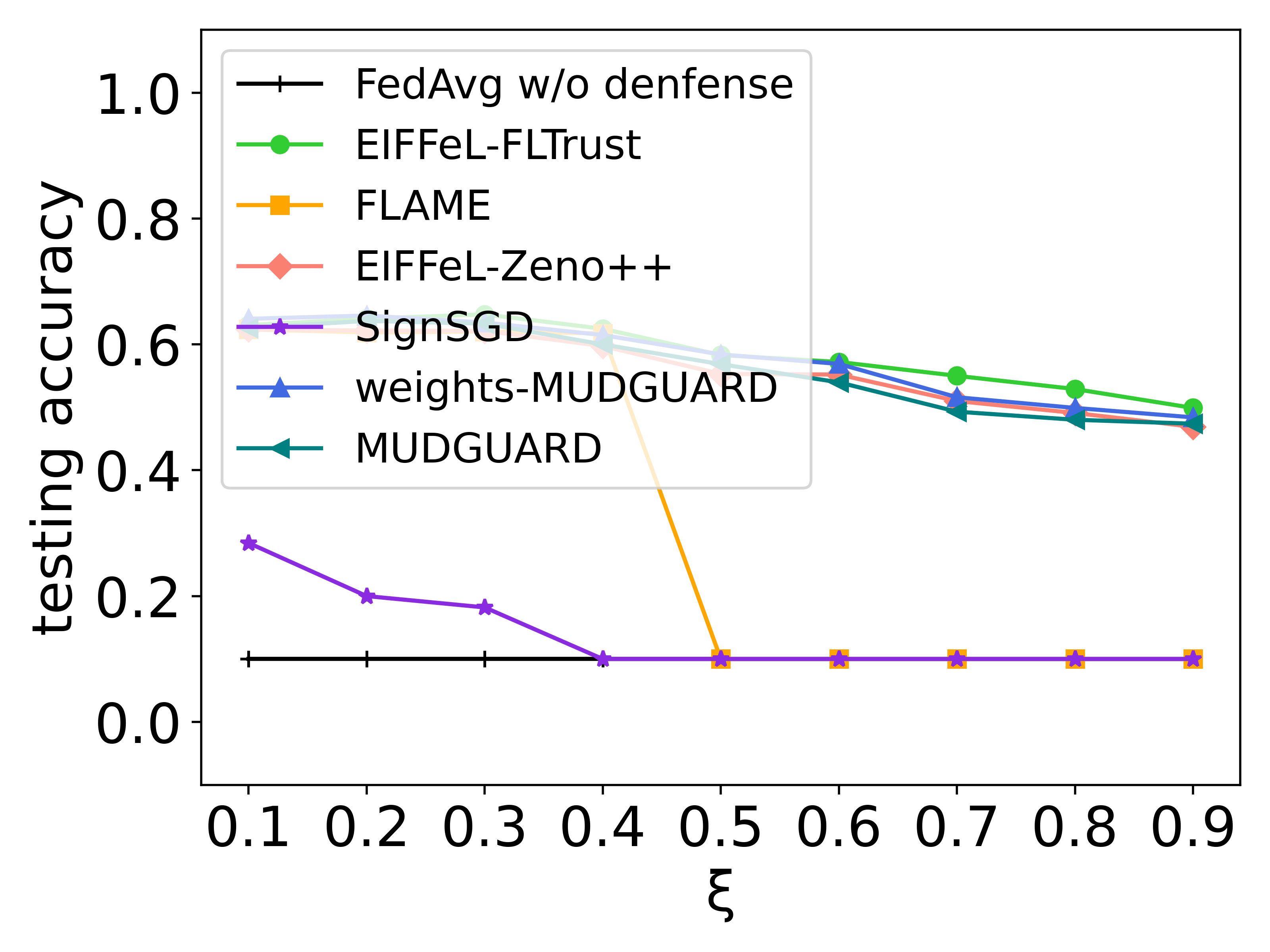}
        \caption{Krum Attack}
    \end{subfigure}    
    \begin{subfigure}[b]{0.235\textwidth}
        \centering
        \includegraphics[width=1.04\textwidth]{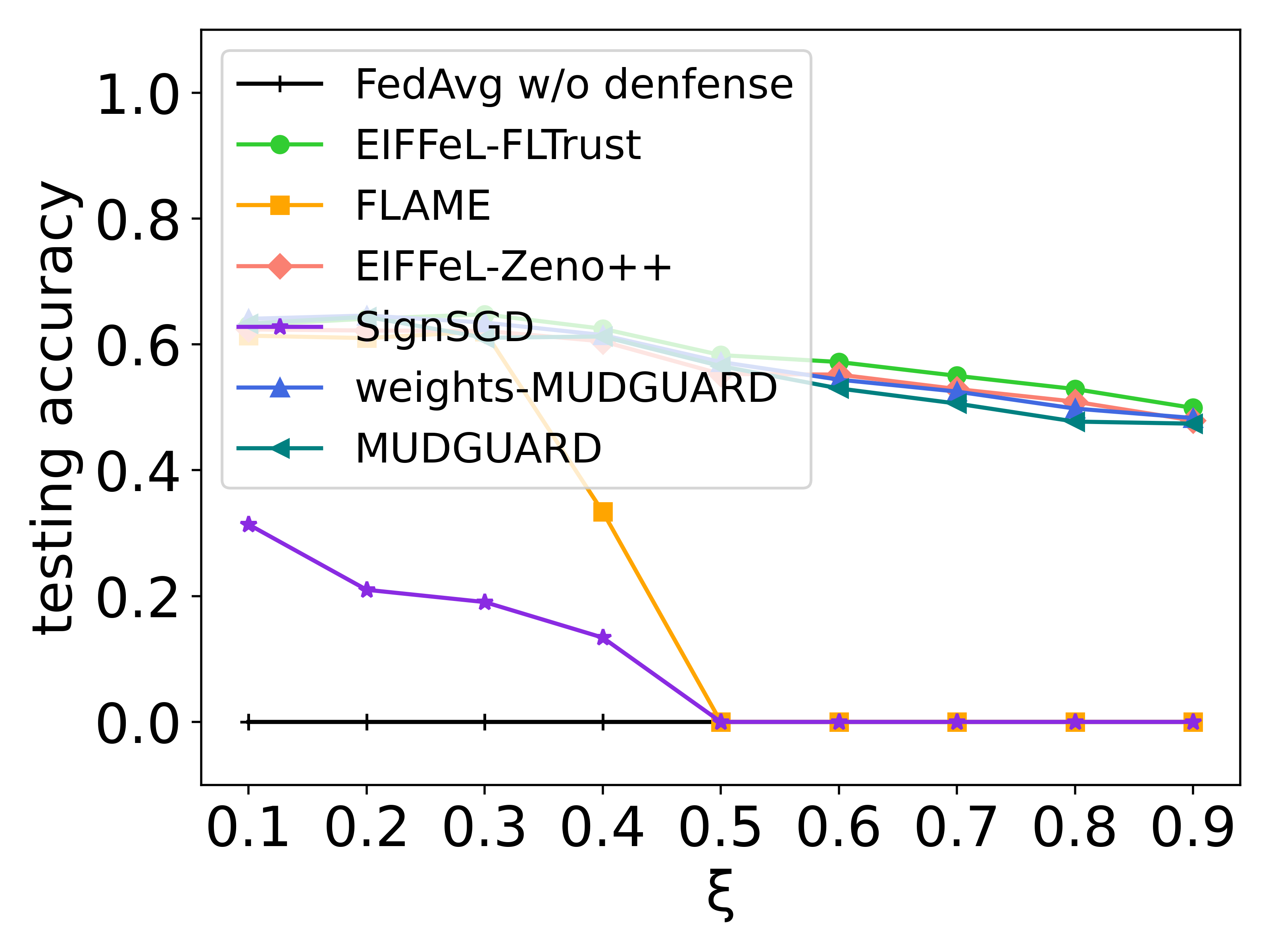}
        \caption{Trim Attack}
    \end{subfigure}  
    \begin{subfigure}[b]{0.235\textwidth}
        \centering
        \includegraphics[width=1.04\textwidth]{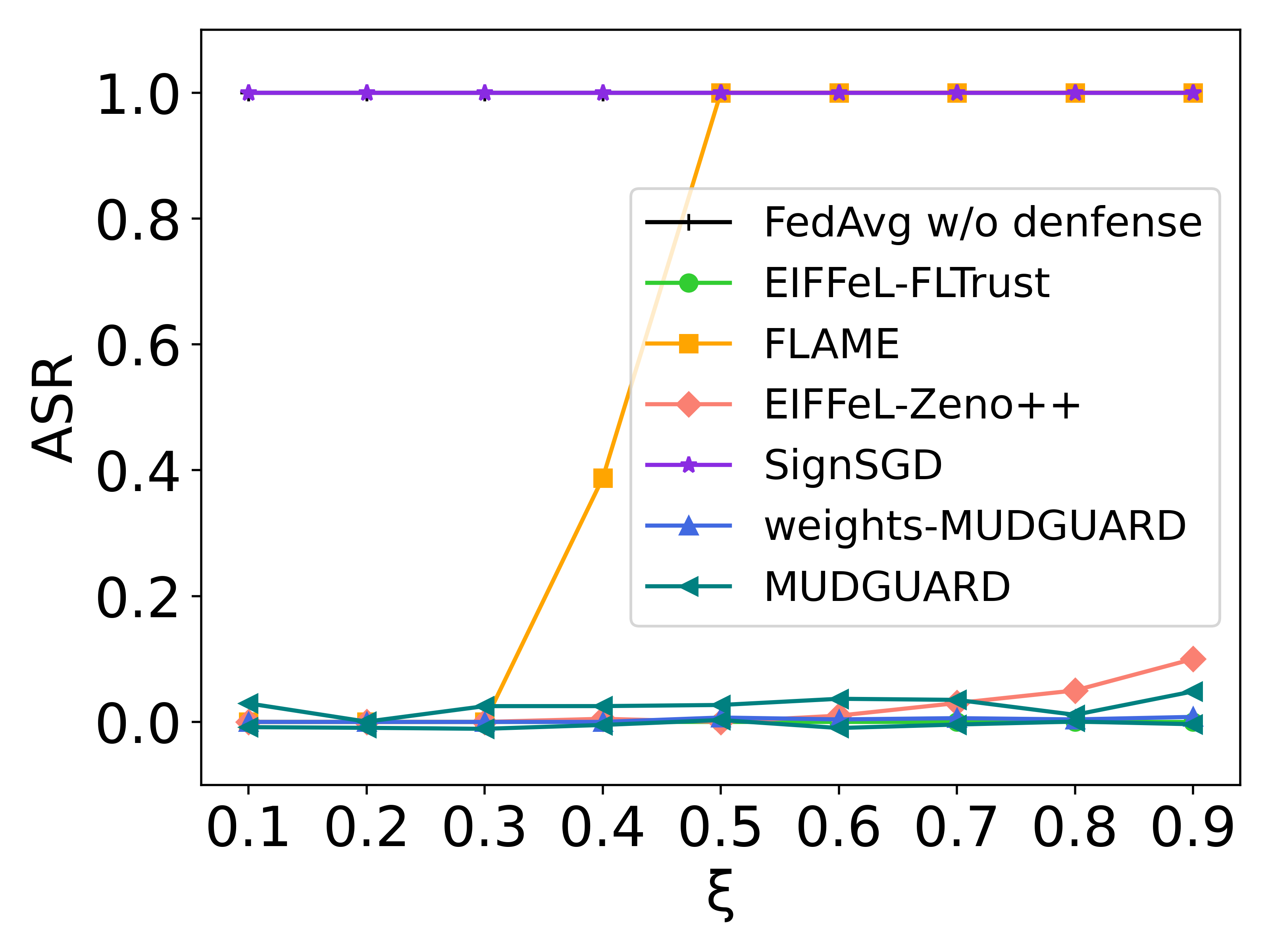}
        \caption{Adaptive Attack}
    \end{subfigure}  
    \begin{subfigure}[b]{0.235\textwidth}
        \centering
        \includegraphics[width=1.04\textwidth]{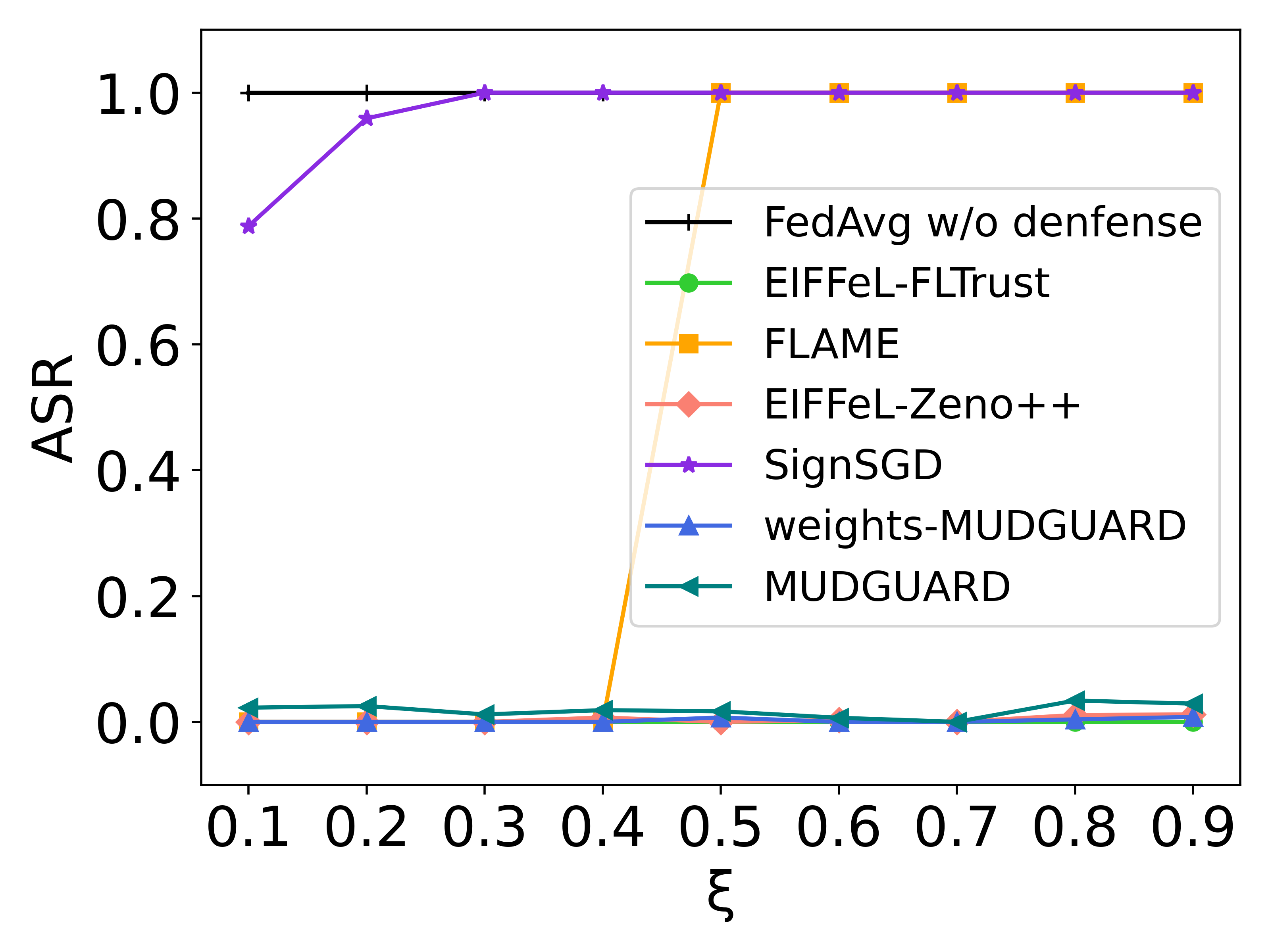}
        \caption{Backdoor Attack}
    \end{subfigure}
    \begin{subfigure}[b]{0.235\textwidth}
        \centering
        \includegraphics[width=1.04\textwidth]{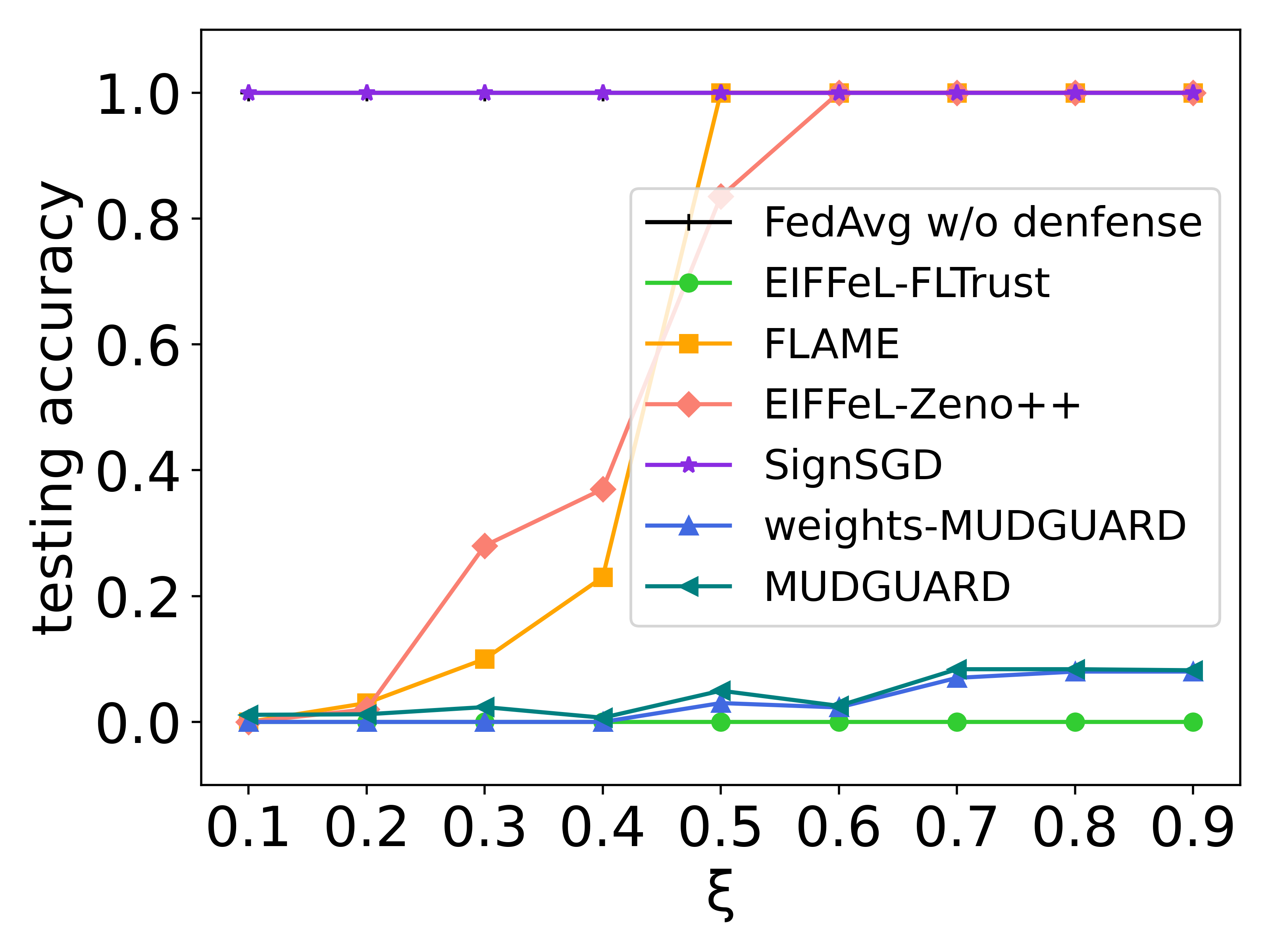}
        \caption{Edge-case Attack}
    \end{subfigure}    
  
    \caption{Comparison with Byzantine-robust methods in CIFAR-10 by $\xi=0.1 - 0.9$.}
    \label{fig:compcifar10}
\end{figure}

\noindent\textbf{Impact of $\alpha$ on robustness of \texttt{MUDGUARD}} 
Turning back to Section~\ref{sec:byagg} and Theorem~\ref{the:1}, the Byzantine-robustness of \texttt{MUDGUARD} relies on whether we can choose a desirable density $\alpha$. From the experimental results in Section~\ref{sec:expacc}, we know that EA has a stronger stealthiness than other attacks. Therefore, we use EA as an example in this section to analyze the influence of $\alpha$ on \texttt{MUDGUARD}. 
Figure~\ref{fig:eaalpha} shows how the ASR and clustering accuracy varies in semi-honest and malicious groups under EA when we change $\alpha$. Consistent with Theorem~\ref{the:1}, once $\alpha$ is great than $\sqrt{2}$, the malicious and semi-honest clients will cluster together, resulting in \texttt{MUDGUARD} loss of the effectiveness of Byzantine-robustness. A similar situation can be found when $\alpha$ is set as too small (in this case, all clients will be identified as noise.). From the experiment, we found 1 is a best practice for $\alpha$, so we set it as the default parameter. Nevertheless, \texttt{MUDGUARD} cannot 100\% guarantee to exclude EA because EA has a strong stealthiness. We will leave this as future work.

\begin{figure}[!ht]
    \centering
    \begin{subfigure}[b]{0.235\textwidth}
        \centering
        \includegraphics[width=1.04\textwidth]{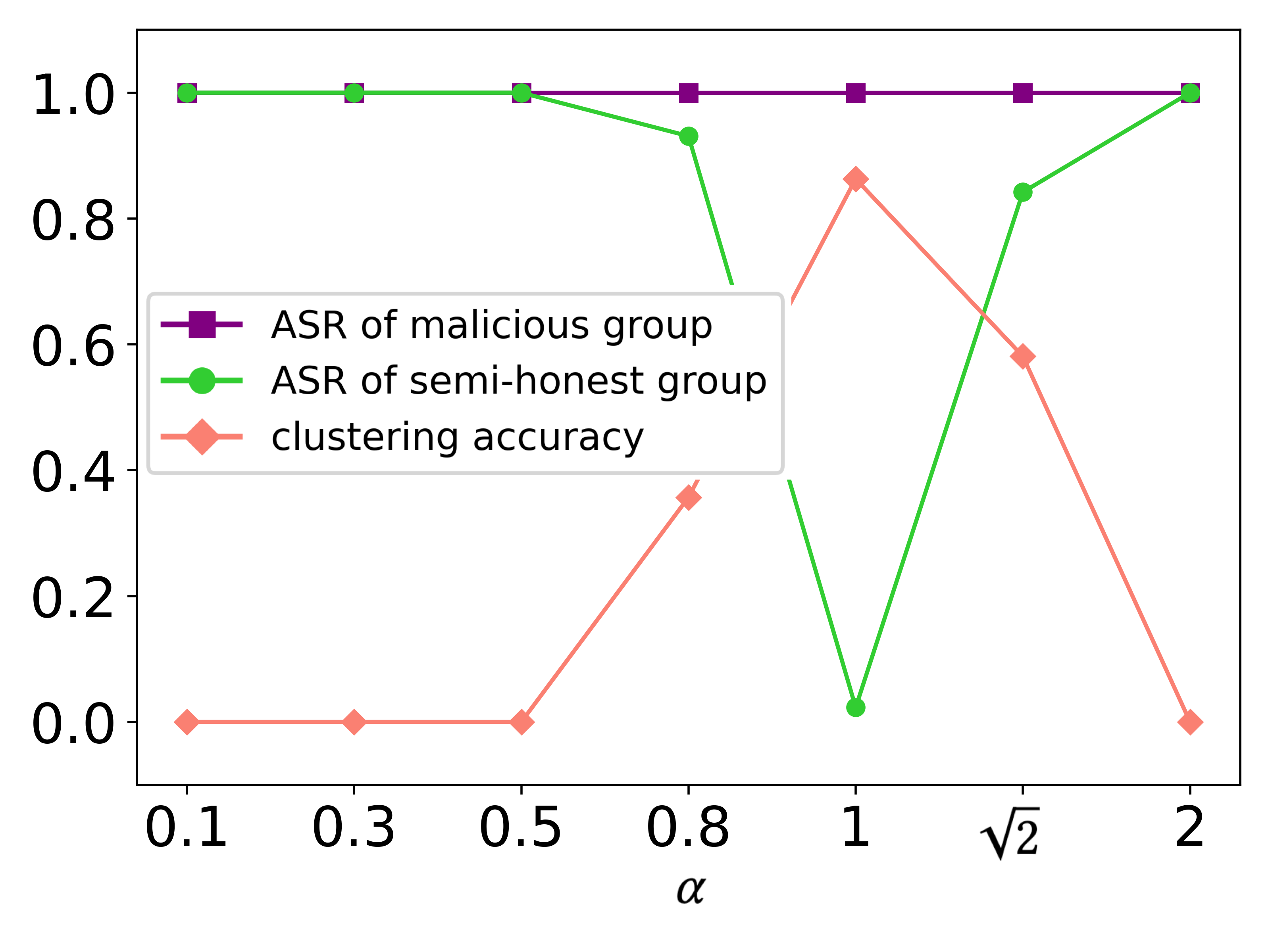}
        \caption{MNIST}
    \end{subfigure}
    \begin{subfigure}[b]{0.235\textwidth}
        \centering
        \includegraphics[width=1.04\textwidth]{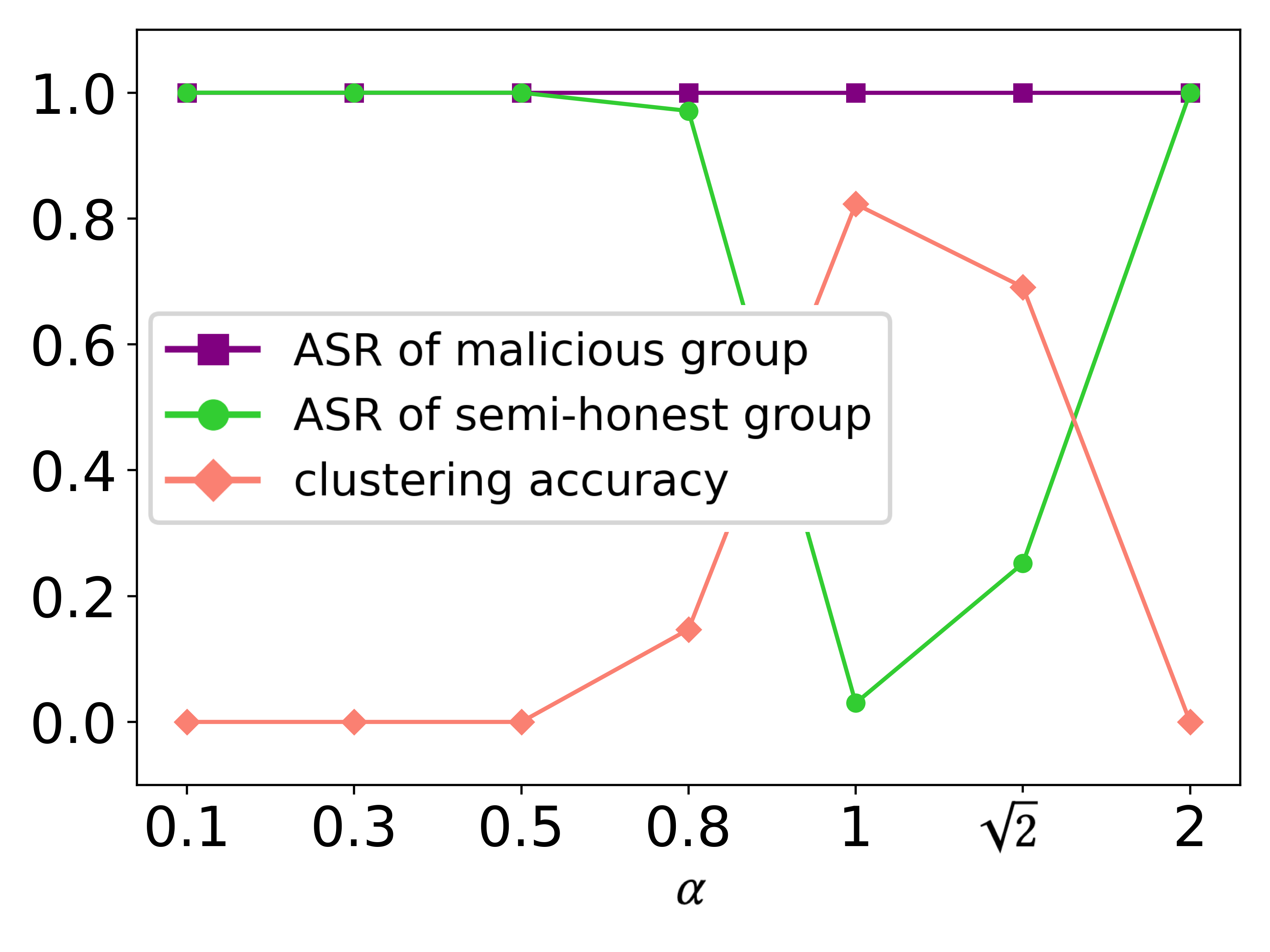}
        \caption{CIFAR-10}
    \end{subfigure}    
  
    \caption{Impact of $\alpha$ on the robustness of \texttt{MUDGUARD}, where EA and default settings are used.}
    \label{fig:eaalpha}
\end{figure}

\noindent\textbf{Impact of $\lambda$ on Adaptive Attack.}
Figure~\ref{fig:aalambdamnist} shows how the ASR varies in semi-honest and malicious groups when we adapt BA to \texttt{MUDGUARD}, where (F)MNIST, CIFAR-10 and default settings in Table~\ref{tab:flsetting} are used. 
In MNIST (Figure~\ref{fig:aalambdamnist}a), the ASR of the semi-honest group remains at a low level (nearly 0\%) while that of the malicious group raises from 0.23 to 1 as $\lambda$ grows from 0.1 to 0.3. 
This is so because when the value of $\lambda$ is low, the malicious clients using AA more focus on evading filtering (i.e., inducing a drop in clustering accuracy). 
Even if malicious clients are grouped with semi-honest clients, they cannot produce practical attack effectiveness. 
When the value of $\lambda$ gradually increases, the malicious clients will focus more on attack performance. 
In this way, \texttt{MUDGUARD} will easily distinguish the malicious from the semi-honest. 
It thus can resist AA.  
Note the experimental results with FMNIST and CIFAR-10 (Figure~\ref{fig:aalambdamnist}b and c) share the same trend with MNIST (Figure~\ref{fig:aalambdamnist}a).
\begin{figure}[t]
    \centering
    \begin{subfigure}[b]{0.235\textwidth}
        \centering
        \includegraphics[width=1.04\textwidth]{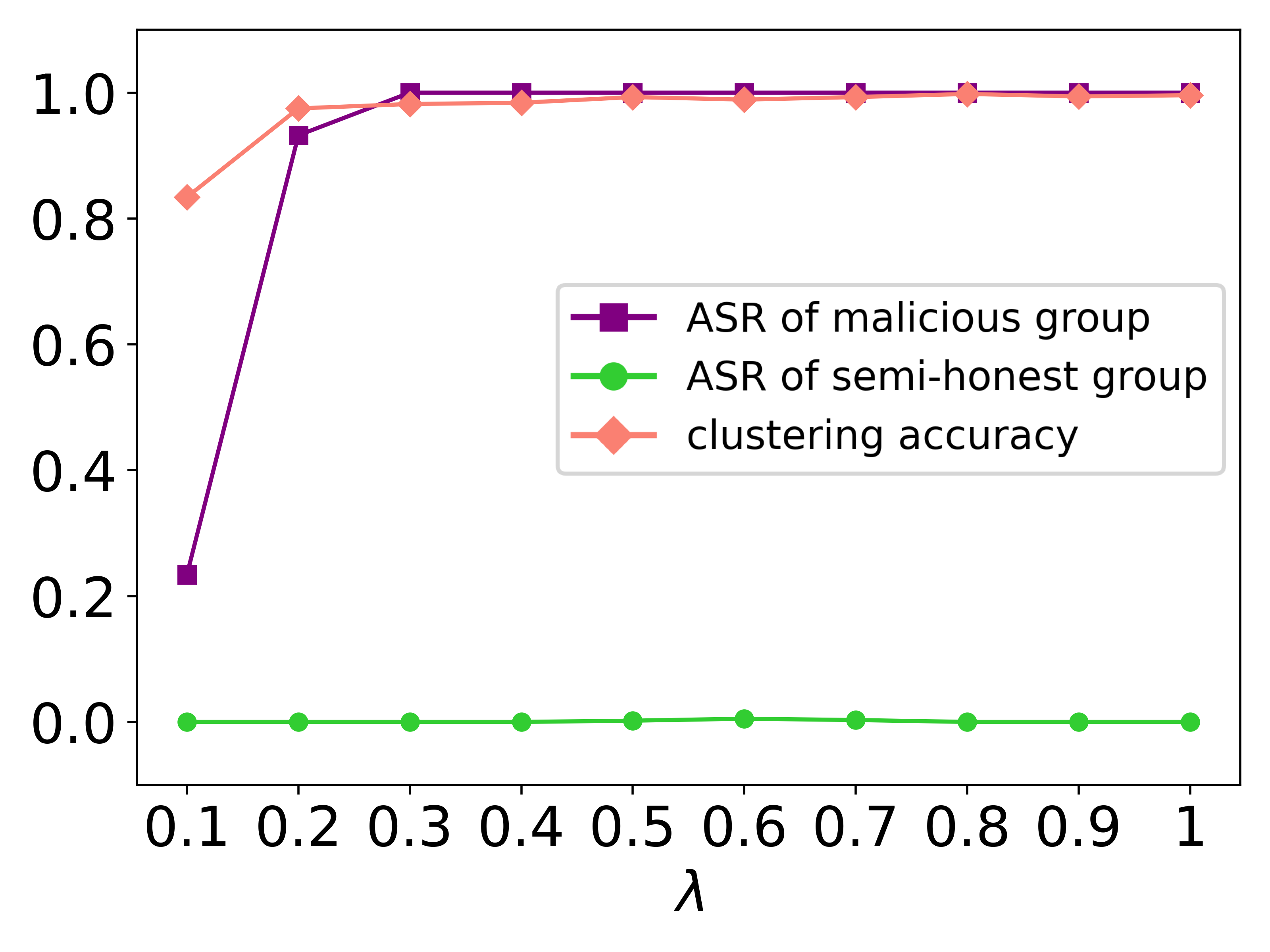}
        \caption{MNIST}
    \end{subfigure}
    \begin{subfigure}[b]{0.235\textwidth}
        \centering
        \includegraphics[width=1.04\textwidth]{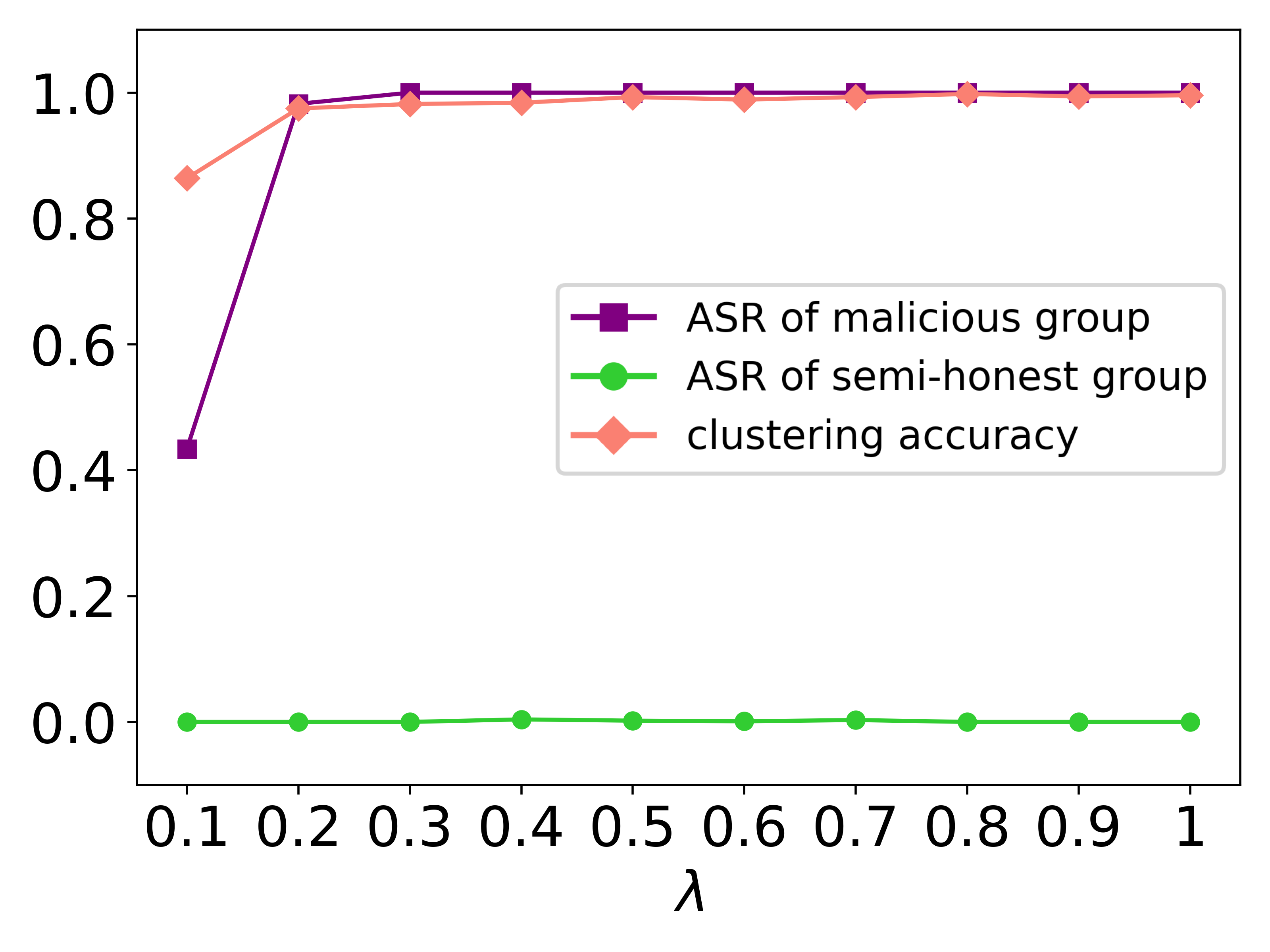}
        \caption{FMNIST}
    \end{subfigure}    
    \begin{subfigure}[b]{0.235\textwidth}
        \centering
        \includegraphics[width=1.04\textwidth]{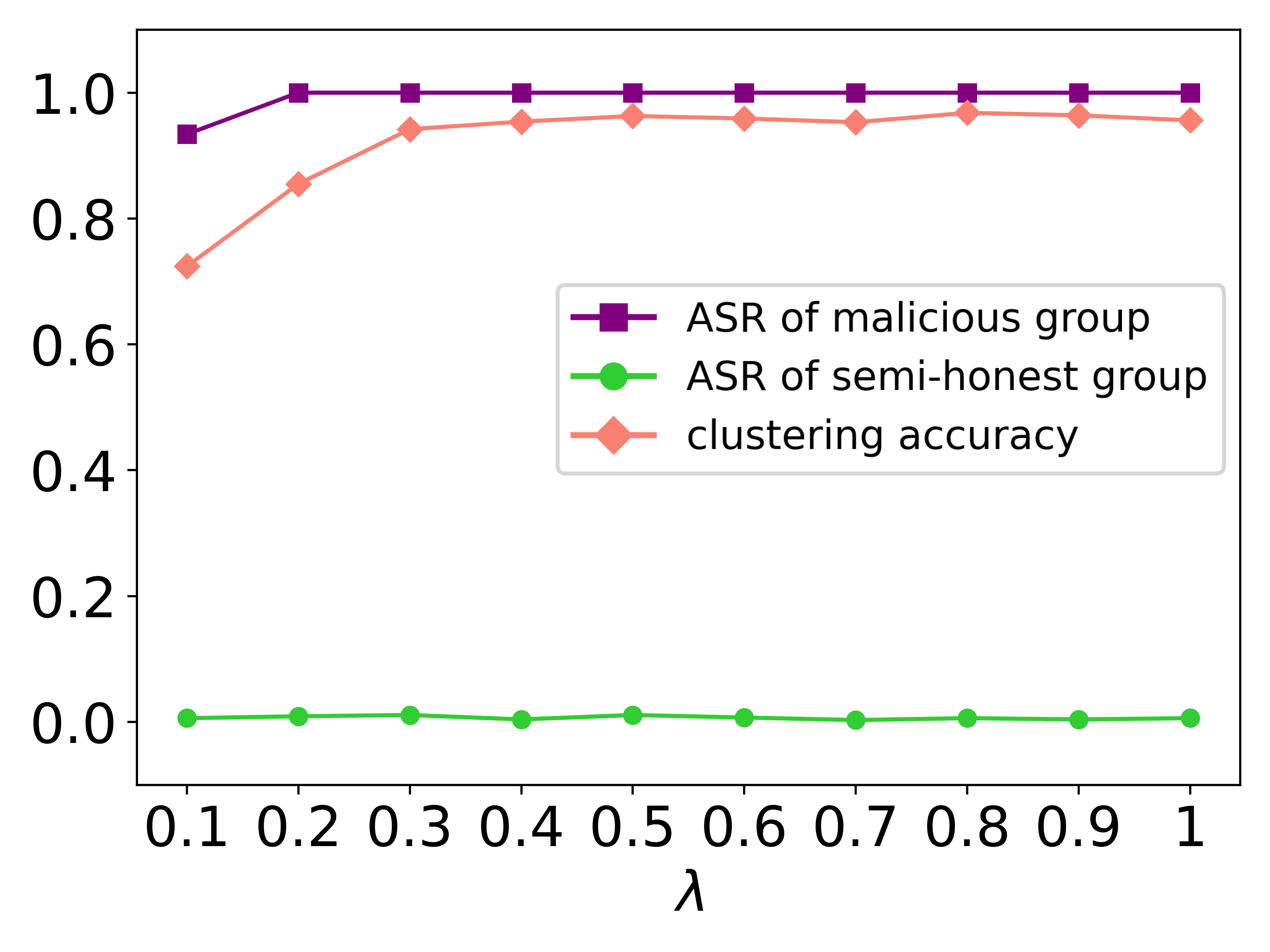}
        \caption{CIFAR-10}
    \end{subfigure}    
  
    \caption{Impact of $\lambda$ on Adaptive Attack, where backdoor attack and default settings are used. }
    \label{fig:aalambdamnist}
\end{figure}
\subsection{Discussions}
\label{sec:otherdis}

\subsubsection{Boosting Accuracy} Intuitively, CIFAR-10 is more difficult to train than (F)MNIST. SignSGD~\cite{bernstein2018signsgd} achieves 90\% accuracy with centralized learning. 
The reasons for the accuracy drop in FL are (1) the number of clients: when the number increases, the amount of data allocated to a client becomes smaller, and the local model is more prone to overfitting;  
(2) Data heterogeneity: a larger degree of non-iid leads to less information being available for the local model to learn, which harms accuracy;  
(3) Malicious clients: malicious clients taking a part of the training data can be filtered out by \texttt{MUDGUARD}. 
This is equivalent to discarding that part of the training data. 
Additional improvements in accuracy could be obtained by using different models, e.g., ViT~\cite{dosovitskiy2020image}, or by performing hyperparameters grid search for FL, etc. 

\subsubsection{Comparison of Overheads with \texttt{FLAME}} Intuitively, systems that use clustering, like \texttt{FLAME}~\cite{nguyen2022flame}, could experience efficiency problem if left without proper optimization.  
In our design, MPC impacts three main phases: computation of $\cosm$, $L_2$ distance, and element-wise comparison. 
The 1\emph{st} stage is the most computationally expensive. We, therefore, focus on its optimization.
After the optimization, the servers avoid using ``heavy" tools, like HE and Beaver’s multiplication, to calculate cosine so that communication and computational overheads are naturally reduced. Therefore, \texttt{MUDGUARD} has much lighter complexity than \texttt{FLAME}. 

Note that \texttt{FLAME} uses a 2-party semi-honest MPC scheme which is very similar to the semi-honest situation in our design (Table~\ref{tab:overheads}). 
Our optimized version of \texttt{MUDGUARD} is less computationally and communicationally complex than \texttt{FLAME}. 
In \texttt{MUDGUARD}, $\cosm$ is calculated by doing XOR locally on servers, and the matrix size reduces from the number of clients $\times$ the number of model updates ($n\times d$) to $n\times n$ after calculation.  
Then this $n\times n$ matrix is used to calculate the pairwise-$L_2$ distance (where only multiplication is involved). 
In \texttt{FLAME}, directly calculating cosine similarity requires multiplication, division, and the square root of the $n\times d$ matrix, which are relatively intensive, expensive operations in MPC. 
The fact that \texttt{FLAME}'s code is not publicly available prevents a code/implementation-based comparison.

We followed state-of-the-art Byzantine-robust FL~\cite{cao2020fltrust}'s default setting on the client number (number = 100).
Table~\ref{tab:overheads} shows overheads (the runtime and communication costs) on the server side are acceptable; in particular, only 0.4s and 16 MB are required with LeNet in the semi-honest case.

\subsubsection{Defending Against Other Attacks} In the experiments, we consider SOTA (un)targeted attacks.   
We say that interested readers may use other attacks to test \texttt{MUDGUARD}, in which Byzantine-robustness could not be seriously affected. 
We take the Distributed Backdoor Attack (DBA)~\cite{Xie2020DBA} as an example. 
DBA decomposes a global trigger into several pieces distributed to local clients. 
It, however, yields significant changes to some dimensions of updates to maintain the ASR of the backdoor task.
Since the cosine distance between malicious and benign updates are distinguishable, \texttt{MUDGUARD} can still work well under DBA.
Note another attack, Little Is Enough~\cite{baruch2019little}, have not been considered in this work 
because it requires attackers to have knowledge of the gradients of semi-honest clients, which violates privacy preservation. 

\subsubsection{Advantages of Adjusted Cosine Similarity}
\label{sec:advofcosm}
\begin{figure}[!ht]
    \centering
    \includegraphics[width=0.46\textwidth]{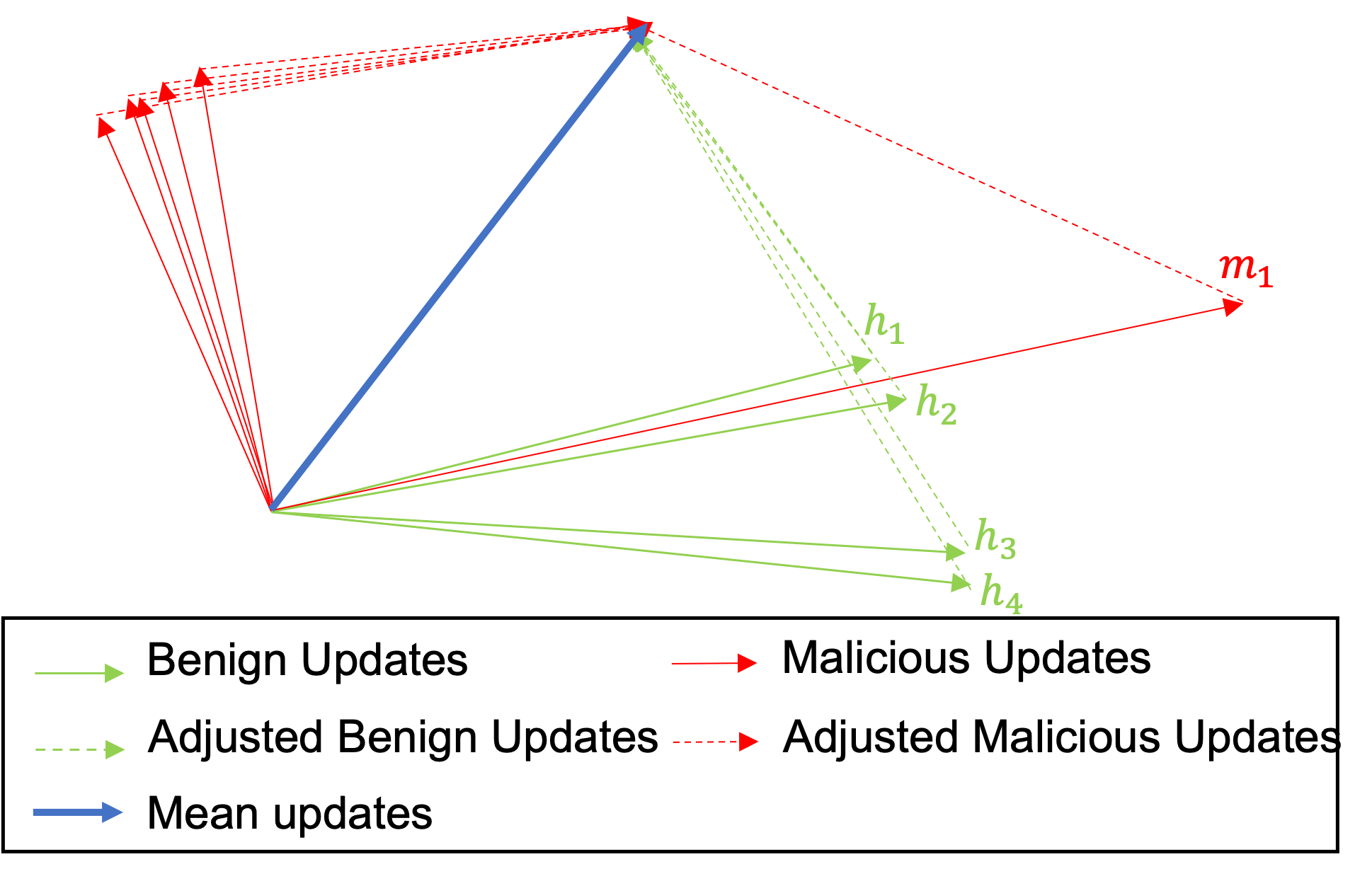}
    \caption{An example of calculation of adjusted cosine similarity}
    \label{fig:adjusted cosine}
\end{figure}

As shown in Figure~\ref{fig:adjusted cosine}, we present an example of the calculation of adjusted cosine similarity. It is clear to see that adjusted cosine similarity is able to capture the magnitudes and directions of updates by transferring updates to adjusted updates. Although $m_1$ does not have too many differences in directions (i.e., $m_1$ will be clustered with $h_1$ and $h_2$), its differences with $h_1$ and $h_2$ in magnitudes can be easily captured by $\cosm$. Furthermore, due to non-iid, the ($h_1$, $h_2$) and ($h_3$, $h_4$) will be clustered into two groups if cosine distance is applied. However, by subtracting mean updates, this influence can be reduced. 

We also show the experimental results in Figure~\ref{fig:pairwise distance}, where FMNIST is used under BA and the number of clients is set to 10 (the first six are benign clients, the rest are malicious clients). The rest of the default settings follow Table~\ref{tab:flsetting}. From Figure~\ref{fig:adjusted cosine}, we can see that if only cosine distance is calculated, honest updates will be classified as noise due to the influence of non-iid. The adjusted cosine similarity weakens this effect (Figure~\ref{fig:adjusted cosine}b). Calculating the $L_2$ distance again will make the distinction between the two groups more obvious (Figure~\ref{fig:adjusted cosine}c).
\begin{figure}[!ht]
    \centering
    \begin{subfigure}[b]{0.5\textwidth}
        \centering
        \includegraphics[width=0.9\textwidth]{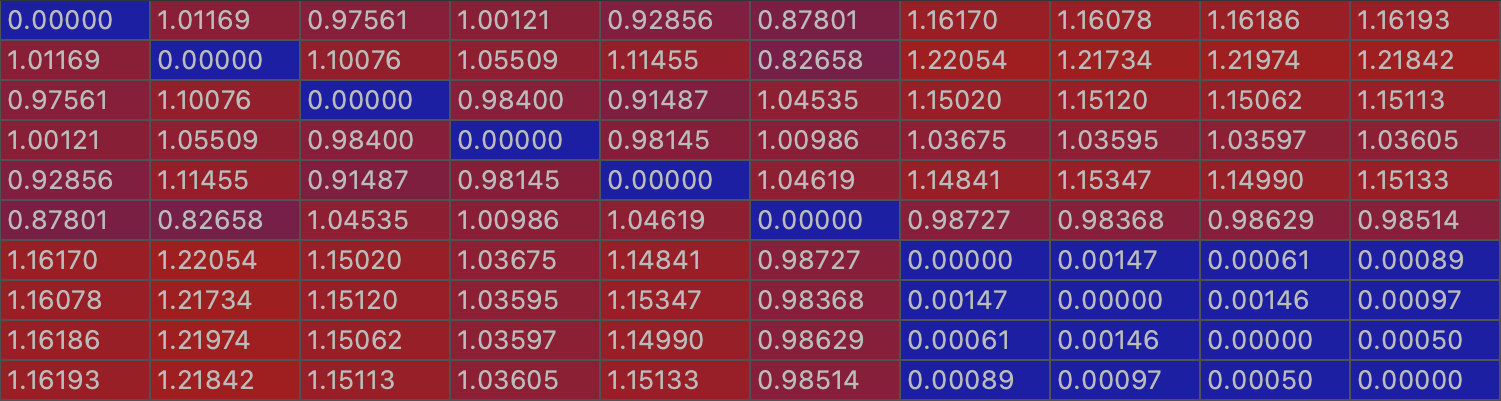}
        \caption{Pairwise cosine distance}
    \end{subfigure}
    \begin{subfigure}[b]{0.5\textwidth}
        \centering
        \includegraphics[width=0.9\textwidth]{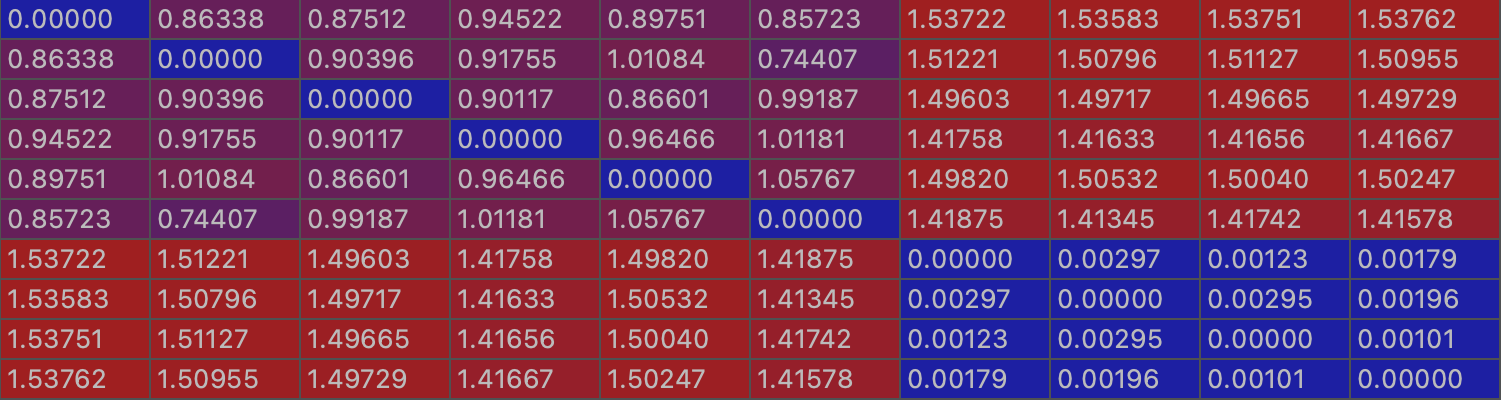}
        \caption{Pairwise adjusted cosine distance ($\cosm$)}
    \end{subfigure}    
    \begin{subfigure}[b]{0.5\textwidth}
        \centering
        \includegraphics[width=0.9\textwidth]{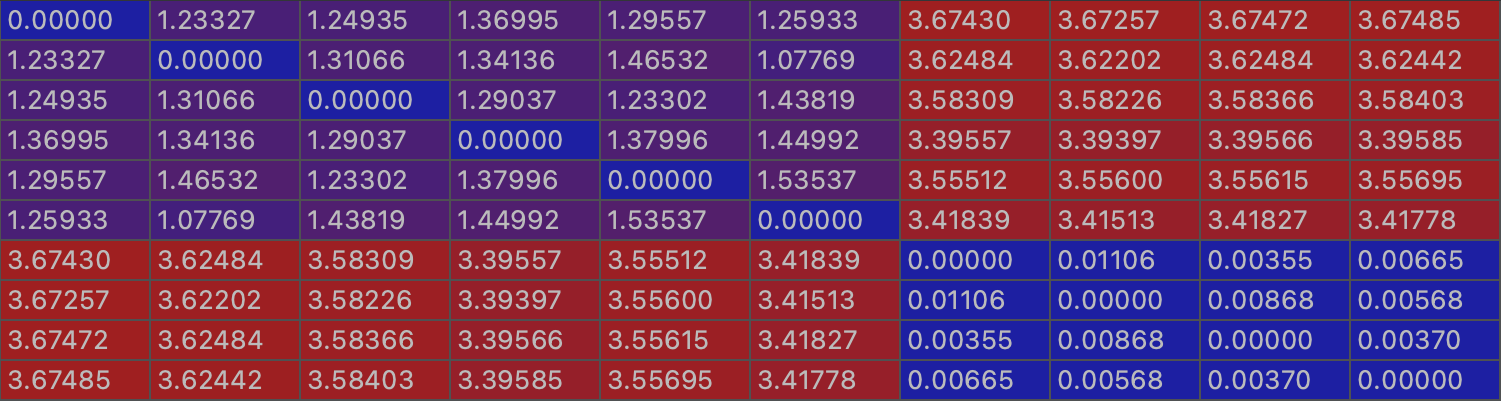}
        \caption{Pairwise $L_2$ distance for $\cosm$}
    \end{subfigure}    
  
    \caption{Calculation results of pairwise distance. }
    \label{fig:pairwise distance}
\end{figure}

Furthermore, we provide a comparison when \texttt{MUDGUARD} uses cosine similarity and adjusted cosine similarity for clustering in Figure~\ref{fig:cosvsadcos}. It is clear to see that the testing accuracy of \texttt{MUDGUARD} with cosine similarity abruptly goes down when the model approaches convergence. On the contrary, this does not happen in \texttt{MUDGUARD} with adjusted cosine similarity. The detailed explanation is given in Section~\ref{sec:byagg}.
\begin{figure}[t]
    \centering
    \includegraphics[width=0.36\textwidth]{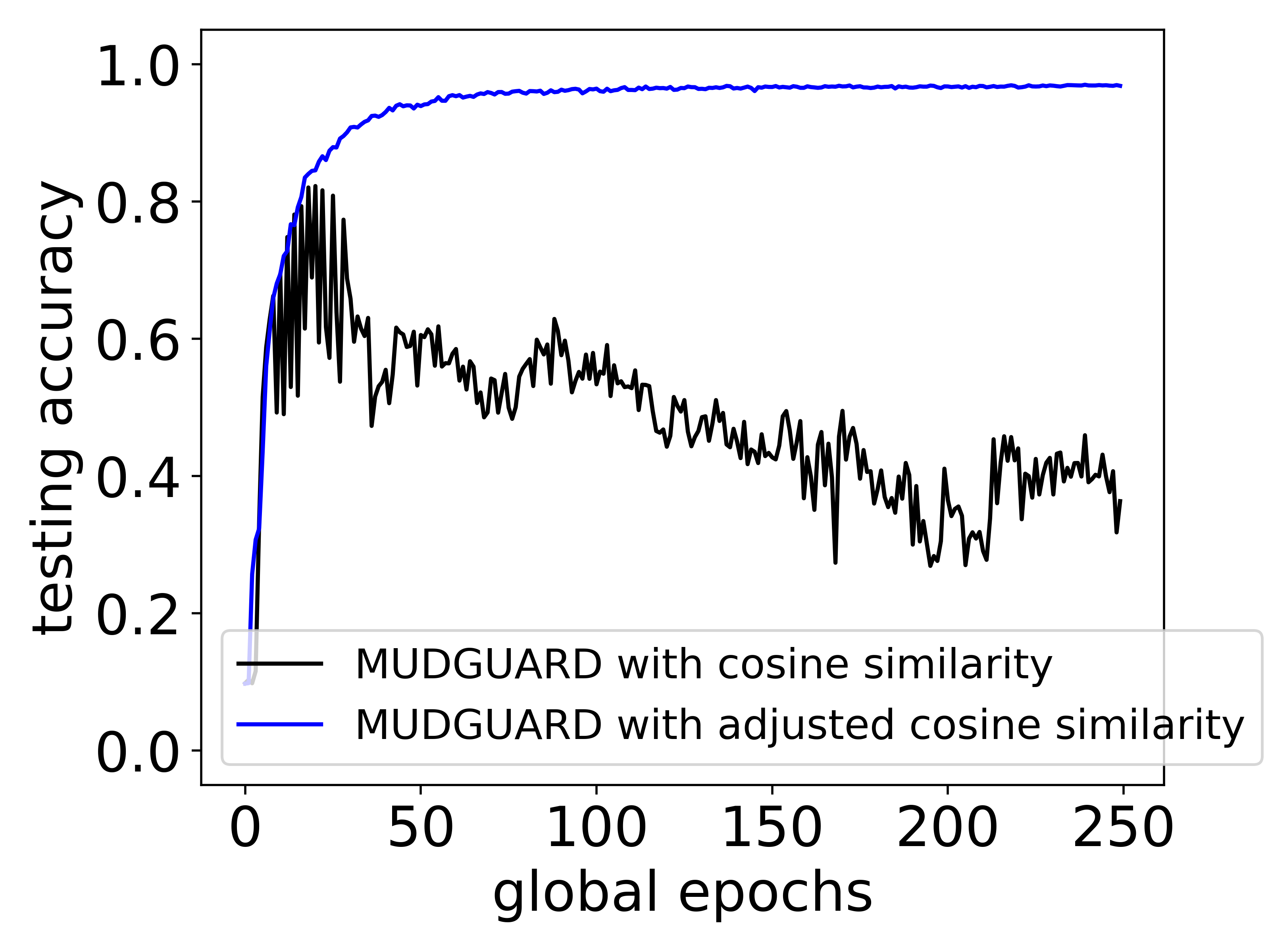}
    \caption{Comparison of MUDGUARD with cosine similarity and adjusted cosine similarity under GA.}
    \label{fig:cosvsadcos}
\end{figure}

\subsubsection{Dynamic Attacks}

Recall that \texttt{MUDGUARD} can perform well in terms of testing accuracy under the assumption that malicious clients consistently perform one type of attack (e.g., GA) throughout the whole training period.
It can also perform well if we allow malicious clients to perform different attacks on the epochs, e.g., GA to the first 10 epochs and then Krum attacks to the remaining epochs.
We notice that all the attacks (we consider in this work) except GA may require several epochs of training (as a buffer) to produce attack effects. 
But these buffer epochs can boost \texttt{MUDGUARD}'s clustering performance. 
This is so because the clustering ability is enhanced with the increase of training rounds. 
On the other hand, the TNR and TPR (of the clustering) could be relatively low in these epochs. 
Some malicious clients can be clustered into a semi-honest group,
but this will not seriously affect the accuracy performance of the model.

One may think malicious clients are allowed to perform all the attacks in a single epoch. 
But so far, it is unknown how to group those attacks together friendly and meanwhile maximize their attack effects. 
In practice, the attacks may deliver an update in different directions, and further, they may even yield influence on each other. 
For example, GA could easily destroy the convergence of BA.
We say that it is an interesting open problem to consider launching GA, LFA, Krum, Trim, and AA attacks together in an epoch to evaluate the accuracy and ARS. 

\subsubsection{Varying Clients Subsampling Rate}
We assert that \texttt{MUDGUARD} can perform well under different clients' subsampling rates. 
This benefits from the proposed \emph{Model Segmentation} that can resist malicious-majority clients. 
Imagine that in the context of an honest majority (e.g., 40 out of 100 are malicious), if the subsampling rate is set relatively low (e.g., 10\%), we eventually have a malicious majority case in one training epoch with a high probability. 
Our experimental results have demonstrated that \texttt{MUDGUARD} can still achieve Byzantine-robustness under a low subsampling rate. 

\subsubsection{Learning Rate and Local Epoch}

They are subtle parameters that decide the performance of FL training. 
A low learning rate can slow down the speed of convergence, and on the other hand, a high rate hinders the model's convergence, harming accuracy. 
As for the local epoch, in the case of iid, if carefully increasing the number of epochs, we can make the model converge fast. 
But under a large degree of non-iid (e.g., q=0.5), the increase of epoch leads to updates in different directions, making the \texttt{FedAvg} algorithm invalid~\cite{mcmahan2017communication}. 
In this work, we set these two parameters according to the recommendations given in \cite{mcmahan2017communication, bernstein2018signsgd}. 
Exploring their impacts on training is orthogonal to the main focus of this work.

\subsubsection{Privacy-preserving DBSCAN}
This is one of the core parts we used to build \texttt{MUDGUARD}.
It can apply to other real-world domains, e.g., anomaly detection and encrypted traffic analytics, where data should be clustered securely.
But we note that the current \texttt{MUDGUARD} with optimization may not scale well in these applications.  %
We did the optimization for the sake of efficiency by using SignSGD and binary secret sharing, which cannot support precise floating-point arithmetic.

\subsubsection{Test Datasets}
We notice that the EA is not applicable for FMINST since the research work~\cite{10.5555/3495724.3497072} did not provide a backdoor dataset.
We leave this as an open problem. 
Based on the performance of \texttt{MUDGUARD} on MNIST and CIFAR-10, we claim that even in FMNIST, the ASR of EA stays low, which is consistent with our main conclusion.

In the experiments, we conduct three image datasets (MNIST, FMNIST, and CIFAR-10).
We claim that \texttt{MUDGUARD} can be further used to capture the Byzantine-robust and privacy-preserving features of the models trained on text and speech datasets, where the text dataset for the next-word prediction task can require Recurrent Neural Networks (RNNs).
To train these datasets, we can also use weights or gradients to update the models (in the context of FL). 
If the models or datasets are poisoned (by malicious clients), malicious updates have differences in updates directions from benign ones.
Then we can use \texttt{MUDGUARD} to protect benign clients. 
We could use other types of datasets during training, but this will not affect our conclusions on Byzantine robustness and privacy preservation. 

\subsubsection{ Limitations} 
\emph{Using weights as updates.} To provide cost-effective secure computations, the proposed \texttt{MUDGUARD} only implements the update method by SignSGD. 
If we use weights as updates, the secret shares sent to the servers will be in floating-point format.
In this case, we will have to downgrade the design to the unoptimized \texttt{MUDGUARD} in Table~\ref{tab:overheads}, which could cause a considerable amount of both communication costs and runtime. 
An interesting future work could be to propose a more lightweight (than the current design) and secure MPC framework for \texttt{MUDGUARD}.

\emph{Performance under EA.} Although \texttt{MUDGUARD} does achieve good performance in terms of accuracy and (to some extent) efficiency, under the EA, \texttt{MUDGUARD}'s ASR cannot be eventually reduced to nearly 0\%.
In future work, we will improve the TNR and TPR of the clustering algorithm so as to recognize subtle differences between malicious and semi-honest clients.

\end{document}